\newtheorem{theorem}{Theorem}[section]
\newtheorem{definition}[theorem]{Definition}
\newtheorem{remark}[theorem]{Remark}
\newtheorem{lemma}[theorem]{Lemma}
\newtheorem{cor}[theorem]{Corollary}
\newtheorem{prop}[theorem]{Proposition}
\newtheorem{assumption}[theorem]{Assumption}
\newtheoremstyle{named}{}{}{\itshape}{}{\bfseries}{.}{.5em}{\thmnote{#3}}
\theoremstyle{named}
\newtheorem*{namedtheorem}{Theorem}
\newcommand{\mcP}{\mathcal{P}}
\newcommand{\mcC}{\mathcal{C}}
\newcommand{\mcA}{\mathcal{A}}
\newcommand{\mcB}{\mathcal{B}}
\newcommand{\mcO}{\mathcal{O}}
\newcommand{\mcQ}{\mathcal{Q}}
\newcommand{\mcD}{\mathcal{D}}
\newcommand{\mcN}{\mathcal{N}}
\DeclareMathOperator{\id}{\mathds{1}}
\DeclareMathOperator{\Tr}{Tr}
\DeclareMathOperator{\poly}{poly}
\DeclareMathOperator{\Prob}{Pr}
\DeclareMathOperator{\polylog}{polylog}
\tikzset{snake it/.style={decorate, decoration=snake}}
\newcommand{\ptp}{\mathsf{\Gamma}}
\newcommand{\w}{0.23cm}
\newcommand{\bx}[3]{
  \draw[fill = white] #3 (#1*\w-\w/2,-#2*\w-\w/2) rectangle (#1*\w+\w/2,-#2*\w+\w/2);
}
\newcommand{\yd}[2][0.4]{
  \begin{tikzpicture}[scale = #1, baseline={([yshift=-0.6ex]current bounding box.center)}]
    \foreach \li [count = \y] in {#2} {
      \foreach \x in {1,...,\li} {
        \bx{\x}{\y}{}
      }
    }
  \end{tikzpicture}
}
\title{The mixed Schur transform: efficient quantum circuit and applications}
\author{Quynh T. Nguyen\thanks{Harvard University. Email: qnguyen@g.harvard.edu}}
\date{}
\begin{document}

\maketitle

\begin{abstract}The Schur transform, which block-diagonalizes the tensor representation $U^{\otimes n}$ of the unitary group $\mathbf{U}_d$ on $n$ qudits, is an important primitive in quantum information and theoretical physics. We give a generalization of its quantum circuit implementation due to Bacon, Chuang, and Harrow~\cite{bacon2007quantum} to the case of mixed tensor $U^{\otimes n} \otimes \Bar{U}^{\otimes m}$, where $\Bar{U}$ is the dual representation. This representation is the symmetry of unitary-equivariant channels, which find various applications in quantum majority vote, multiport-based teleportation, asymmetric state cloning, black-box unitary transformations, etc.
The ``mixed'' Schur transform contains several natural extensions of the representation theory used in the Schur transform, in which the main ingredient is a duality between the mixed tensor representations and the walled Brauer algebra. Another element is an efficient implementation of a ``dual'' Clebsch-Gordan transform for $\Bar{U}$. The overall circuit has complexity $\widetilde{\mcO}((n+m)d^4)$\footnote{The $\widetilde{\mcO}(\cdot)$ notation hides polylog factors in $n$, $m$, $d$, and $1/\varepsilon$, where $\varepsilon$ is the target precision in diamond norm. 
 }. Finally, we show how the mixed Schur transform enables efficient implementation of unitary-equivariant channels in various settings and discuss other potential applications, including an extension of permutational quantum computing that includes partial transposes.

\end{abstract}

\section{Introduction}
Representation theory plays an important role in quantum information~\cite{hayashi2017group}. For example, at the heart of Shor's algorithm is the quantum Fourier transform, which is the decomposition of the regular representation of the cyclic group into irreducible representations (irreps). Hidden subgroup problems are generalizations of period finding problems to nonabelian groups~\cite{childs2010quantum} and thus nonabelian group quantum Fourier transforms are also of great interest \cite{moore2006generic}. However, irrep decompositions of representations other than regular ones may also be valuable. On the one hand, they may serve as subroutines in quantum information and communication protocols. On the other hand, they go beyond the standard paradigm of quantum Fourier transforms, and thus may provide new directions for algorithmic quantum speedups~\cite{bravyi2023quantum, zheng2022super}. The most important decomposition of this kind is perhaps the quantum Schur transform~\cite{harrow2005applications, bacon2007quantum}. From the perspective of the unitary group $\mathbf{U}_d$, it block-diagonalizes the tensor representation, $\phi^d_n(U) = U^{\otimes n}$, into irreps

\begin{equation}
    U^{\otimes n} \overset{\text{Schur}}{\longrightarrow} \bigoplus_{\lambda \text{ irreps}} \mathbf{q}^d_{\lambda}(U) \otimes \id_{m_\lambda}, \forall U \in \mathbf{U}_d,
    \label{eq:schur-weyl}
\end{equation}
where $m_\lambda$ is the multiplicity of the irrep $\mathbf{q}^d_{\lambda}$ (see definitions later). It has many applications in quantum information, such as state tomography~\cite{haah2017sample}, hypothesis testing~\cite{hayashi2002optimal}, decoherence-free subsystem encoding~\cite{kempe2001theory}, communication without shared reference frames~\cite{bartlett2003classical}, and quantum machine learning \cite{zheng2022super}. The quantum Schur transform and these applications take advantage of the celebrated Schur-Weyl duality, which describes a beautiful interplay between the unitary group and the symmetric group (see Section~\ref{sec:SW}).

In this paper, we are interested in a generalization of the tensor representation and Schur-Weyl duality. In particular, we consider the \textit{mixed tensor representation}
\begin{equation}
    \phi_{n,m}^d (U) = U^{\otimes n} \otimes \Bar{U}^{\otimes m}.
\end{equation}

On the mathematical side, our motivation is the existence of the \emph{mixed Schur-Weyl duality} and a large body of mathematics literature studying extensions of the tensor representation and the symmetric group into this mixed tensor representation and the so-called \emph{walled Brauer algebra}~\cite{benkart1994tensor}. From the perspective of $\mathbf{U}_d$, the mixed Schur transform performs the following decomposition
\begin{equation}
    U^{\otimes n} \otimes \Bar{U}^{\otimes m} \overset{\text{mixed}  \atop \text{Schur}}{\longrightarrow} \bigoplus_{\gamma \text{ irreps}} \mathbf{q}^d_\gamma (U) \otimes \id_{m_\gamma}, \forall U \in \mathbf{U}_d,
    \label{eq:mixedSchur}
\end{equation}
where $\gamma$ simultaneously label the \textit{rational} irreps of $\mathbf{U}_d$ and of the walled Brauer algebra (definitions given later). In condensed matter physics, mixed tensor representation appears in the study of integrable antiferromagnetic $gl(M|N)$ superspin chain of alternating fundamental and dual representations~\cite{candu2011continuum}. Similar kinds of symmetries also appear in various areas of high energy physics~\cite{kimura2010free}, including gauge-gravity duality~\cite{kimura2007branes, kimura2008enhanced}. Generalizing the quantum Schur transform into this case is thus of great mathematical interest.

On the quantum information side, the mixed tensor representation appears most predominantly in unitary-equivariant channels\footnote{The term ``covariance'' is also used. To our knowledge, ``equivariance'' is preferred in cases where the input and output representations are ``similar,'' such as $U^{\otimes n}$ and $U^{\otimes m}$ in this work.} An $m$ to $n$ qudit channel is said to be unitary-equivariant if it satisfies the following property
\begin{equation}
    \mathcal{N}(U^{\otimes m} \rho U^{\dagger \otimes m}) = U^{\otimes n}  \mathcal{N}(\rho )U^{\dagger \otimes n}, \qquad \forall U \in \mathbf{U}_d.
\end{equation}
Unitary-equivariant channels~\cite{grinko2022linear} have a wide variety of applications in quantum information: quantum majority vote~\cite{buhrman2022quantum}, asymmetric cloning~\cite{nechita2021geometrical}, port-based teleportation~\cite{ishizaka2008asymptotic, kopszak2021multiport}, covariant quantum error correction~\cite{kong2022near}, equivariant variational algorithms~\cite{nguyen2022theory}, entanglement detection~\cite{huber2022dimension}, black-box transformations~\cite{yoshida2022reversing}, etc.
In this case, it can be shown that the Choi-Jamiolkowski state of the channel, $J^{\mathcal{N}} \triangleq  \frac{1}{d^m}\sum_{i,j=1}^{d^m} \ket{i}\bra{j} \otimes \mcN (\ket{i}\bra{j})$, commutes with the mixed tensor representation~\cite{wolf2012quantum}
\begin{equation}
    [J^\mathcal{N}, \Bar{U}^{\otimes m} \otimes U^{\otimes n}]=0.
\end{equation}
Despite their wide applications, efficient implementations of these channels, to our knowledge, have been rather underexplored.
Our main motivation is that the mixed Schur transform may enable efficient optimization and implementation of these channels. Beyond applications in unitary-equivariant channels, the mixed Schur transform may provide new possibilities for generalizing existing quantum state or channel learning and property testing protocols and developing new quantum algorithms.

\subsection{Results}
We construct an efficient quantum circuit for the mixed Schur transform that effects the block-diagonalization in Eq.~\eqref{eq:mixedSchur}.

\begin{namedtheorem}[Main result] \emph{(Corollary~\ref{cor:main})} Consider a system of $n+m$ qudits and fix a standard computational basis $\ket{i}$, $i \in [d]$ for each qudit. Then the block-diagonalization in Eq.~\eqref{eq:mixedSchur} can be effected with an isometry with gate complexity $\mcO((n+m)d^4 \polylog (n,m,d,1/\varepsilon))$, where $\varepsilon$ is the desired precision $\varepsilon$ in diamond norm. After the mixed Schur transform, the basis is relabelled as $\ket{\gamma}\ket{q_\gamma}\ket{p_\gamma}$, where the registers encodes the irrep label, irrep space basis element, and multiplicity space basis element, respectively, according to a canonical choice of basis (see Section~\ref{sec:mixedSW} for detailed definitions), such that $U^{\otimes n} \otimes \Bar{U}^{\otimes m}$ is block-diagonal in this basis.
\end{namedtheorem}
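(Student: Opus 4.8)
The plan is to build the mixed Schur transform circuit by reducing it, in the spirit of Bacon--Chuang--Harrow, to an iterated sequence of Clebsch--Gordan (CG) type steps, one per tensor factor. The key structural fact I would invoke is the mixed Schur--Weyl duality: the commutant of $U^{\otimes n}\otimes\bar U^{\otimes m}$ is (the image of) the walled Brauer algebra $B_{n,m}(d)$, and the common irrep labels $\gamma$ are the \emph{staircase}/rational Young diagrams that arise in the branching. Concretely, adding one more factor of $U$ decomposes $\mathbf q^d_\gamma\otimes \mathbb C^d$ into a direct sum of $\mathbf q^d_{\gamma'}$ over $\gamma'$ obtained from $\gamma$ by a legal box move; adding one more factor of $\bar U$ decomposes $\mathbf q^d_\gamma\otimes\overline{\mathbb C^d}$ similarly. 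So the first step is to assemble from the start a ``staircase'' register recording the partial irrep labels $\gamma^{(1)},\gamma^{(2)},\dots$ after each factor is incorporated, exactly as the Gelfand--Tsetlin/Schur basis does in the pure case; the branching multiplicity-freeness (which holds here just as for $\mathbf U_d$-irreps restricted along the chain, and for the walled Brauer tower) is what makes this register well-defined and gives the canonical $\ket\gamma\ket{q_\gamma}\ket{p_\gamma}$ labelling in the statement.

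The second step is the circuit realization of a single CG step. For the $n$ factors of $U$ this is precisely the CG transform $C_d$ used in \cite{bacon2007quantum}, which maps $\mathbf q^d_\gamma\otimes\mathbb C^d \to \bigoplus_{\gamma'}\mathbf q^d_{\gamma'}$; it acts nontrivially only on the ``boundary'' of the Young-diagram register and a single qudit, so it can be implemented with cost $\widetilde{\mcO}(d^3)$ per step (or $\widetilde{\mcO}(d^4)$ including the reduced Wigner / recursion overhead), and there are $n$ such steps. The genuinely new ingredient is the \emph{dual} CG transform for $\bar U$: a circuit $\bar C_d$ implementing $\mathbf q^d_\gamma\otimes\overline{\mathbb C^d}\to\bigoplus_{\gamma'}\mathbf q^d_{\gamma'}$, where now a box may be \emph{added to a new row} (as in the pure case) \emph{or removed} (the feature unique to the mixed setting, reflecting contractions in the walled Brauer algebra). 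I would obtain $\bar C_d$ either by conjugating $C_d$ by the antiunitary/conjugation map on the single-qudit register together with the appropriate relabelling of Young-diagram boxes, or by directly writing down the dual reduced Wigner coefficients and invoking the same efficient-state-preparation machinery (e.g.\ via Grover--Rudolph / amplitude-loading of efficiently computable amplitude lists) that \cite{bacon2007quantum} uses; either way the per-step cost is $\widetilde{\mcO}(d^4)$, with $m$ such steps.

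The third step is to chain these $n+m$ CG steps together and bound the total cost and error: each step is applied on the running irrep register plus one fresh qudit, the staircase register is the concatenation of the intermediate labels, and the output registers $(\gamma,q_\gamma,p_\gamma)$ are read off from the final label together with a fixed bijection between staircase paths and $(q_\gamma,p_\gamma)$ pairs (irrep-space path on the $\mathbf U_d$ side, multiplicity-space path given by the walled-Brauer/standard-tableau combinatorics). Summing $n+m$ steps of cost $\widetilde{\mcO}(d^4)$ gives the claimed $\mcO((n+m)d^4\polylog(n,m,d,1/\varepsilon))$; the $\polylog(1/\varepsilon)$ comes from synthesizing each reduced-Wigner rotation to precision $\varepsilon/(n+m)$ and converting operator-norm error to diamond-norm error on the isometry by the standard (sub)additivity bound.

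The main obstacle I anticipate is the dual CG step: one has to (i) nail down the correct branching rule for $\mathbf q^d_\gamma\otimes\overline{\mathbb C^d}$ in terms of rational/staircase diagrams, including the ``box removal'' branch and the boundary cases where a row of the diagram would become negative or the number of rows would exceed $d$, and (ii) give explicitly, and argue efficient computability of, the dual reduced Wigner coefficients so that the amplitude-loading subroutine applies verbatim. A secondary subtlety is checking that the intermediate-label (staircase) register is genuinely multiplicity-free along the mixed chain $B_{0,0}\subset B_{1,0}\subset\cdots\subset B_{n,0}\subset B_{n,1}\subset\cdots\subset B_{n,m}$ — i.e.\ that the canonical basis of the statement is well-defined — and, relatedly, that one recovers the \emph{reduced} (multiplicity-free) form rather than carrying spurious multiplicity labels; this is where I would lean on the representation-theoretic facts developed in the referenced Section~\ref{sec:mixedSW}.
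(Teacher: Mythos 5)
Your plan is essentially the paper's construction: cascade one CG step per $U$ factor and one dual-CG step per $\bar U$ factor, use the multiplicity-free branching of the tower $\mathcal{B}^d_{0,0}\hookrightarrow\cdots\hookrightarrow\mathcal{B}^d_{n,0}\hookrightarrow\cdots\hookrightarrow\mathcal{B}^d_{n,m}$ (mixed Schur--Weyl duality) so that the recorded sequence of intermediate staircases is exactly the Bratteli-path basis $\ket{p_\gamma}$ while the GT data gives $\ket{q_\gamma}$, and realize the dual CG step by a Wigner--Eckart recursion with explicitly computable dual reduced Wigner coefficients, yielding $n\,\widetilde{\mcO}(d^3)+m\,\widetilde{\mcO}(d^4)$ gates with each reduced-Wigner rotation synthesized to precision $\varepsilon/(n+m)$. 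One caveat: of your two proposed routes to the dual CG transform, only the second (explicit dual reduced Wigner coefficients, which the paper takes from Vilenkin--Klimyk and implements via the recursive $\mathbf{U}_d\to\mathbf{U}_{d-1}$ reduction) is viable in general, since the conjugation/relabelling trick identifying $\bar U$ with $U$ up to a determinant factor and a local unitary holds only for $d=2$; for $d>2$ the dual defining irrep is not equivalent to the defining irrep, as the paper explicitly notes in Section~\ref{sec:dualCG}.
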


The construction of the mixed Schur transform is based on two main ingredients:

First, we develop a new scheme of labeling the (mixed) Schur basis based on the branching rules of the $\mathbf{U}_d$ \emph{rational} irreps and the walled Brauer algebra. Specifically, rational irreps show up because the matrix entries $\Bar{U}=(U^{-1})^\top$  are rational functions in the entries of $U$. This is in contrast to~\cite{bacon2007quantum}, where the authors only concerned with polynomial irrep that can be labeled by Young diagrams. Here, we label the rational irrep registers $\ket{\gamma}$ and $\ket{q_\gamma}$ by combinatorial objects called staircases (Section~\ref{sec:staircase})~\cite{stembridge1987rational} and Gelfand patterns (Section~\ref{sec:GTbasis}) (the latter is equivalent to rational tableaux~\cite{stembridge1987rational,kwon2008rational}), whose combinatorics generalizes and is somewhat more involved than that of polynomial irreps.  On the other hand, the labeling of the multiplicity space $\ket{p_\gamma}$ is obtained by invoking the mixed Schur-Weyl duality~\cite{koike1989decomposition, benkart1994tensor}, which relates the centralizer algebra of $U^{\otimes n} \otimes \Bar{U}^{\otimes m}$ to a representation of the walled Brauer algebra, a generalization of the symmetric group algebra. Remarkably, the branching rules in both of the Schur and mixed Schur transforms are multiplicity-free. This allows us to construct a ``subalgebra-adapted'' basis for $\ket{p_\gamma}$ (Section~\ref{sec:bratteli}).

\begin{table}[h!]
\centering
\begin{tabular}{| c| c| c |} 
  \hline
  & \textbf{Schur}~\cite{bacon2007quantum} & \textbf{Mixed Schur} \\ 
  \hline
  Representation & $U^{\otimes n}$ & $U^{\otimes n} \otimes \Bar{U}^{\otimes m} $  \\
  \hline
  Irrep label & Young diagrams: $\yd[1]{4,3,1}$  & Staircases: $\left[\yd[1]{3,1} , \yd[1]{2} \right]$ \\
  \hline
  Irrep space label & Gelfant-Tsetlin $\equiv$ SSYT & GT with negative entries  \\
  \hline
  Multiplicity space label & Young-Yamanouchi $\equiv$ SYT & Up-down staircase tableaux~\cite{stembridge1987rational} \\
  \hline
  Centralizer & \begin{tabular}{c}
      Schur-Weyl duality:  \\
        Symmetric group algebra
  \end{tabular} & \begin{tabular}{c}
      Mixed Schur-Weyl duality: \\
       Walled Brauer algebra 
  \end{tabular}   \\
  \hline
  Branching rules & Multiplicity-free &  Multiplicity-free\\
  \hline
  Primitives & Clebsch-Gordan transform & CG and dual CG transforms \\
  \hline 
  Circuit complexity &  $\widetilde{\mcO}(nd^3)$ & $\widetilde{\mcO}((n+m)d^4)$ \\
  \hline
\end{tabular}
\caption{Comparison between the Schur and mixed Schur transforms.}
\end{table}

Second, we construct a ``dual'' CG transform for input $\Bar{U}$ (Section~\ref{sec:recursive}), which performs the irrep decomposition of the tensor product of an arbitrary \emph{rational} irrep and the dual irrep $\bar{U}$. This generalizes a previous construction in~\cite{buhrman2022quantum} who solved the task for $d=2$. Similar to~\cite{bacon2007quantum}, we use the theory of $\mathbf{U}_d$ tensor operators~\cite[Chapter 18]{vilenkin1992representations} to develop a recursive circuit of the dual CG transform. We obtain a circuit complexity of $\widetilde{\mcO}(d^4)$. In fact, we further show that extensions to more general CG transforms of so-called elementary irreps can be achieved as well.

\paragraph{Applications} We show that the mixed Schur transform enables efficient implementation of unitary-equivariant channels in various settings, partially answering an open question in~\cite{grinko2022linear} who asked how to implement the channel knowing the Choi state in the (mixed) Schur basis.
We give three different implementations in Section~\ref{sec:implementation}
\begin{itemize}
    \item A teleportation-based method: It assumes the Choi state in the Schur basis can be prepared as a quantum state, then performs a modified teleportation protocol that exploits the unitary equivariance property of the channel. The channel is implemented exactly, \textit{deterministically} when $m=1$ (number of input qudits) and probabilistically with constant success probability when $m=O(1)$. This is the case for applications in state cloning~\cite{fan2014quantum, nechita2021geometrical}, covariant error correction~\cite{kong2022near}, equivariant variational algorithms~\cite{nguyen2022theory} (cf. Section~\ref{sec:example}).  The main advantage of this method is that it outputs the exact full output state.
    \item A shadow tomography-based method: It assumes the Choi state in the Schur basis can be prepared as a quantum state and allows estimating Pauli-sparse observable expectation values on the output. Examples of such observables include any Pauli operator or any local observables. It is based on the protocols in Ref.~\cite{caro2022learning, huang2022learning} and gives either average-case performance guarantee, or worst-case guarantee under the assumption the input is also Pauli-sparse. The main advantage of this method is it requires very few quantum processing resources (only Bell measurements or single-qubit measurements are needed). When $n$ (number of output qudits) is small, the full output state can be efficiently reconstructed, but this is not possible in general.
    \item A block-encoding-based method: It assumes the Schur-basis entries of the Choi state can be computed efficiently and allows estimating the inner product of the output with another state or the expectation value of low-rank observables. The main advantage is that it covers observables that are not Pauli-sparse and that composing multiple channels is straightforward. Its main disadvantage is assumptions in the spectral properties of the Choi state which necessarily require the channel to have high Kraus rank.
\end{itemize}
In concurrent works, Ref.~\cite{grinko2023gelfand} have used the mixed Schur transform to efficiently implement the port-based teleportation protocol~\cite{ishizaka2008asymptotic}. Ref.~\cite{fei2023efficient} also achieved the same result using what they call the twisted Schur transform. We thank the authors for informing us about these results.

In Section~\ref{sec:other-applications}, we define an extension of the complexity class permutational quantum computing (PQP) introduced by Jordan~\cite{jordan2010permutational} by combining with Hamiltonian simulation ideas from~\cite{zheng2022super}.
We call it ``partially transposed'' PQP and conjecture it is not classically simulable.
In particular, the model is defined by starting from a computational basis state, performing Hamiltonian evolution of $H$ which is a linear combination of walled Brauer diagrams, and then sampling in the mixed Schur basis. This allows estimating the matrix elements $|\bra{\gamma,q_\gamma,p_\gamma}_\mathrm{Sch} e^{-iHt} \ket{\mu,q_\mu,p_\mu}_\mathrm{Sch}|$ to additive error, a task generalizing strong Schur sampling~\cite{havlivcek2018quantum} to the mixed Schur case.
Physically, this model includes Hamiltonian evolutions of antiferromagnetic $gl(M|N)$ superspin chains such as those studied in~\cite{candu2011continuum}. We also discuss other potential applications of the mixed Schur transform in learning and symmetry testing, quantum SDP solvers with symmetry constraints, as well as potential applications to quantum complexity and algorithms of representation-theoretic problems regarding the walled Brauer algebra~\cite{ikenmeyer2023remark, bravyi2023quantum}.

\subsection{Discussion}\label{sec:discussion}
Our work provides a new efficient transform into the quantum information toolbox and opens up possibilities for many interesting follow-up studies. We believe the mixed Schur transform can find more applications in quantum information theory and related fields. 

One interesting question is whether the polynomial dependence on $d$ can be improved to $\polylog(d)$.
In the Schur transform case, Ref.~\cite{krovi2019efficient} achieves this by a construction based on the $\mathfrak{S}_n$ quantum Fourier transform and so-called permutation modules. Does a similar construction exist by starting from the walled Brauer algebra? It is an open question how to define or implement a Fourier transform for the walled Brauer algebra and, if possible, how it would be related to the mixed Schur transform. However, we note that such a transform would also have implications to the quantum complexity of certain representation-theoretic problems discussed in Section~\ref{sec:other-applications}.

As a remark, the mixed Schur-Weyl duality (cf. Section~\ref{sec:mixedSW}) implies that the mixed Schur transform is also an irrep decomposition of a representation of the walled Brauer algebra. Therefore, to our knowledge, our work gives the first quantum transform of an algebra representation that is not simply a group algebra (those generated by group representations). Group-related transforms such as group quantum Fourier transforms \cite{beals1997quantum, moore2006generic} have received much attention since the beginning of quantum computation. What are the roles of algebra-related transforms (those are not just group algebras)? Would they be useful in some practical algorithmic tasks?

\paragraph{Paper organization}
In Section~\ref{sec:reptheory}, we provide the necessary representation theory background for the mixed Schur transform, including rational representations of the unitary group, the walled Brauer algebra, and the mixed Schur-Weyl duality, and develop an irrep labeling system based on them. Then, we construct a quantum circuit for the mixed Schur transform in Section~\ref{sec:transform} assuming efficient implementations of the Clebsch-Gordan and dual Clebsch-Gordan transforms. We give an efficient construction of the dual Clebsch-Gordan transform in Section~\ref{sec:recursive}. Finally, we discuss applications of the mixed Schur transform in Section~\ref{sec:applications}.

\paragraph{Acknowledgements}
We would like to thank Dmitry Grinko for valuable discussions and his pointer to Ref.~\cite{vilenkin1992representations}, thank the authors of Ref.~\cite{grinko2023gelfand} and Ref.~\cite{fei2023efficient} for coordinating to concurrently upload our papers, and thank Aram Harrow, Anurag Anshu, and Jonas Haferkamp for valuable feedback on drafts of this work. The author acknowledges support from the NSF Award No. 2238836 via Anurag Anshu and support from the Harvard Quantum Initiative.

\section{Preliminaries}\label{sec:reptheory}
In this section, we describe the generalization of the representation theory previously used in the Schur transform. The central tool here is the \textit{mixed} Schur-Weyl duality, which asserts a duality between the mixed tensor representation of $\mathbf{U}_d$ and the partially tranposed permutation representation of the walled Brauer algebra. We will then use the combinatorial tools first developed by~\cite{stembridge1987rational} to derive a basis labeling scheme for quantum implementation of the Schur transform.

\subsection{Representations of $\mathbf{U}_d$ and $\mathfrak{S}_n$}\label{sec:SW}
The representation theory of $\mathbf{U}_d$ and $\mathfrak{S}_n$ is ubiquitously studied in quantum information literature (see \cite[Sections 5-7]{harrow2005applications} for an overview), here we briefly recall the basics of representation theory used in the Schur transform~\cite{bacon2007quantum}.

Let $\mathrm{GL} (V)$ denote the group of all invertible linear transformations from a vector space $V$ to itself. A representation of a group $G$ on a complex vector space $V$ is a homomorphism $R: G \mapsto \mathrm{GL}(V)$. When $G$ is a matrix group, we call $R$ a \textit{polynomial} (resp. \textit{rational}) representation if, after choosing a basis for $V$, the matrix entries of the representation $R(g)$ of a matrix $g \in G$ are fixed polynomials (resp. rational functions) in the entries of $g$. For example, the map
\begin{equation}
        \begin{pmatrix}
       a & b \\
       c & d
    \end{pmatrix} \longrightarrow \begin{pmatrix}
       a^2 & 2ab & b^2 \\
       ac & ad + bc & bd\\
       c^2 & 2cd & d^2
    \end{pmatrix}
\end{equation}
is a polynomial representation of $G = \mathbf{U}_d$ (and also of $\mathbf{SU}_d, \mathbf{GL}_d, \mathbf{SL}_d$~\cite[Section 18]{vilenkin1992representations}) of dimension three, while
\begin{equation}
    \begin{pmatrix}
       a & b \\
       c & d
    \end{pmatrix} \longrightarrow (ad -bc)^{-1}
\end{equation}
is a rational representation of dimension one. The latter example is called the inverse determinant irrep (denoted $\mathrm{det}^{-1})$. More generally, it can be shown that the $r$-power \textit{determinant irreps}, denoted by $\mathrm{det}^r$, enumerate all one-dimensional representations of $\mathbf{U}_d$ (of course, they coincide for the groups $\mathbf{SU}_d$ and $\mathbf{SL}_d$ whose elements have unit determinant).

 If $G$ is a compact group, then any representation can be decomposed into irreducible representations (irreps). In other words, under the action of $G$ via the representation $R$, we can view $V$ as a direct sum of $V= V_1 \oplus V_2 \oplus \hdots$, where each $V_i$ is nonzero and invariant under the $G$'s action, and no $V_i$ has a proper $R(G)$-invariant subspace. The multiset of irreps is unique up to equivalence (i.e. under invertible transformations). Thus, to determine a representation $R$ up to equivalence, it suffices to describe the multiplicity of each irrep appearing in the decomposition of $R$. In this paper, we consider $\mathbf{U}_d$, the group of $d$-dimensional unitary transforms. The mixed Schur transform we will construct works the same way for other related groups such as $\mathbf{SU}_d, \mathbf{GL}_d, \mathbf{SL}_d$ because they essentially share the same irrep structures.

Much of quantum information literature studies the symmetric group $\mathfrak{S}_n$ and the tensor representation $U^{\otimes n}$, which is a polynomial representation of $\mathbf{U}_d$. The polynomial irreps of $\mathbf{U}_d$ and of $\mathfrak{S}_n$ can be labeled by \textit{Young diagrams}. A Young diagram is a sequence of weakly decreasing nonnegative integers $\lambda = (\lambda_1, \lambda_2, \hdots )$. It is well-known that all the polynomial irreps of $ \mathbf{U}_d$  are in one-to-one correspondence (up to equivalence) with $\lambda$ of length at most $d$. We visualize a Young diagram by a left-justified diagram of boxes, where the first row has $\lambda_1$ boxes, and so on. If some $\lambda_i=0$ we oftentimes omit them in the notations.
For examples, the \textit{defining representation}, defined by $\mathbf{q}^d_{\lambda}(U)= U, \forall U \in \mathbf{U}_d$, is labeled by $\lambda =(1) \equiv \square$; the determinant irrep, $\mathrm{det}(U)$, is labeled by $(1^d)$ (a column of $d$ boxes); and the irrep $\lambda=(4,2,1)$ is
\begin{equation*}
  \yd[1.5]{4,2,1}.
\end{equation*}

Furthermore, it can be shown that the tensor representation $\phi_n^d(U)=U^{\otimes n}$ can be decomposed into polynomial irreps labeled by Young diagrams $\lambda$ of length at most $d$ that correspond to integer partitions of $n$, i.e., $\lambda_1 + \hdots+ \lambda_d = n$ and we denote $\lambda \vdash n$. That is,
\begin{equation}
    \phi_n^d(U) \cong \bigoplus_{\lambda  \vdash n \atop \operatorname{len}(\lambda) \leq d} \mathbf{q}^d_{\lambda}(U) \otimes \id_{m_\lambda}, \forall U \in \mathbf{U}_d,
\end{equation}
where $m_\lambda$ is the multiplicity of the polynomial irrep $\mathbf{q}^d_{\lambda}: \mathbf{U}_d \mapsto \mathrm{GL}(\mathbb{C}^{d_\lambda})$. The dimension $d_\lambda$ of $\mathbf{q}^d_{\lambda}$ is equal to the number of \textit{semistandard Young tableaux} (SSYT) of shape $\lambda$, which are obtained by filling numbers from $[d]$ into the Young diagram $\lambda$ such that, the entries strictly increase downwards along each column and weakly increase rightwards along each row. The multiplicity $m_\lambda$ is equal to the number of \textit{standard Young tableaux} (SYT) of shape $\lambda$, which are obtained by filling numbers from $[n]$ into the Young diagram such that, the entries strictly increase along every row and column. The character of $\mathbf{q}^d_\lambda(U)$ may be expressed as a so-called Schur polynomial in terms of $U$'s entries, which is the same as the generating function of an SSYT of shape $\lambda$~\cite{stembridge1987rational}.

The reason that representation theories of $\mathbf{U}_d$ and $\mathfrak{S}_n$ often go hand in hand together is due to the Schur-Weyl duality. We recall some more definitions before stating it. An \textit{algebra} is a vector space equipped with a bilinear product. For example, the \textit{group algebra} of a group $G$ is $\mathbb{C}G = \operatorname{span}_\mathbb{C} \{\ket{g}, g \in G\}$, and the bilinear product follows by extending by linearity the group action of $G$. Let $\mathrm{End}(V)$ denote the set of all complex matrices over a vector space $V$. A \textit{matrix algebra} $\mathcal{A}$ over $V$ is a subspace $\mathrm{End}(V)$ that is closed under matrix multiplication.

\begin{definition}[Centralizer algebra] The centralizer algebra of a matrix algebra $\mathcal{A}$ is the set of all matrices in $\mathrm{End}(V)$ that commute with $\mathcal{A}$, $\mathrm{End}_\mathcal{A}(V) = \{B: [A,B]=0, \forall A \in \mathcal{A} \}$.
\end{definition}

The Schur-Weyl duality asserts that the centralizer algebra of the tensor representation algebra, i.e., $\mathcal{U}_n^d \triangleq\operatorname{span}_{\mathbb{C}}\left\{\phi_n^d(U): U \in \mathbf{U}_d\right\}$, is exactly another algebra coming from a natural representation of the symmetric group $\mathfrak{S}_n$, and vice versa. In particular, consider the following representation of $\mathfrak{S}_n$ that permutes the subsystems in $(\mathbb{C}^d)^{\otimes n}$:
\begin{equation}
    \psi_n^d(\pi) (\ket{i_1}\otimes \hdots \otimes \ket{\psi_n})= \ket{i_{\pi^{-1}(1)}}\otimes \hdots \otimes \ket{\psi_{\pi^{-1}(n)}}.
    \label{eq:psi_n^d}
\end{equation}
Let $\mathcal{A}_n^d \triangleq \operatorname{span}_{\mathbb{C}}\left\{\psi_n^d(\pi): \pi \in \mathfrak{S}_n \right\}$.

\begin{theorem}[Schur-Weyl duality] $\mathcal{U}_n^d$ is the centralizer algebra of $\mathcal{A}_n^d$ and vice versa. In other words, there exists a unitary transformation under which both representations are decomposed into their irreps
\begin{equation}
    \phi_n^d(U) \cong \bigoplus_{\lambda  \vdash n \atop \operatorname{len}(\lambda) \leq d} \mathbf{q}^d_{\lambda}(U) \otimes \id_{m_\lambda}, \qquad \text{and} \qquad\psi_n^d(\pi) \cong \bigoplus_{\lambda  \vdash n \atop \operatorname{len}(\lambda) \leq d} \id_{d_\lambda} \otimes \mathbf{p}_{\lambda}(\pi). 
\end{equation}
\label{thm:schur-weyl}
\end{theorem}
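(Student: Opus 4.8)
\textbf{Proof proposal for the Schur-Weyl duality (Theorem~\ref{thm:schur-weyl}).}
The plan is to establish the double-commutant relationship between $\mcA_n^d$ and $\mcU_n^d$ and then invoke the structure theory of semisimple algebras acting on a vector space. First I would verify the elementary direction: every $\psi_n^d(\pi)$ commutes with every $\phi_n^d(U)$, since permuting tensor factors and applying $U$ to each factor obviously commute; hence $\mcA_n^d \subseteq \mathrm{End}_{\mcU_n^d}((\mathbb{C}^d)^{\otimes n})$ and $\mcU_n^d \subseteq \mathrm{End}_{\mcA_n^d}((\mathbb{C}^d)^{\otimes n})$. The substantive content is the reverse inclusion on one side, i.e. that $\mathrm{End}_{\mcA_n^d}((\mathbb{C}^d)^{\otimes n}) \subseteq \mcU_n^d$: any operator commuting with all subsystem permutations is a linear combination of the $U^{\otimes n}$.

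The key step I would use for that inclusion is the standard argument that $\mcU_n^d$ equals the full symmetrizer, i.e. the image of $\mathbb{C}\mathfrak{S}_n$ under the permutation action already spans the commutant of $\{U^{\otimes n}\}$ --- more precisely, one shows $\mathrm{End}_{\mcA_n^d}((\mathbb{C}^d)^{\otimes n})$ is spanned by operators of the form $v^{\otimes n} \mapsto$ (symmetric functions of $v$), equivalently that the algebra generated by $\{U^{\otimes n} : U \in \mathbf{U}_d\}$ already contains all of $\mathrm{Sym}^n$-type operators. Concretely: polynomials in the entries of $U$ of multidegree $(1,\dots,1)$ (taking $n$-fold products) span, by a Vandermonde/polarization argument, the whole space of symmetric tensors $\mathrm{Sym}^n(\mathrm{End}(\mathbb{C}^d))$ sitting inside $\mathrm{End}((\mathbb{C}^d)^{\otimes n})$; and $\mathrm{Sym}^n(\mathrm{End}(\mathbb{C}^d))$ is exactly the $\mathfrak{S}_n$-fixed subspace, hence equals the commutant of $\mcA_n^d$. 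This gives $\mathrm{End}_{\mcA_n^d} = \mcU_n^d$, and combined with semisimplicity of $\mathbb{C}\mathfrak{S}_n$ (Maschke) the double-commutant theorem yields $\mathrm{End}_{\mcU_n^d} = \mcA_n^d$ as well.

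Once both commutant identities hold, the simultaneous block decomposition follows from Artin--Wedderburn / the general theory of a semisimple algebra $\mcB$ and its commutant $\mcB'$ acting on a finite-dimensional $V$: there is a decomposition $V \cong \bigoplus_\alpha W_\alpha \otimes M_\alpha$ where $W_\alpha$ ranges over the irreps of $\mcB$ appearing in $V$ and $M_\alpha$ over the corresponding irreps of $\mcB'$, with $\mcB$ acting as $\bigoplus_\alpha \mathbf{q}_\alpha \otimes \mathbb{1}$ and $\mcB'$ acting as $\bigoplus_\alpha \mathbb{1} \otimes \mathbf{p}_\alpha$. It remains only to identify the index set: the irreps of $\mathfrak{S}_n$ that appear are precisely those labeled by $\lambda \vdash n$, and the constraint $\mathrm{len}(\lambda) \le d$ comes from the fact that $\mathbf{q}^d_\lambda$ vanishes (the relevant antisymmetrization kills $(\mathbb{C}^d)^{\otimes n}$) exactly when $\lambda$ has more than $d$ rows; this can be seen by evaluating the corresponding Young symmetrizer, or by the SSYT count $d_\lambda$ being zero for $\mathrm{len}(\lambda)>d$. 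Matching the irrep $\mathbf{q}^d_\lambda$ of $\mathbf{U}_d$ with the irrep $\mathbf{p}_\lambda$ of $\mathfrak{S}_n$ under the same $\lambda$ is the content of Schur--Weyl.

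The main obstacle is the spanning claim $\mathrm{End}_{\mcA_n^d}((\mathbb{C}^d)^{\otimes n}) \subseteq \mcU_n^d$ --- the rest is bookkeeping with double-commutant and Wedderburn. I expect to dispatch it via the polarization argument: an arbitrary $\mathfrak{S}_n$-invariant operator lies in $\mathrm{Sym}^n(\mathrm{End}(\mathbb{C}^d))$, which is linearly spanned by the ``rank-one'' symmetric tensors $X^{\otimes n}$ for $X \in \mathrm{End}(\mathbb{C}^d)$ (a standard fact over a field of characteristic $0$), and since invertible $X$ are Zariski-dense and $X \mapsto X^{\otimes n}$ is polynomial, the span of $\{X^{\otimes n} : X \in \mathrm{GL}_d\}$ already equals $\mathrm{Sym}^n(\mathrm{End}(\mathbb{C}^d))$; finally $\{U^{\otimes n} : U \in \mathbf{U}_d\}$ spans the same space as $\{X^{\otimes n} : X \in \mathrm{GL}_d\}$ because $\mathbf{U}_d$ is Zariski-dense in $\mathrm{GL}_d$ (or by a real-analytic density argument). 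Alternatively, since this is recalled as background rather than proved afresh, I would simply cite the standard references (e.g.~\cite{harrow2005applications}) for the duality and focus the exposition on the labeling consequences needed later.
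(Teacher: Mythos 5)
Your proposal is the standard double-commutant proof of Schur--Weyl duality (elementary commutation of the two actions, polarization plus Zariski-density of $\mathbf{U}_d$ in $\mathrm{GL}_d$ to establish $\mathrm{End}_{\mcA_n^d}((\mathbb{C}^d)^{\otimes n}) = \mcU_n^d$, then Artin--Wedderburn for the simultaneous block decomposition and the $\operatorname{len}(\lambda)\leq d$ cutoff), and it is correct; the paper itself states this classical theorem as background without proof, deferring to standard references such as \cite{harrow2005applications}, so your sketch supplies exactly the argument the paper omits. The one blemish is the sentence asserting that ``$\mcU_n^d$ equals the full symmetrizer, i.e.\ the image of $\mathbb{C}\mathfrak{S}_n$ \dots spans the commutant of $\{U^{\otimes n}\}$,'' which momentarily swaps the roles of the two algebras, but the concrete polarization argument you give immediately afterwards is the right one and makes that slip harmless.
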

Note, that $\mathbf{p}_{\lambda}$ has no superscript labeling the system size $d$ since they are irreps of $\mathfrak{S}_n$ and thus are independent of $d$ (whereas $\psi_n^d$ does depend on $d$).
Bacon, Chuang, and Harrow \cite{bacon2007quantum} gave the first efficient (polynomial in $n$ and $d$) quantum circuit for this decomposition (the quantum Schur transform). Krovi~\cite{krovi2019efficient} gave another construction that is efficient in $n$ and $\log d$ and is based on the the quantum Fourier transform of the symmetric group.

\subsection{Staircases and rational tableaux}\label{sec:staircase}
In this work, we are interested in a strict generalization of $U^{\otimes n}$ called the \textit{mixed tensor representation}: $U^{\otimes n} \otimes \bar{U}^{\otimes m}$. This is a rational representation of $\mathbf{U}_d$.
As a simple example, consider the dual representation $\Bar{U}$ of $\mathbf{U}_2$. For any unitary matrix $U=\begin{pmatrix}
       a & b \\
       c & d
    \end{pmatrix} \in \mathbf{U}_2$, we have that
\begin{equation}
    \Bar{U} = (U^{-1})^\top= \frac{1}{ad-bc} \begin{pmatrix}
       d & -c \\
       -b & a
    \end{pmatrix}
\end{equation}
The entries of this representation are rational functions of the entries of $U$, so $\Bar{U}$ is a \emph{rational} representation of $\mathbf{U}_2$. 

Notice that for $\mathbf{U}_2$, $\bar{U}$ can be thought of as a composition of the one-dimensional irrep $\mathrm{det}^{-1}$ and the (polynomial) defining irrep $\mathbf{q}_{\square}^2$ because we can write
\begin{equation}
     \bar{U} = \operatorname{det} U^{-1} \begin{pmatrix}
       0 & -1 \\
       1 & 0
    \end{pmatrix} U \begin{pmatrix}
       0 & -1 \\
       1 & 0
    \end{pmatrix}^\dagger.
    \label{eq:dual-defining}
\end{equation}
More generally, Schur~\cite{schur1927rationalen} showed that all rational irreps of $\mathbf{U}_d$ (and of $\mathbf{GL}_d$) are of the form
\begin{equation}
    U \rightarrow \operatorname{det} U^{r} \rho_\lambda(U),
\end{equation}
where $r \in \mathbb{Z}$ and $\lambda$ is a Young diagram with $\mathrm{len}(\lambda) \leq d$. It can be shown that $\operatorname{det}^r \rho_\lambda$ and $\operatorname{det}^s \rho_\mu$ are equivalent if and only if $(\lambda_1 + r, \lambda_2 +r, \hdots ) =  (\mu_1 + s, \mu_2 +s, \hdots )$ (see, e.g., Theorem 7.1 in~\cite{stembridge1987rational}). This means any polynomial irrep ($r \geq 0$) can be labeled only by a partition $\lambda$. Furthermore, adding or removing a full column of boxes from the
diagram of $\lambda$ corresponds to multiplying or dividing the corresponding representation by
$\operatorname{det} U$. The case when $r$ is a large negative integer such that subtracting it from $\lambda$ makes one of the entries negative is exactly the case of rational irreps. We can label the irreps $T^{r,\lambda}$ is via \textit{staircases}, termed by~\cite{stembridge1987rational}. These are simply sequences of weakly decreasing integers
\begin{equation}
    (\gamma_1,\gamma_2, \hdots, \gamma_d) \in \mathbb{Z} \qquad \text{s.t. } \gamma_1 \geq \gamma_2 \geq \hdots \geq \gamma_d,
\end{equation}
where $\gamma_i = \lambda_i - r$ and extra $0$'s are padded if needed. 

Combinatorics with rational irreps can in principle be reduced to well-known methods for polynomial irreps by adding full columns of boxes (equivalently multiplying the rational irreps with the determinant representations). This approach, however, turns out to quickly become unwieldy. For this reason, Stembridge~\cite{stembridge1987rational} developed several generalizations of Young tableaux to staircases.

There is another formalism in~\cite{stembridge1987rational} of staircases that will be useful in this work. A staircase can be seen as a pair of Young diagrams by splitting the positive and negative entries as illustrated in the following figure.
\begin{figure}[H]
    \centering
\includegraphics[width=0.4\textwidth]{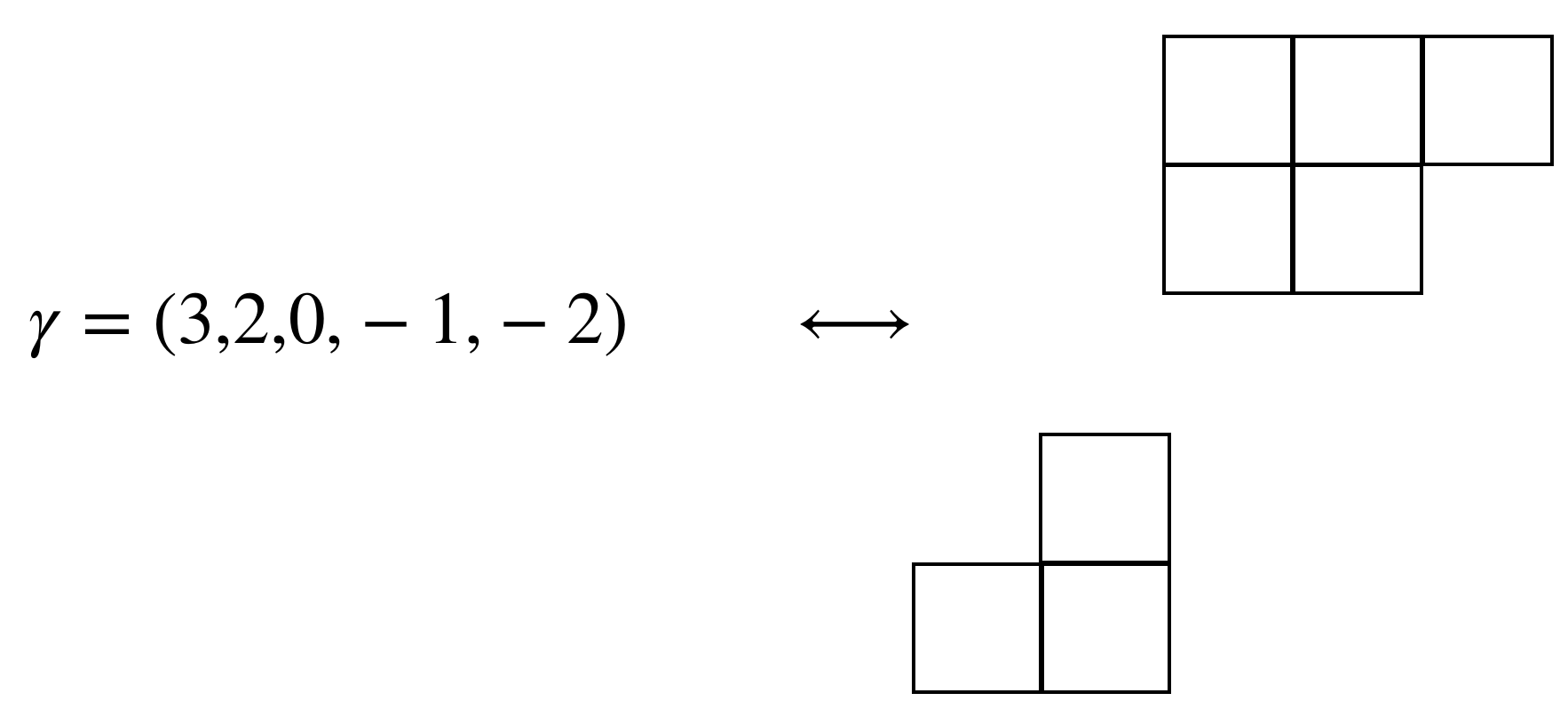}
    \caption{A staircase is split into two Young diagrams.}
    \label{fig:rational}
\end{figure}
We will alternate viewing a rational irrep $\gamma$ as either a tuple of weakly decreasing integers or a pair of Young diagrams $\gamma=[\alpha, \beta]$ depending on context (rather than by the pair $r, \lambda$). Polynomial irreps constitute the special case when $\beta=\varnothing$, the null diagram.

\begin{prop} All rational irreps of $\mathbf{U}_d$ (and more generally, $\mathbf{GL}_d$) can be indexed by staircases $\gamma=[\alpha,\beta]$ of length $d$ such that $\operatorname{len}(\alpha)+\operatorname{len}(\beta)  \leq  d$.
\end{prop}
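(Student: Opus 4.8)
The plan is to establish the bijection between rational irreps of $\mathbf{U}_d$ and staircases $\gamma = [\alpha,\beta]$ with $\operatorname{len}(\alpha)+\operatorname{len}(\beta) \leq d$ by combining Schur's classification (already recalled in the text) with the combinatorial bookkeeping of the ``staircase = pair of Young diagrams'' picture. First I would start from the fact stated above: every rational irrep is equivalent to $\det^r \rho_\lambda$ for some $r \in \mathbb{Z}$ and some partition $\lambda$ with $\operatorname{len}(\lambda) \leq d$, and $\det^r\rho_\lambda \cong \det^s\rho_\mu$ iff the shifted sequences $(\lambda_i + r)_i$ and $(\mu_i + s)_i$ (padded with zeros to length $d$) coincide. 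Thus equivalence classes of rational irreps are in bijection with weakly decreasing integer sequences $\gamma = (\gamma_1 \geq \cdots \geq \gamma_d) \in \mathbb{Z}^d$, via $\gamma_i = \lambda_i - r$ — this is exactly the definition of a length-$d$ staircase. So the first half of the proof is just unwinding definitions: rational irreps $\leftrightarrow$ length-$d$ staircases.

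Next I would make the passage from a staircase $\gamma \in \mathbb{Z}^d$ to the pair $[\alpha,\beta]$ precise and check the length constraint. Given $\gamma_1 \geq \cdots \geq \gamma_d$, let $k$ be the number of nonnegative entries; set $\alpha = (\gamma_1,\dots,\gamma_k)$ (dropping trailing zeros, so $\alpha$ is a genuine partition) and $\beta = (-\gamma_d, -\gamma_{d-1}, \dots, -\gamma_{k+1})$ (the negatives of the negative entries, read in reverse so they are weakly decreasing and positive). This is the split depicted in Figure~\ref{fig:rational}, and it is manifestly invertible: from $[\alpha,\beta]$ one reconstructs $\gamma$ by placing $\alpha$'s parts at the top, $\beta$'s parts negated at the bottom, and padding with zeros in the middle to total length $d$. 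The reconstruction requires exactly that there be enough room for the zero padding, i.e. $\operatorname{len}(\alpha) + \operatorname{len}(\beta) \leq d$; conversely any staircase arising from a length-$d$ sequence automatically satisfies this since $\alpha$ occupies the nonnegative slots and $\beta$ the negative slots among $d$ total. Polynomial irreps are recovered as the case $\beta = \varnothing$ (equivalently $\gamma_d \geq 0$), consistent with the earlier labeling by partitions of length $\leq d$.

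The only genuine subtlety — and the step I would be most careful about — is the boundary/degeneracy bookkeeping: ensuring the map staircase $\leftrightarrow [\alpha,\beta]$ is well-defined and bijective at the edge cases. One must fix a convention for how a zero entry $\gamma_i = 0$ is assigned (it contributes neither to $\alpha$ nor to $\beta$ after trailing zeros are stripped), and verify that stripping trailing zeros from $\alpha$ and from $\beta$ does not lose information precisely because the total length is normalized to $d$ by padding. I would also double-check that distinct pairs $[\alpha,\beta]$ with $\operatorname{len}(\alpha)+\operatorname{len}(\beta)\leq d$ always give distinct $\gamma$ — this holds because the sign pattern of $\gamma$ recovers $k$, hence separates $\alpha$'s block from $\beta$'s block. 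Everything else (composing with Schur's theorem, the remark that adding a full column of $d$ boxes corresponds to multiplying by $\det$) is routine and can be cited. So the proof is essentially: (i) invoke Schur's classification to get rational irreps $\leftrightarrow$ length-$d$ integer staircases, then (ii) exhibit the elementary bijection between length-$d$ staircases and pairs $[\alpha,\beta]$ of partitions with $\operatorname{len}(\alpha)+\operatorname{len}(\beta)\leq d$.
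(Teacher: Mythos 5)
Your proposal is correct and follows essentially the same route the paper takes: the paper states this proposition without a separate proof, letting it follow from the immediately preceding discussion (Schur's classification of rational irreps as $\det^r\rho_\lambda$ together with the equivalence criterion, and the splitting of a weakly decreasing integer sequence into its nonnegative and negative parts as in Figure~\ref{fig:rational}). Your write-up simply makes the bijection between length-$d$ staircases and pairs $[\alpha,\beta]$ with $\operatorname{len}(\alpha)+\operatorname{len}(\beta)\leq d$ explicit, including the zero-entry bookkeeping, which is a correct and complete filling-in of what the paper leaves implicit.
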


Filling the boxes with numbers following certain rules gives rise to rational tableaux \cite{stembridge1987rational, kwon2008rational} whose definition is given below for completeness.

\begin{definition}[Rational (semistandard) tableaux] The $\gamma  = [\alpha,\beta]$ be a staircase of length $d$. A \emph{rational tableau} of shape $\gamma$ is a pair $[S,T]$ of semistandard Young tableaux with elements $\leq d$ such that
\begin{enumerate}
    \item $S$ if of shape $\alpha$, and $T$ is of shape $\beta$
    \item $|\{j:S(j,1) \leq  i\}|+|\{j:T(j,1)\leq i\}| \leq i$ for $1 \leq i  \leq d$,
\end{enumerate}
where $S(j,k)$ is the number filled in the box at row $j$, column $k$ of $S$.
\end{definition}
Rational tableaux of shape $\gamma$ yield a basis for the irrep labeled by staircase $\gamma$, just as semistandard Young tableaux (SSYT) do for a Young diagram. In this work, however, we will use a basis more convenient for quantum implementation rather than rational tableaux. This basis will be described in Section~\ref{sec:GTbasis}.
The generalization of standard Young tableaux (SYT) for staircases is called \emph{up-down staircase tableaux} and will be described in Section~\ref{sec:bratteli}.

\subsection{Walled Brauer algebra}

Similar to \cite{bacon2007quantum}, the mixed Schur transform construction in this work will rely on a generalization of Schur-Weyl duality. Before describing this generalization in the next subsection, we need to introduce the \emph{walled Brauer algebra} and its representation on $V^{\otimes n} \otimes \bar{V}^{\otimes m}$ where $V \equiv \mathbb{C}^d$. We follow the presentation in~\cite[Sections 3-5]{grinko2022linear}.

Let $n,m > 0$ and $d \in \mathbb{C}$ (in this work we need only consider positive integers $d$). The walled Brauer algebra $\mathcal{B}_{n,m}^d$ consists of complex linear combinations of diagrams, where each diagram has two rows of $n+m$
nodes each, with a vertical “wall” between the first $n$ and the last $m$ nodes. The
nodes are connected into pairs, subject to the restriction:
if both nodes are in the same row, they must be on different sides of the wall, while
if they are in different rows, they must be on the same side of the wall. See the following example.
\begin{figure}[H]
    \centering
\includegraphics[width=0.3\textwidth]{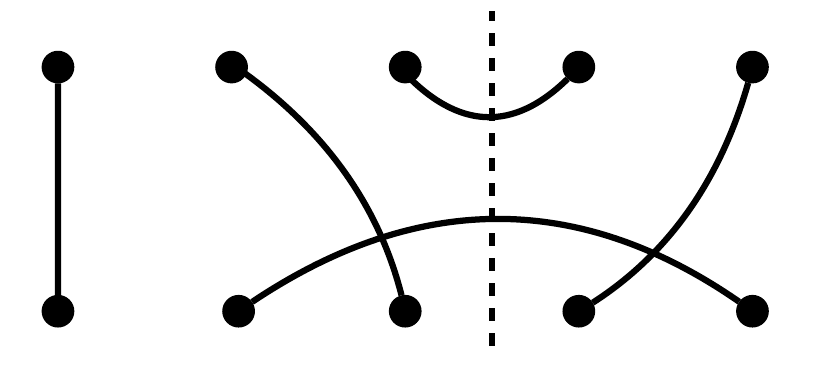}
    \caption{An element in the walled Brauer algebra $\mathcal{B}_{3,2}^d$.}
    \label{fig:walledBrauer}
\end{figure}

We make several remarks. First, the group algebra of $\mathfrak{S}_n \times \mathfrak{S}_m$ is a subalgebra of $\mathcal{B}_{n,m}^d$ without edges crossing the wall. Second, these two algebras become isomorphic when $n=0$ or $m=0$. Third, $\mathcal{B}_{n,m}^d$ itself is a subalgebra of the \textit{full} Brauer algebra, where there are no restrictions on how to pair nodes. Furthermore, there are many interesting properties of $\mathcal{B}_{n,m}^d$ that we will not use in this paper (see \cite{benkart1994tensor, grinko2022linear}). For example, multiplication in $\mathcal{B}_{n,m}^d$ corresponds to concatenating two diagrams vertically, each loop formed in this process yields a factor of $d$. For our purposes, we remark an important connection between the walled Brauer algebra and the symmetric group algebra. That is, as vector spaces, they are related via the partial transpose operation defined in Figure~\ref{fig:partialtranspose}. Therefore, $\operatorname{dim} \mathcal{B}_{n,m}^d = (n+m)!$ and $\mathcal{B}_{n,m}^d$ and $\mathbb{C}\mathfrak{S}_{n+m}$ are isomorphic to each other \emph{as vector spaces} (but not as algebras).

\begin{figure}[H]
    \centering
\includegraphics[width=0.7\textwidth]{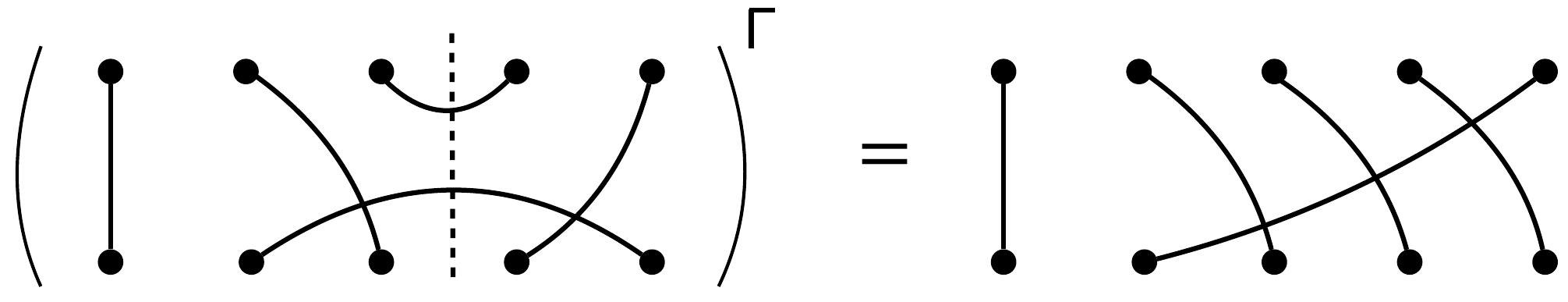}
    \caption{Partial transpose $\sigma^{\ptp}$ of a diagram $\sigma$ in $\mathcal{B}_{n,m}^d$ corresponds to exchanging the last $m$ nodes.}
    \label{fig:partialtranspose}
\end{figure}

Consider the following matrix representation of the walled Brauer algebra $\mathcal{B}_{n,m}^d$. Define, for any diagram $\sigma$,
\begin{equation}
    \psi_{n,m}^d(\sigma) (\ket{i_1}\ket{i_2}\hdots \ket{i_{n+m}}) = \sum_{j_1,\hdots,j_{n+m}=1}^d \sigma_{j_1,\hdots,j_{n+m}}^{i_1,\hdots,i_{n+m}} (\ket{j_1}\ket{j_2}\hdots \ket{j_{n+m}}),
\end{equation}
where $\sigma_{j_1,\hdots,j_{n+m}}^{i_1,\hdots,i_{n+m}}= \prod_{(t,b)\in \sigma} \delta_{i_t,j_b}$ (that is, the variables must agree for all connected pair of nodes in $\sigma$). As an example, for the diagram in Fig. \ref{fig:walledBrauer}, $\sigma_{j_1,\hdots,j_{n+m}}^{i_1,\hdots,i_{n+m}}= \delta_{i_1,j_1} \delta_{i_2,j_3}\delta_{i_3,i_4}\delta_{i_5,j_4}\delta_{j_2,j_5}$. Note that $\psi_{n,m}^d(\sigma)$ is not a unitary operator, but if we consider the partially transposed diagram in Fig. \ref{fig:partialtranspose}, then $\psi_{n,m}(\sigma^{\ptp})$ is a qudit permutation operator. This should be reassuring because we asserted earlier that the walled Brauer algebra $\mathcal{B}_{n,m}^d$ is a partial transpose of the symmetric group algebra $\mathbb{C}\mathfrak{S}_{n+m}$. Indeed, the algebra generated by this representation,
\begin{equation} \mathcal{A}_{n,m}^d\triangleq\operatorname{span}_{\mathbb{C}}\left\{\psi_{n,m}^d(\sigma): \sigma \in \mathcal{B}_{n,m}^d \right\}
\end{equation}
is called the \emph{partially transposed permutation} (PTP) matrix algebra.\footnote{It is worth noting that when $d \geq n+m$, $\psi_{n,m}^d$ is a faithful representation of $\mathcal{B}_{n,m}^d$.} It will play the role of the symmetric group algebra in the mixed Schur-Weyl duality (the Schur-Weyl duality is recovered when $n=0$ or $m=0$).

\subsection{Mixed Schur-Weyl duality}\label{sec:mixedSW}
We can now describe the extension of the Schur-Weyl duality to mixed tensor representations, following
\cite[Section 3]{grinko2022linear}.

Define the mixed tensor representation
\begin{equation}
    \phi^d_{n,m} (U)= U^{\otimes n} \otimes \Bar{U}^{\otimes m}.
\end{equation}
and let us denote by $\mathcal{U}_{n,m}^d$ the algebra generated by it, i.e., 
\begin{equation}
    \mathcal{U}_{n,m}^d \triangleq\operatorname{span}_{\mathbb{C}}\left\{\phi_{n,m}^d(U): U \in \mathbf{U}_d\right\}.
\end{equation}
We will refer to the irreps of $\mathbf{U}_d$ labeled by $\gamma$ as (boldface) $\mathbf{q}_\gamma^d$ and the vector space carrying it (mathcal) $\mathcal{Q}^d_\gamma$. As it turns out, staircases $\gamma$ also label the irreps of $\mathcal{A}_{n,m}^d$. Paralleling the standard Schur-Weyl case, we refer to them as $\mathbf{p}_\gamma^d$ the irrep and as $\mathcal{P}_\gamma^d$ the vector space carrying it. Note that these notations have an extra superscript $d$ now. Indeed, the irreps of $\mathcal{A}^d_{n,m}$ surprisingly depend on $d$, which was not the case for the $\mathfrak{S}_n$ group algebra. We will see an example of this dependence in Section~\ref{sec:bratteli}.

\begin{theorem}[Mixed Schur-Weyl duality \cite{koike1989decomposition, benkart1994tensor}] \hphantom \\
\begin{center}
    $\mathcal{U}^d_{n,m}$ is the centralizer algebra of $\mathcal{A}^d_{n,m}$ and vice versa.
\end{center}
 In other words, there exists a \emph{mixed Schur transform} that decomposes the vector space $V^{\otimes n} \otimes \bar{V}^{\otimes m}$ as 
\begin{equation}
    V^{\otimes n} \otimes \bar{V}^{\otimes m} \cong \bigoplus_{\gamma \in \operatorname{Irr}(\mathcal{U}_{n,m}^d)} \mathcal{Q}^d_\gamma \otimes \mathcal{P}^d_\gamma,
    \label{eq:mixedSW}
\end{equation}
where $\operatorname{Irr}(\mathcal{U}_{n,m}^d)$ is the set of staircases $\gamma$ that \emph{simultaneously} label the $\mathcal{Q}^d_\gamma$ of $\mathbf{U}_d$ irreps constituting in $\mathcal{U}_{n,m}^d$ and the irreps $\mathcal{P}^d_\gamma$ of $\mathcal{A}_{n,m}^d$ constituting in $\mathcal{A}^d_{n,m}$.
\label{thm:mixedSW}
\end{theorem}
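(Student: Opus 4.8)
The plan is to follow the classical "double commutant" strategy that proves ordinary Schur-Weyl duality, adapted to the mixed tensor setting. The core is the double commutant theorem: if $\mcA \subseteq \mathrm{End}(W)$ is a $*$-closed matrix algebra acting on a finite-dimensional Hilbert space $W$, then $\mcA = \mathrm{End}_{\mathrm{End}_\mcA(W)}(W)$, and $W$ decomposes as $\bigoplus_\gamma \mcX_\gamma \otimes \mcY_\gamma$ where $\gamma$ ranges over the irreps of $\mcA$ appearing in $W$, $\mcX_\gamma$ carries the irrep of $\mcA$, and $\mcY_\gamma$ carries the corresponding irrep of $\mcB := \mathrm{End}_\mcA(W)$ (the two algebras ``share'' the labels $\gamma$, and each $\mcY_\gamma$ is an irrep of $\mcB$ with multiplicity $\dim \mcX_\gamma$). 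Since both $\mcU^d_{n,m}$ and $\mcA^d_{n,m}$ are spans of unitaries (for $\mcA^d_{n,m}$ one uses that $\psi^d_{n,m}(\sigma^\ptp)$ is a genuine permutation operator, hence unitary, and these span the same space as the $\psi^d_{n,m}(\sigma)$ by the partial transpose bijection of Figure~\ref{fig:partialtranspose}), they are both $*$-closed, so the double commutant theorem applies to either one.

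Thus the whole theorem reduces to the single statement $\mathrm{End}_{\mcU^d_{n,m}}(V^{\otimes n}\otimes \bar V^{\otimes m}) = \mcA^d_{n,m}$, i.e. every operator commuting with $U^{\otimes n}\otimes \bar U^{\otimes m}$ for all $U \in \mathbf{U}_d$ is a linear combination of the $\psi^d_{n,m}(\sigma)$. One inclusion is a direct check: each diagram generator $\psi^d_{n,m}(\sigma)$ visibly commutes with $\phi^d_{n,m}(U)$ — the ``same-row, opposite-side'' edges implement contractions of the form $\sum_i \ket{i}\otimes\ket{i}$ (resp. $\bra{i}\otimes\bra{i}$) which are $\mathbf{U}_d$-invariant vectors in $V\otimes \bar V$ (resp. its dual), and ``different-row, same-side'' edges are identity strands; equivariance of each piece under $U$ gives the commutation. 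For the reverse inclusion, I would compute the dimension of the commutant via characters: $\dim \mathrm{End}_{\mathbf{U}_d}(V^{\otimes n}\otimes\bar V^{\otimes m}) = \int_{\mathbf{U}_d} |\chi_V(U)|^{2}\!\cdot\!\chi_V(U)^{\,n-1}\overline{\chi_V(U)}^{\,m-1} \,dU$ type integral — more precisely $\int |\mathrm{tr}(U)|^{2k}dU$-style moments — and show this equals $\dim \mcA^d_{n,m}$. By the partial transpose isomorphism $\dim\mcB^d_{n,m} = (n+m)!$, and for $d \ge n+m$ the representation $\psi^d_{n,m}$ is faithful (footnoted in the excerpt), so in that range $\dim \mcA^d_{n,m}=(n+m)!$; the moment integral indeed evaluates to $(n+m)!$ for $d\ge n+m$. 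For smaller $d$ one must track the kernel of $\psi^d_{n,m}$ and match it against the drop in the integral; alternatively, invoke the explicit decomposition of $V^{\otimes n}\otimes \bar V^{\otimes m}$ into rational irreps $\mathbf{q}^d_\gamma$ (Koike's formula~\cite{koike1989decomposition}, using the staircase combinatorics of Section~\ref{sec:staircase}) together with the Littlewood–Richardson-type rule, and compare $\sum_\gamma (\text{mult of }\mathbf{q}^d_\gamma)^2$ against the analogous sum for $\mcA^d_{n,m}$ coming from the up-down staircase tableau count.

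The main obstacle is this dimension/irreducibility count for general $d < n+m$, where $\psi^d_{n,m}$ is not faithful and one cannot simply read off $\dim\mcA^d_{n,m}=(n+m)!$. The cleanest route is probably to cite the existing representation-theoretic results on the walled Brauer algebra and its action — Benkart et al.~\cite{benkart1994tensor} determine the irreducible $\mcB^d_{n,m}$-modules appearing in $V^{\otimes n}\otimes\bar V^{\otimes m}$ and show they are labeled by exactly those staircases $\gamma$ of length $\le d$ with $\operatorname{len}(\alpha)+\operatorname{len}(\beta)\le d$, which simultaneously index the rational $\mathbf{U}_d$-irreps appearing — so that the "shared label" statement of the theorem, together with the mutual-centralizer property, follows directly. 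So in practice I would structure the proof as: (i) state the double commutant theorem; (ii) check $\mcA^d_{n,m}$ is $*$-closed via partial transpose; (iii) verify $\mcA^d_{n,m}\subseteq \mathrm{End}_{\mcU^d_{n,m}}$ by the contraction/identity-strand argument; (iv) invoke~\cite{benkart1994tensor, koike1989decomposition} (or the moment integral in the faithful range) for the matching of dimensions/irrep labels; (v) conclude the decomposition~\eqref{eq:mixedSW} and mutual centralizer property. The quantum-circuit content of the paper does not depend on reproving (iv) from scratch, so citing it is legitimate and keeps the argument short.
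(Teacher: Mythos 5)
The paper itself gives no proof of Theorem~\ref{thm:mixedSW}; it is imported wholesale from \cite{koike1989decomposition, benkart1994tensor}. So your outline is to be judged against the standard literature argument, and its skeleton is the right one: double commutant theorem for $*$-closed algebras, the easy inclusion $\mcA^d_{n,m}\subseteq \mathrm{End}_{\mcU^d_{n,m}}(V^{\otimes n}\otimes\bar V^{\otimes m})$ via the $\mathbf{U}_d$-invariance of the contraction vector $\sum_i\ket{i}\otimes\ket{i}\in V\otimes\bar V$, and deferral to \cite{benkart1994tensor, koike1989decomposition} for the reverse inclusion and the identification of the common label set with staircases when $d<n+m$. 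Citing the literature for step (iv) is exactly what the paper does.

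One step of your write-up is wrong as stated, however: the claim that $\mcA^d_{n,m}$ is a span of unitaries because the permutation operators $\psi^d_{n,m}(\sigma^\ptp)$ ``span the same space as the $\psi^d_{n,m}(\sigma)$.'' They do not. The map $\sigma\mapsto\sigma^\ptp$ is a bijection of diagrams, but on operators the partial transpose carries the permutation algebra to a genuinely different subspace of $\mathrm{End}(V^{\otimes(n+m)})$: already for $n=m=1$, the partial transpose of $\swap$ is $d\ket{\Phi_d}\bra{\Phi_d}$, which has rank one, so $\mathrm{span}\{\id,\swap\}\neq\mathrm{span}\{\id,d\ket{\Phi_d}\bra{\Phi_d}\}$ for $d\geq 2$; and the paper explicitly notes that $\psi^d_{n,m}(\sigma)$ is not unitary. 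The conclusion you actually need---that $\mcA^d_{n,m}$ is closed under adjoints so that the double commutant theorem applies---is still true, but for a different reason: from the matrix elements $\sigma^{i_1,\dots,i_{n+m}}_{j_1,\dots,j_{n+m}}=\prod_{(t,b)\in\sigma}\delta_{i_t,j_b}$ one reads off $\psi^d_{n,m}(\sigma)^\dagger=\psi^d_{n,m}(\tilde\sigma)$, where $\tilde\sigma$ exchanges the top and bottom rows of $\sigma$ and is again a valid walled Brauer diagram (the pairing constraints are symmetric under this flip); equivalently, partial transposition commutes with the adjoint, so the partial transpose of a $*$-closed algebra is $*$-closed. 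With that repair (and writing the character computation correctly as $\int_{\mathbf{U}_d}\lvert\Tr U\rvert^{2(n+m)}\,d\mu(U)=(n+m)!$ for $d\geq n+m$, rather than the hybrid expression you wrote), your argument goes through.
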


From the perspective of $\mathbf{U}_d$, the mixed Schur transform block-diagonalizes the mixed tensor representation,
\begin{equation}
    W_{\mathrm{Sch}(n,m)} (U^{\otimes n} \otimes \Bar{U}^{\otimes m}) W_{\mathrm{Sch}(n,m)}^\dagger = \bigoplus_{\gamma \in \operatorname{Irr}(\mathcal{U}_{n,m}^d) } \mathbf{q}^d_\gamma (U) \otimes \id_{m_\gamma}, \forall U \in \mathbf{U}_d,
\end{equation}
where $m_\gamma$ is the multiplicity of $\mathbf{q}_\gamma^d$ which is equal to the dimension of the irrep $\mathbf{p}^d_\gamma$ of $\mathcal{A}^d_{n,m}$.

Dually, the mixed Schur transform block-diagonalizes the PTP matrix algebra
\begin{equation}
    W_{\mathrm{Sch}(n,m)} \psi_{n,m}^d(\sigma) W_{\mathrm{Sch}(n,m)}^\dagger = \bigoplus_{\gamma \in \operatorname{Irr}(\mathcal{A}_{n,m}^d) } \id_{d_\gamma} \otimes \mathbf{p}^d_\gamma (\sigma),  \forall \sigma \in \mathcal{B}_{n,m}^d,
\end{equation}
where $d_\gamma$ is the multiplicity of $\mathbf{p}_\gamma^d$ which is equal to the dimension of the irrep $\mathbf{q}^d_\gamma$ of $\mathbf{U}_d$.

To implement the mixed Schur transform on a quantum computer, we need to choose bases for both the $\mathcal{Q}^d_\gamma$ and $\mathcal{P}^d_\gamma$ registers and a corresponding labeling scheme. As described in the next subsections, we will choose certain canonical bases $\{\ket{q_\gamma}\}$ for $\mathcal{Q}^d_\gamma$ and $\{\ket{p_\gamma}\}$ for $\mathcal{P}^d_\gamma$. Doing so defines a matrix representation of the mixed Schur transform. In particular, a vector in this canonical basis is a linear combination of computational basis elements
\begin{equation}
    \ket{\gamma, q_\gamma,p_\gamma}_\mathrm{Sch} = (W^{\gamma,q_\gamma,p_\gamma}_{i_1\hdots i_{n+m}})^*  \ket{i_1\hdots i_n i_{n+1}\hdots i_{n+m}},
\end{equation}
whose coefficients $(W^{\gamma,q_\gamma,p_\gamma}_{i_1\hdots i_{n+m}})^*$ define the entries of the mixed Schur transform.

Our goal will be to construct the mixed Schur transform as an isometry $W_{\mathrm{Sch}(n,m)}$ on a quantum computer which turns such canonical basis states into the bitstring description of them
\begin{equation}
    W_{\mathrm{Sch}(n,m)} \ket{\gamma, q_\gamma,p_\gamma}_\mathrm{Sch} =  \ket{\gamma, q_\gamma,p_\gamma},
    \label{eq:def-mixedschur}
\end{equation}
where the registers on the RHS register stores \textit{bit strings} describing the irrep label and basis elements, and the LHS, $\ket{\cdot}_{Sch}$, is the corresponding canonical basis state described by the RHS bit strings. The mixed Schur-Weyl duality implies that this isometry will also be the irrep decomposition of the partially transposed permutation matrix algebra.

\begin{remark}
     The number of irreps show up in the mixed Schur transform in Eq.~\eqref{eq:mixedSW}, i.e., the size $|\operatorname{Irr}(\mathcal{U}_{n,m}^d)|=|\operatorname{Irr}(\mathcal{A}_{n,m}^d)|$ is given by the following formula~\cite[Equation 3.28]{bulgakova2020some}
    \begin{equation}
        \sum_{k=0}^{\min (n,m)} f^d_{n-k,m-k},
    \end{equation}
where $f^d_{n-k,m-k}$ is the number of staircases $\gamma=[\alpha,\beta]$ of length $\operatorname{len}(\gamma)\leq d$ such that $\alpha \vdash n-k$ and $\beta \vdash m-k$. However, there is no ``one-letter'' formula for this in terms of $n,m,$ and $d$.
For example, when $m=0$ and $d \geq n$, this number equals the number of partitions of an integer $n$, which is an open problem in number theory. A crude upper bound is of course $\min \{d^{n+m}, (n+m)!\}$.
\label{remark:countirreps}
\end{remark}

\subsection{Gelfand-Tsetlin basis for the $\mathcal{Q}^d_\gamma$ register}\label{sec:GTbasis}
For the register $\mathcal{Q}^d_\gamma$, we will use the Gelfand-Tsetlin (GT) basis for rational irreps (see \cite[Section 2]{louck1970recent}). We summarize the main ideas of this scheme in the sequel. Suppose $\mathcal{Q}^d_\gamma$ is an irrep of $\mathbf{U}_d$. What if we view it as a representation of $\mathbf{U}_{d-1}$? In this case, it becomes a reducible representation and thus can be decomposed into $\mathbf{U}_{d-1}$ irreps. The Weyl branching rule asserts that, for any $\mathcal{Q}^d_\gamma$, the $\mathbf{U}_{d-1}$ irreps that appear in this decomposition are $\mathcal{Q}^{d-1}_{\mu}$ such that $\mu$ are staircases that satisfy the ``betweenness'' condition
\begin{equation}
    \gamma_1 \geq \mu_1 \geq \gamma_2 \geq \hdots \geq \mu_{d-1} \geq \gamma_d.
\end{equation}
We say $\mu$ \emph{interlaces} $\gamma$ and denote $\mu \precsim \gamma$. Furthermore, this decomposition is multiplicity-free, i.e., each such $\mu$ appears exactly once. This implies that any element $\ket{q_\gamma}$ in $\mathcal{Q}^d_\gamma$ can be labeled by $\ket{\mu,q_\mu}$ for some $\mu \precsim \gamma$ and $\ket{q_\mu} \in \mathcal{Q}^{d-1}_\mu$. Denote $\mu^{(d-1)}\equiv \mu$ to make the fact that $\mu$ is an irrep of $\mathbf{U}_{d-1}$ explicit. Recursively applying the Weyl branching rule all the way to $\mathcal{U}_1$, we find that $\ket{q_\gamma}$ can be labelled by
\begin{equation}
    \ket{q_\gamma}=\ket{\mu^{(d-1)}, q_{\mu^{(d-1)}}} = \ket{\mu^{(d-1)}, \mu^{(d-2)}, q_{\mu^{(d-2}}} = \hdots = \ket{\mu^{(d-1)}, \mu^{(d-2)},\hdots, \mu^{(1)}, q_{\mu^{(1)}}}.
\end{equation}
Note that we can omit the register $\ket{q_{\mu^{(1)}}}$ since it is one-dimensional. This is the Gelfand-Tsetlin (GT) basis that we will use and is an example of so-called subgroup-adapted bases~\cite{moore2006generic}. The \textit{Gelfand pattern} is a triangular array that collects all intermediate irreps:

\begin{equation}
    \begin{array}{ccccccccc}
         \mu^{(d)}_1& &\mu^{(d)}_2 & &\hdots& & \mu^{(d)}_{d-1} & & \mu^{(d)}_d   \\
         & \mu^{(d-1)}_1 & & \mu^{(d-1)}_2 & & \hdots & &  \mu^{(d-1)}_{d-1} & \\
         & & & & \hdots & & & & \\
         & & & \mu^{(2)}_1 &  & \mu^{(2)}_2 & & & \\
         & & & & \mu^{(1)}_1 & & & &
    \end{array}.
\end{equation}
A basis state $q_\gamma$ of $\gamma \equiv \mu^{(d)}$ can be uniquely labeled by a Gelfand pattern.

The dimension of $\mathcal{Q}^d_{\gamma}$ is given by the Weyl dimension formula~\cite[Eq. (2.51)]{louck1970recent}
\begin{equation}
    \operatorname{dim} \mathcal{Q}^d_{\gamma} = \frac{\prod_{i < j} (\gamma_i -\gamma_j-i+j)}{(d-1)!}.
    \label{eq:weyl-dim}
\end{equation}
Evidently from this formula, adding a full column of boxes to $\gamma$ (equivalent to multiplying the irrep with the determinant irrep) does not change the dimension of the irrep.

In the remaining of the paper, the notation $q_\gamma \in \mathcal{Q}_\gamma^d$ means $\ket{q_\gamma}$ is a GT-basis element in $\mathcal{Q}_\gamma^d$.

\subsection{Basis for $\mathcal{A}^d_{n,m}$ irreps from its Bratteli diagram}\label{sec:bratteli}
For the register $\mathcal{P}^d_\gamma$, we use an orthogonal basis for the irreps of $\mathcal{A}_{n,m}^d$ that can be represented by rational tableaux.
It is constructed based on a multiplicity-free tower of subalgebras of $\mathcal{B}_{n,m}^{d}$.

\begin{definition}[Multiplicity-free family of algebras, paraphrased from \cite{doty2019canonical}] A family $\mathcal{A}_0, \hdots, \mathcal{A}_n$ of finite-dimensional
reducible algebras over $\mathbb{C}$ is multiplicity-free if the following requirements hold
    \begin{enumerate}
        \item (Triviality) $\mathcal{A}_0 \cong \mathbb{C}$.
        \item (Embedding) For each $k$, there is a unity-preserving algebra embedding $\mathcal{A}_{k-1} \hookrightarrow \mathcal{A}_{k}$.
        \item (Branching) The restriction to $\mathcal{A}_{k-1}$ of an $\mathcal{A}_k$-irrep $V$ is isomorphic to a direct sum of pairwise inequivalent $\mathcal{A}_{k-1}$-irreps.
    \end{enumerate}
\end{definition}
It turns out that the walled Brauer algebras form a multiplicity-free family via the following chain of embeddings
\begin{equation}
    \mathbb{C} \cong \mathcal{B}^d_{0,0} \hookrightarrow \mathcal{B}^d_{1,0} \hookrightarrow \hdots \hookrightarrow\mathcal{B}^d_{n,0} \hookrightarrow \mathcal{B}^d_{n,1} \hookrightarrow 
    \hdots
    \hookrightarrow \mathcal{B}^d_{n,m}.
    \label{eq:tower}
\end{equation}

\begin{figure}
\centering\includegraphics[width=0.65\textwidth]{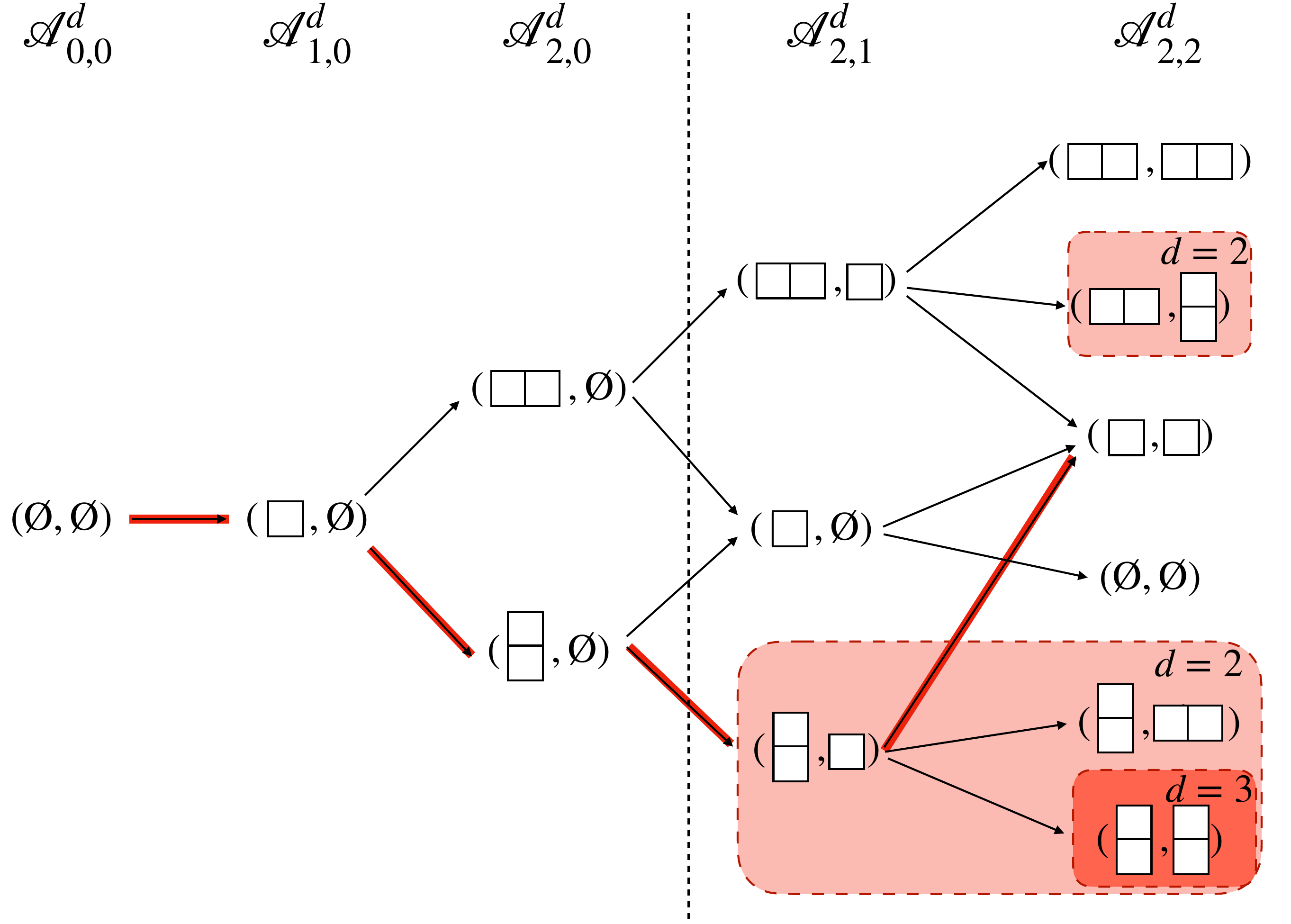}
    \caption{(Reproduction of Figure 3 in~\cite{grinko2022linear}) The Bratteli diagram associated with the multiplicity-free tower of subalgebras $\mathbb{C} \cong \mathcal{A}^d_{0,0} \hookrightarrow \mathcal{A}^d_{1,0} \hookrightarrow \mathcal{A}^d_{2,0} \hookrightarrow \mathcal{A}^d_{2,1} \hookrightarrow \mathcal{A}^d_{2,2}$, which is obtained by removing staircases of length larger than $d$ (e.g., red regions) from the Bratteli diagram of the walled Brauer algebra Eq. \eqref{eq:tower}. Note that  if $d \geq n+m$, then the Bratteli diagrams of $\mathcal{A}_{n,m}^d$ and $\mathcal{B}_{n,m}^d$ coincide. ``Multiplicity-free'' means that there is at most one edge between two vertices in adjacent layers.}
    \label{fig:bratteli}
\end{figure}

This chain can be described via a Bratteli diagram as shown in Fig. \ref{fig:bratteli}, where layers from $0$ to $n$ consist of irreps of $\mathcal{B}^d_{0,0}$ to $\mathcal{B}^d_{n,0}$, and layers $n+1$ onwards consist of irreps of $\mathcal{B}^d_{n,1}$ to $\mathcal{B}^d_{n,m}$. The rules of adding edges are as follows
\begin{itemize}
    \item If to the left side of the wall, then add one box to diagram $\alpha$ in staircase $[\alpha,\varnothing]$.
    \item If to the right side of the wall, then either
    \begin{itemize}
        \item add one box to diagram $\beta$ in staircase $[\alpha,\beta]$
        \item or remove one box from diagram $\alpha$ in staircase $[\alpha,\beta]$. 
    \end{itemize}
\end{itemize}

Given a staircase $\gamma=[\alpha,\beta]$, we denote by $\gamma + \square$ the set of all possible staircases resulting from either adding one box to $\alpha$ or removing one box from $\beta$ (rule 1) and by $\gamma - \square$ all possible staircases resulting from either removing one box from $\alpha$ or adding one box to $\beta$ (rule 2). Note that in the Bratteli diagram branching rules described earlier, we only ever add boxes to $\alpha$ when applying rule 1 for $\beta=\varnothing$ to the left side of the wall. The Bratteli diagram of the symmetric group $\mathfrak{S}_n$ is a special case when only rule 1 is used (see Figure 1 in \cite{grinko2022linear}).

We now discuss how the previous paragraphs describe a generalization of standard Young tableaux (SYT) from Young diagrams to staircases. A SYT is a Young diagram with $n$ boxes filled with numbers in $[n]$ such that, the entries strictly increase along every row and column. However, an equivalent definition of a SYT is a path in the Bratteli diagram of a $\mathfrak{S}_n$ (see, e.g., Equation 26~\cite{bacon2007quantum}). In this sense, paths in the Bratteli diagram of $\mathcal{B}_{n,m}^d$ generalize SYT. In fact, the generalization is more general and termed up-down staircase tableaux by Stembridge~\cite{stembridge1987rational}.

\begin{definition}[Definition 4.4~\cite{stembridge1987rational}] An up-down staircase tableau $T$ of length $n$ and shape $\gamma$ is a sequence $\varnothing, \gamma^{(0)},\gamma^{(1)}\hdots,\gamma^{(n)}=\gamma$ of staircases each of length $d$ in which $\gamma^{(i)} \in \gamma^{(i-1)} + \square$ or $\gamma^{(i)} \in \gamma^{(i-1)} - \square$ for $1 \leq i \leq n$. The tableau $T$ is said to be of type $\varepsilon = (\varepsilon_1,\hdots,\varepsilon_n)$, where $\varepsilon_i=1$ if $\gamma^{(i)} \in \gamma^{(i-1)} + \square$ (rule 1) and $\varepsilon_i=-1$ otherwise (rule 2).
\end{definition}

For example, the sequence
\begin{equation}
    [\varnothing,\varnothing] \rightarrow [\varnothing,\yd[1]{1}] \rightarrow [\yd[1]{1},\yd[1]{1}] \rightarrow [ \yd[1]{1},\varnothing] \rightarrow [\yd[1]{1,1},\varnothing] \rightarrow [\yd[1]{1,1},\yd[1]{1}]
\end{equation}
is an up-down staircase tableau (UDST) of length $n=5$, type $\varepsilon=(-1,1,1,1,-1)$ and shape $\gamma=[\yd[1]{1,1},\yd[1]{1}]$. Evidently, SYT are special cases when $\varepsilon=(1,1,\hdots,1)$. The term `up-down' follows from similarity with up-down tableaux for symplectic groups.

\begin{remark}
\label{remark:insertionrule}
Stembridge~\cite[Section 3]{stembridge1987rational} proved that given a staircase $\gamma$, the number of UDSTs of a type $\varepsilon$, denoted as $c_{\gamma,\varepsilon}$, only depends on the number of $+1$'s and $-1$'s in $\varepsilon$. In other words, the order in which rule 1 and rule 2 are applied does not matter. Therefore, the number of paths from the root to $\gamma$ in the Bratteli diagram in Fig. \ref{fig:bratteli} is equal to $c_{\gamma,\varepsilon}$ for all type $\varepsilon$ with $n$ $+1$'s and $m$ $-1$'s.
\end{remark}

Furthermore, $c_{\gamma,\varepsilon}$ counts exactly the dimension of $\mathcal{P}^d_\gamma$ in the mixed Schur transform (Eq. \eqref{eq:mixedSW}). This makes it clear why the ordering of applications of rule 1 and rule 2 does not matter: Permuting the factors in $V^{\otimes n} \otimes \bar{V}^{\otimes m}$ should not change its irrep structure.

\begin{lemma}[Corollary 4.7 in \cite{stembridge1987rational}] Let $\gamma$ be a staircase of length $n+m$ and $\varepsilon=(1^n,(-1)^m)$. The multiplicity of $\mathcal{Q}^d_\gamma$, which equals $\operatorname{dim} \mathcal{P}^d_\gamma$, in the decomposition of $V^{\otimes n} \otimes \Bar{V}^{\otimes m}$ is $c_{\gamma,\varepsilon}$. 
\end{lemma}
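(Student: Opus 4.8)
\emph{Proof idea.} That the multiplicity of $\mathcal{Q}^d_\gamma$ in $V^{\otimes n}\otimes\bar V^{\otimes m}$ equals $\dim\mathcal{P}^d_\gamma$ is immediate from the mixed Schur--Weyl duality (Theorem~\ref{thm:mixedSW} and Eq.~\eqref{eq:mixedSW}); the content of the lemma is that this number equals $c_{\gamma,\varepsilon}$ with $\varepsilon=(1^n,(-1)^m)$. The plan is to prove this by induction on $n+m$, introducing the tensor factors of $V^{\otimes n}\otimes\bar V^{\otimes m}$ one at a time in the order $V,\dots,V$ (the $n$ copies) and then $\bar V,\dots,\bar V$ (the $m$ copies), so that the $i$-th step of the induction matches the entry $\varepsilon_i$. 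The base case $n=m=0$ is trivial: $V^{\otimes 0}\otimes\bar V^{\otimes 0}=\mathbb{C}=\mathcal{Q}^d_{[\varnothing,\varnothing]}$, and there is a unique up-down staircase tableau of length $0$.

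The inductive step rests on the two multiplicity-free Clebsch--Gordan (Pieri) rules for rational irreps of $\mathbf{U}_d$, namely $\mathcal{Q}^d_\mu\otimes V\cong\bigoplus_{\nu\in\mu+\square}\mathcal{Q}^d_\nu$ and $\mathcal{Q}^d_\mu\otimes\bar V\cong\bigoplus_{\nu\in\mu-\square}\mathcal{Q}^d_\nu$, where $\mu\pm\square$ are the box-adding and box-removing operations on staircases from Section~\ref{sec:bratteli} (these are exactly the decompositions effected by the CG and dual CG transforms of Section~\ref{sec:recursive}). Granting these, suppose $m\ge1$ and write $V^{\otimes n}\otimes\bar V^{\otimes m}=W\otimes\bar V$ with $W=V^{\otimes n}\otimes\bar V^{\otimes(m-1)}$. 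Then the multiplicity of $\mathcal{Q}^d_\gamma$ in $W\otimes\bar V$ is the sum, over all $\mu$ with $\gamma\in\mu-\square$ (equivalently $\mu\in\gamma+\square$), of the multiplicity of $\mathcal{Q}^d_\mu$ in $W$, which by the induction hypothesis equals $\sum_{\mu\in\gamma+\square}c_{\mu,(1^n,(-1)^{m-1})}$. By the recursive structure of up-down staircase tableaux -- a UDST of shape $\gamma$, length $n+m$ and type $(1^n,(-1)^m)$ is a UDST of length $n+m-1$ ending at some $\mu\in\gamma+\square$, with one more ``$-\square$'' step appended to reach $\gamma$ -- this sum is precisely $c_{\gamma,(1^n,(-1)^m)}$. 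The case $m=0$, $n\ge1$ is identical, using $\otimes V$, the relation $\mu\in\gamma-\square$, and an appended ``$+\square$'' step.

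The main obstacle will be justifying those two Pieri rules (especially the one for $\otimes\bar V$) with the correct $d$-dependent truncation. The cleanest route I see is to reduce to the classical polynomial case: writing a rational irrep as a twist of a polynomial irrep by a power of $\det^{-1}$ (Section~\ref{sec:staircase}), one computes $\mathcal{Q}^d_\mu\otimes V$ by the ordinary ``add a box'' rule for Young diagrams together with the standard modification (a box creating a $(d{+}1)$-st row, or an invalid column, contributes nothing), and checks that this matches the operation $\mu+\square$, which by construction only outputs staircases of length $\le d$ -- the same pruning that turns the Bratteli diagram of $\mathcal{B}^d_{n,m}$ into that of $\mathcal{A}^d_{n,m}$ in Figure~\ref{fig:bratteli}. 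The $\otimes\bar V$ rule then follows by dualizing, using $\overline{\mathcal{Q}^d_{[\alpha,\beta]}}\cong\mathcal{Q}^d_{[\beta,\alpha]}$ and $\overline{A\otimes B}\cong\overline A\otimes\overline B$, which turns a ``$+\square$'' move on $[\beta,\alpha]$ into a ``$-\square$'' move on $[\alpha,\beta]$; alternatively one may simply cite the rational Pieri rules~\cite{stembridge1987rational, koike1989decomposition, benkart1994tensor}. Finally, since every decomposition used is multiplicity-free, the final count is independent of the order in which the factors are introduced, which is consistent with Remark~\ref{remark:insertionrule} and with the fact that permuting the tensor factors preserves the irrep content, so the ordered induction above is without loss of generality. (Equivalently, this is the standard argument that in a multiplicity-free tower $\mathcal{A}^d_{0,0}\hookrightarrow\cdots\hookrightarrow\mathcal{A}^d_{n,m}$ the dimension of a top irrep counts root-to-leaf paths in the Bratteli diagram, the paths here being exactly the UDSTs in question; Stembridge's original proof instead expands the character $(x_1+\cdots+x_d)^n(x_1^{-1}+\cdots+x_d^{-1})^m$ into rational Schur functions via an insertion algorithm on staircases.)
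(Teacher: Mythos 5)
Your argument is sound, but note that the paper does not actually prove this lemma: it simply cites Stembridge (Corollary 4.7 of~\cite{stembridge1987rational}), whose own proof goes through an expansion of the character $(x_1+\cdots+x_d)^n(x_1^{-1}+\cdots+x_d^{-1})^m$ into rational Schur functions via an insertion algorithm on staircases — exactly as you note in your closing parenthesis. Your route is different and self-contained in spirit: an induction on $n+m$ that peels off one tensor factor at a time, using the two multiplicity-free decompositions $\mathcal{Q}^d_\mu\otimes V\cong\bigoplus_{\nu\in\mu+\square}\mathcal{Q}^d_\nu$ and $\mathcal{Q}^d_\mu\otimes\bar V\cong\bigoplus_{\nu\in\mu-\square}\mathcal{Q}^d_\nu$, together with the observation that $\gamma\in\mu-\square$ iff $\mu\in\gamma+\square$ (among valid length-$\le d$ staircases) so that the recursion for multiplicities matches the recursion counting up-down staircase tableaux. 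This is precisely the path-counting in the Bratteli diagram that the paper later exploits operationally when cascading CG and dual CG transforms (Sections~\ref{sec:bratteli} and~\ref{sec:transform}), so your proof has the virtue of being tailor-made for the circuit construction, whereas the citation to Stembridge buys an independent combinatorial identity (and the order-independence of $c_{\gamma,\varepsilon}$, Remark~\ref{remark:insertionrule}, which your fixed-order induction does not need but also does not establish). The weight of your argument rests on the two rational Pieri rules, which the paper also asserts without proof (Eq.~\eqref{eq:CG} and Section~\ref{sec:dualCG}); your sketch for them — reduce to the polynomial ``add a box'' rule by twisting with a power of $\det$, then dualize via $\overline{\mathcal{Q}^d_{[\alpha,\beta]}}\cong\mathcal{Q}^d_{[\beta,\alpha]}$, checking that a $+\square$ move on $[\beta,\alpha]$ becomes a $-\square$ move on $[\alpha,\beta]$ — is correct, though to make it fully rigorous you would still need to verify the $d$-dependent truncation carefully (easiest in the weakly-decreasing-integer-sequence picture, where both rules are just ``increment/decrement one entry $\gamma_j$ by $1$ keeping the sequence weakly decreasing''), or simply cite~\cite{stembridge1987rational, koike1989decomposition, benkart1994tensor} as you suggest.
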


With this, we can now describe the subalgebra-adapted basis for $\mathcal{A}_{n,m}^d$ irreps. It is obtained from $\mathcal{B}_{n,m}^d$'s Bratteli diagram by removing all vertices that violate $\operatorname{len}(\gamma) \leq d$. The dimension of an irrep $\mathcal{P}^d_\gamma$ in $\mathcal{A}_{n,m}^d$ is equal to the number of paths in the Bratteli diagram from the root $[\varnothing,\varnothing]$ to $\gamma$ at layer $n+m$. Thus, these paths serve as a basis for the irrep $\mathcal{P}^d_\gamma$. For each staircase $\gamma=[\alpha,\beta]$ present in mixed Schur decomposition, we will label an orthogonal basis for $\mathcal{P}^d_{\gamma}$ by descriptions of paths in the Bratteli diagram of $\mathcal{A}_{n,m}^{d}$ starting from the root $[\varnothing,\varnothing]$ to $\gamma$. A simple scheme of indexing which boxes get added or removed along the path uses $\mcO((n+m) \log d)$ bits. In the remaining of the paper, the notation $p_\gamma \in \mathcal{P}_\gamma^d$ means $\ket{p_\gamma}$ is a basis element in $\mathcal{P}_\gamma^d$ defined by paths in the Bratteli diagram.

Notice that irreps labeled by the same staircase $\gamma$ can have different dimensions depending on $d$. According to Fig.~\ref{fig:bratteli}, the vertex $[\square,\square]$, as an irrep of $\mathcal{A}_{2,2}^2$, has dimension three because there are three paths from the root to it. However, as an irrep of $\mathcal{A}_{2,2}^3$, it has dimension four due to the red path Fig.~\ref{fig:bratteli} becoming valid when increasing $d=2$ to $d=3$. This is why we need the superscript $d$ to denote the irreps $\mathcal{P}^d_\gamma$.

As an example, we count the dimensions and multiplicities of all irreps when $n=2,m=2,d=3$ using the Bratteli diagram. According to Fig.~\ref{fig:bratteli}, there are five irreps appearing in the mixed Schur decomposition Eq.~\eqref{eq:mixedSW}. Let us verify that the dimensions match up. The staircase $[\yd[1.0]{2},\yd[1.0]{2}]$, as an irrep of $\mathcal{U}_3$, has dimension (dim.) 27 by applying Eq.~\eqref{eq:weyl-dim} and multiplicity (mult.) 1 (since there is a single path from the root $[\varnothing,\varnothing]$ to it in Fig.~\ref{fig:bratteli}). Similarly, $[\yd[1.0]{2},\yd[1.0]{1,1}]$ has dim. 10 and mult. 1; $[\yd[1.0]{1},\yd[1.0]{1}]$ has dim. 8 and mult. 4; $[\varnothing, \varnothing]$ has dim. 1 and mult. 2; $[\yd[1.0]{1,1},\yd[1.0]{2}]$ has dim. 10 and mult. 1. Indeed, $27\cdot 1 + 10 \cdot 1 + 8 \cdot 4 + 1 \cdot 2 + 10 \cdot 1  =3^4$.

Let us contrast the irrep structure of $\mathcal{A}^3_{2,2}$ to that of the symmetric group algebra $\mathcal{A}^3_{4}$. According to the $\mathfrak{S}_4$ Bratteli diagram in~\cite[Fig. 1]{grinko2022linear}, we find that $U^{\otimes 4}$ contains four irreps labeled by the Young diagrams $\yd[1.0]{4}$ (dim. 15, mult. 1), $\yd[1.0]{3,1}$ (dim. 15, mult. 3), $\yd[1.0]{2,2}$ (dim. 6 , mult. 2), and $\yd[1.0]{2,1,1}$ (dim. 3, mult. 3).  We can see that the block-structural difference between $\mathcal{U}_{n,m}^d$ ($\mathcal{A}^d_{n,m}$) and $\mathcal{U}_{n+m}^d$ ($\mathcal{A}^d_{n+m}$) already manifests in this small example.

\begin{remark} While the number of SYT of shape $\lambda$, which is equal to the multiplicity of the polynomial irrep $\mathcal{Q}^d_\lambda$ labeled by Young diagram $\lambda$, is provided by hook length formula (see Equation 28, \cite{bacon2007quantum}). To our knowledge, there is no known explicit formula for the number of up-down staircase tableaux (which equals the multiplicity of rational irreps in the mixed Schur-Weyl duality). There are explicit formulas in special cases, such as when $n=m$ or when $d \geq n+m$ (see \cite[Proposition 4.8]{stembridge1987rational}). However, this is not an issue for the quantum implementation since we only need an upper bound on $\operatorname{dim} \mathcal{P}^d_\gamma$ for this (we similarly did not have a closed formula for the total number of irreps, even in the normal Schur transform, see Remark \ref{remark:countirreps}). We have the following bound on $\operatorname{dim} \mcP^d_\gamma$ for $\gamma=[\alpha,\beta]$~\cite[Lemma 1.3.3]{bulgakova2020some}
\begin{equation}
    \operatorname{dim} \mcP^d_\gamma \leq \begin{pmatrix}
        n\\
        k
    \end{pmatrix} \begin{pmatrix}
        m\\
        k
    \end{pmatrix} k! \operatorname{dim} \mcP_\alpha \operatorname{dim} \mcP_\beta
\end{equation}
where $k \equiv n-\operatorname{len}(\alpha) = m - \operatorname{len}(\beta)$ and $ \operatorname{dim} \mcP_\alpha ,\operatorname{dim} \mcP_\beta$ can be computed using the hook length formula.
\end{remark}

\subsection{Example of the mixed Schur transform}\label{sec:example}
To illustrate the mixed Schur-Weyl duality and canonical bases developed so far, we give an example of the mixed Schur transform with $n=2,m=1$, and $d=2$. According to Fig.~\ref{fig:bratteli}, there are two irreps in $\Bar{U} \otimes U \otimes U$, which are labeled by staircases $\gamma = [\yd[1]{1},\varnothing]$ and $\mu= [\yd[1]{2},\yd[1]{1}]$. 

The irrep $\mathcal{Q}^2_\mu$ has multiplicity one, corresponding to the path
\begin{equation}
    \mathrm{path}_0 = [\varnothing,\varnothing] \rightarrow [\yd[1]{1},\varnothing] \rightarrow [\yd[1]{2},\varnothing] \rightarrow [\yd[1]{2},\yd[1]{1}].
\end{equation}

The irrep $\mathcal{Q}^2_\gamma$ has multiplicity two, corresponding to the paths
\begin{align}
    &\mathrm{path}_1 = [\varnothing,\varnothing] \rightarrow [\yd[1]{1},\varnothing] \rightarrow [\yd[1]{2},\varnothing] \rightarrow [\yd[1]{1},\varnothing],\\
    &\mathrm{path}_2 = [\varnothing,\varnothing] \rightarrow [\yd[1]{1},\varnothing] \rightarrow [\yd[1]{1,1},\varnothing] \rightarrow [\yd[1]{1},\varnothing].
\end{align}

According to Section~\ref{sec:GTbasis}, the irrep $\mathcal{Q}^2_\mu$ has dimension four and its GT basis elements are labeled by the following Gelfand patterns
\begin{align}
    &\mathrm{GT}^\mu_1 = \begin{pmatrix}
        2& &-1  \\
        &-1&
    \end{pmatrix}, \qquad 
    &\mathrm{GT}^\mu_2 = \begin{pmatrix}
        2& &-1  \\
        &0&
    \end{pmatrix}, \\
    &\mathrm{GT}^\mu_3 = \begin{pmatrix}
        2& &-1  \\
        &1&
    \end{pmatrix}, \qquad 
    &\mathrm{GT}^\mu_4 = \begin{pmatrix}
        2& &-1  \\
        &2&
    \end{pmatrix}.
\end{align}
The irrep $\mathcal{Q}^2_\gamma$ has dimension two and its GT basis elements are labeled by the following Gelfand patterns
\begin{align}
    &\mathrm{GT}^\gamma_1 = \begin{pmatrix}
        1& &0  \\
        &0&
    \end{pmatrix}, \qquad 
    &\mathrm{GT}^\gamma_2 = \begin{pmatrix}
        1& &0  \\
        &1&
    \end{pmatrix}.
\end{align}
Below, we will identify these 8 mixed Schur basis elements with row/column indices as follows
\begin{align}
    &\ket{1} = \ket{\gamma, \mathrm{GT}^\gamma_1, \mathrm{path}_1}, \qquad &\ket{2} = \ket{\gamma, \mathrm{GT}^\gamma_2, \mathrm{path}_1},\\
    &\ket{3} = \ket{\gamma, \mathrm{GT}^\gamma_1, \mathrm{path}_2}, \qquad &\ket{4} = \ket{\gamma, \mathrm{GT}^\gamma_2, \mathrm{path}_2},\\
    &\ket{5} = \ket{\mu, \mathrm{GT}^\mu_1, \mathrm{path}_0}, \qquad &\ket{6} = \ket{\mu, \mathrm{GT}^\mu_2, \mathrm{path}_0},\\
    &\ket{7} = \ket{\mu, \mathrm{GT}^\mu_3, \mathrm{path}_0}, \qquad &\ket{8} = \ket{\mu, \mathrm{GT}^\mu_4, \mathrm{path}_0}.
\end{align}
Following the formulas in Chapter 18.2.10 of~\cite{vilenkin1992representations} (also see Section~\ref{sec:recursive}), we obtain the following mixed Schur transform
\begin{equation}
    W_{\mathrm{Sch}(2,1)}= \begin{pmatrix}
       \frac{1}{\sqrt{2}}& 0 & 0& 0& 0& 0& \frac{1}{\sqrt{2}}& 0 \\
       0& \frac{1}{\sqrt{2}} & 0& 0& 0& 0& 0& \frac{1}{\sqrt{2}} \\
       -\frac{1}{\sqrt{6}} & 0 & 0& 0& 0& -\sqrt{\frac{2}{3}}& \frac{1}{\sqrt{6}}& 0 \\
       0& \frac{1}{\sqrt{6}} & -\sqrt{\frac{2}{3}}& 0& 0& 0& 0& -\frac{1}{\sqrt{6}} \\
       0& 0 & 0& 0& 1& 0& 0& 0 \\
       -\frac{1}{\sqrt{3}}& 0 & 0& 0& 0& \frac{1}{\sqrt{3}}& \frac{1}{\sqrt{3}}& 0 \\
       0& -\frac{1}{\sqrt{3}} & -\frac{1}{\sqrt{3}}& 0& 0& 0& 0& \frac{1}{\sqrt{3}}\\
       0& 0 & 0& 1& 0& 0& 0& 0
    \end{pmatrix}.
\end{equation}
We now verify the mixed Schur transform block-diagonalizes the mixed tensor $\Bar{U}\otimes U \otimes U$ (note that we have moved $\Bar{U}$ to the front to be compatible with the symmetry of the Choi matrix below). Let $U= \begin{pmatrix}
    a & b \\ 
    c & d
\end{pmatrix}$ and $\Bar{U} = (U^{-1})^\top = \frac{1}{ad -bc} \begin{pmatrix}
    d & -b \\
    -c & a
\end{pmatrix}$, we find
\begin{equation}
\begin{aligned}
    &W_{\mathrm{Sch}(2,1)} ( \bar{U} \otimes U \otimes U) W_{\mathrm{Sch}(2,1)}^\dagger \\
    &= \frac{1}{\mathfrak{D}}\begin{pmatrix}
       a\mathfrak{D}& b \mathfrak{D} & 0& 0& 0& 0& 0& 0 \\
       c \mathfrak{D}& d\mathfrak{D} & 0& 0& 0& 0& 0& 0 \\
       0& 0& a\mathfrak{D}& b\mathfrak{D}& 0& 0& 0& 0 \\
       0& 0 & c\mathfrak{D}& d\mathfrak{D}& 0& 0& 0& 0 \\
       0& 0 & 0& 0& a^3& \sqrt{3}a^2 b& \sqrt{3}ab^2& -b^3 \\
       0& 0 & 0& 0& \sqrt{3}a^2 c& a(2bc+ad)& b(bc+2ad)& -\sqrt{3}b^2 d \\
       0& 0 & 0& 0& \sqrt{3}ac^2& c(bc+2ad)& d(2bc+ad)& -\sqrt{3}bd^2 \\
       0& 0 & 0& 0& -c^3& -\sqrt{3}c^2 d& -\sqrt{3}cd^2& d^3
    \end{pmatrix},
\end{aligned}
\end{equation}
where $\mathfrak{D} = ad- bc$.

Next, we verify the mixed Schur transform also block-diagonalizes the centralizer of $ \Bar{U}\otimes U \otimes U$. For concreteness, we consider the set of $1$-to-$2$ qubit unitary-equivariant maps derived in Appendix F.4 of~\cite{nguyen2022theory}. Recall
the Choi state $J^{\mcN} \triangleq (\id \otimes \mcN) (\ket{\Omega}\bra{\Omega})$, where $\ket{\Omega}$ is the normalized EPR state, of a $1$-to-$2$ qubit unitary-equivariant map commutes with $\Bar{U}\otimes U \otimes U$ for any $U \in \mathbf{U}_2$. Ref.~\cite[Appendix F.4]{nguyen2022theory} showed that all such maps that are trace-preserving can be fully parameterized by 4 parameters $\mcN_{t,u,v,w}$
\begin{align}
    \mcN_{t,u,v,w}(\rho) = \frac{\id \otimes \id}{4} &+ \frac{t}{2} (XX + YY + ZZ) +\sum_{P \in \{X,Y,Z\}} \Tr(P \rho) \left( \frac{u}{2} \id \otimes P + \frac{v}{2} P \otimes \id  \right)\\
    & + \frac{w}{2} \left( (YZ-ZY) \Tr(X \rho) + (ZX-XZ) \Tr(Y \rho) + (XY-YX) \Tr(Z \rho) \right).
\end{align}
Under the mixed Schur transform, the corresponding Choi state is found to be
\begin{equation}
    W_{\mathrm{Sch}(2,1)} J^\mcN W_{\mathrm{Sch}(2,1)}^\dagger = \frac{1}{8} \begin{pmatrix}
       A& 0& B& 0& 0& 0& 0& 0 \\
       0& A& 0& B& 0& 0& 0& 0 \\
       C& 0& D& 0& 0& 0& 0& 0 \\
       0& C& 0& D& 0& 0& 0& 0 \\
       0& 0& 0& 0& E& 0& 0& 0 \\
       0& 0 & 0& 0& 0& E& 0& 0 \\
       0& 0 & 0& 0& 0& 0& E& 0 \\
       0& 0 & 0& 0& 0& 0& 0& E
    \end{pmatrix},
\end{equation}
where $A = 1+6u$, $B=-2\sqrt{3}(t+v+4iw)$, $C=-2\sqrt{3}(t+v-4iw)$, $D=1 - 4t - 2u + 4v$, and $E= 1 + 2t - 2u - 2v$. This is indeed in agreement with the irrep structure of $\mathcal{A}^2_{2,1}$.

Therefore, one way to implement these channels (after making sure the parameters are such that the Choi state describes a valid completely positive channel) is to prepare $J^\mcN$ in the mixed Schur basis, then apply the mixed Schur transform to convert it to computational basis, and finally implement the channel \textit{deterministically} as shown in Section~\ref{sec:implementation}.

\section{Quantum circuit for the mixed Schur transform}\label{sec:transform}
In this section, we apply the representation theory tools from Section~\ref{sec:reptheory} to give a quantum implementation of the mixed Schur transform. We start by a brief overview of how the Schur transform was constructed in~\cite{bacon2007quantum} and discuss what new is needed for mixed Schur. Then, we define the notion of the ``dual'' Clebsch-Gordan (CG) transform which is needed to handle the dual irrep $\Bar{U}$. Finally, we describe how the mixed Schur transform can be implemented by combining the (defining) CG and dual CG transforms. We defer the algorithm for the dual CG transform to Section~\ref{sec:recursive}.

\subsection{Schur transform versus mixed Schur transform: an overview}
The main idea of \cite{bacon2007quantum}'s construction is to perform the so-called Clebsch-Gordan (CG) transform sequentially by adding qudits one by one. Let us briefly summarize their construction. In representation theory, the CG transforms take as input a tensor product of two irreps of a group and produces an irrep decomposition. In the Schur transform setting, we are interested in the irreps of $\mathbf{U}_d$, and in particular, when one of the input irreps is the defining irrep $U$ (which is labeled by the Young diagram $(1) \equiv \square$). Due to its commonness, this particular decomposition is often just referred to as the CG transform in quantum information literature. We will follow this and instead refer to the general case as the \textit{generic} CG transform. Under the CG transform, the tensor product of a polynomial irrep $\lambda$ and the defining irrep is well-known to decompose as
\begin{equation}
    \mathcal{Q}^d_\lambda \otimes \mathcal{Q}^d_\square \cong  \bigoplus_{\nu \in \lambda + \square \atop \operatorname{len}(\nu) \leq d} \mathcal{Q}^d_\nu,
    \label{eq:CG}
\end{equation}
where we follow the notation in~\cite[Section 7.1]{harrow2005applications}, denoting by $\lambda + \square$ the set of valid Young diagrams of length at most $d$ resulting from adding one box to $\lambda$. Let the change of basis in the above decomposition be $W^\lambda_\mathrm{CG}$. The CG transform is mathematically defined by $W_\mathrm{CG} = \bigoplus_{\lambda} \ket{\lambda} \bra{\lambda} \otimes W^\lambda_\mathrm{CG}$, where the register $\ket{\lambda}$ `instructs' what decomposition to do and the block sum is over all irreps. Of course, to implement this transform on a quantum computer, we need to perform some cutoff in the block sum over $\lambda$. To achieve the Schur transform from CG transforms, \cite{bacon2007quantum} demonstrate that it is convenient to keep around the input irrep label register as follows, 
\begin{equation}
    W_\mathrm{CG}  \ket{\lambda}\ket{q_\lambda} \ket{i} = \sum_{\nu \in \lambda + \square} \sum_{q_\nu \in \mathcal{Q}^d_\nu} C^{\lambda,\nu}_{q_\lambda,i,q_\nu} \ket{\lambda} \ket{\nu} \ket{q_\nu},
    \label{eq:CGregisters}
\end{equation}
where $q_\lambda \in \mathcal{Q}^d_\lambda, i \in [d]$ and $C^{\lambda,\nu}_{q_\lambda,i,q_\nu}$ are the CG coefficients.
For example, in Fig.~\ref{fig:cascading}, the Schur transform starts by applying $W_\mathrm{CG}$ on $\ket{(\square,\varnothing), i_1,i_2}$ to get output registers of the form $\ket{(\square,\varnothing), \nu, q_\nu}$. Compared to Eq.~\eqref{eq:CGregisters}, $\square$ is identified with input irrep label $\lambda$ (the $\varnothing$ part is only needed for the mixed Schur transform), $i_1$ is identified with $q_\lambda$, and the output irrep $\nu$ takes values from $\square + \square = \{\yd[1]{2}, \yd[1]{1,1}\}$. The next CG transform acts on $\ket{\nu,q_\nu, i_3}$ to yield registers $\ket{\nu, \nu',q_{\nu'}}$ with $\nu' \in \nu + \square$. Continuing this procedure, we obtain the Schur transform. Now, \cite{bacon2007quantum} exploits the fact that keeping track of the $\mathbf{U}_d$ irreps that appear during the sequential Clebsch-Gordan transforms yields an orthogonal basis for the multiplicity space in Eq.~\eqref{eq:schur-weyl}. That is, in Fig. \ref{fig:cascading}, the irrep labels $\ket{\gamma^{(1)}}\hdots \ket{\gamma^{(n-1)}}$ indeed form a basis for the multiplicity space of irrep $\gamma^{(n)}$ (all of these are polynomial irreps, i.e., their second part is $\varnothing$). This is due to the properties of the subgroup-adapted basis and the Schur-Weyl duality, both being special cases of the theory described in Section~\ref{sec:bratteli}.

It is then a natural attempt to generalize~\cite{bacon2007quantum} to the mixed Schur case by constructing a quantum circuit for the ``dual'' CG transform, in which one input irrep is the dual defining representation $\Bar{U}$, and then exploiting the mixed Schur-Weyl duality to label the (mixed) Schur basis. The remaining of the text shows that this can indeed be achieved.

\subsection{Dual Clebsch-Gordan transform}\label{sec:dualCG}
Suppose we are given an irrep $\mathcal{Q}_\gamma^d$ and the dual irrep $\mathcal{Q}^d_{[\varnothing, \square]}$, how does the tensor product $\mathcal{Q}_\gamma^d \otimes \mathcal{Q}^d_{[\varnothing, \square]}$ decompose as a representation of $\mathbf{U}_d$? Following \cite{buhrman2022quantum}, we call this decomposition the \textit{dual} Clebsch-Gordan (CG) transform. Here we give some examples of this irrep decomposition and define a quantum implementation of the dual CG transform which will be given in Section \ref{sec:recursive}.

Recall from Section~\ref{sec:bratteli} the notation $\gamma - \square$ which denotes all valid staircases resulting from either removing one box from $\alpha$ or adding one box to $\beta$ (rule 2). Furthermore, let us denote by $\Gamma_d$ the set of all staircases of length $\leq d$. Paralleling Eq.~\eqref{eq:CG}, it turns out that
\begin{equation}
    \mathcal{Q}_\gamma^d \otimes \mathcal{Q}^d_{[\varnothing, \square]} \cong \bigoplus_{\mu \in \gamma - \square \atop 
    \operatorname{len}(\mu) \leq d} \mathcal{Q}_\mu^d.
\end{equation}
For example, consider the following decomposition in the case of $\mathbf{U}_3$
\begin{equation}
    \left[\yd[1]{3,1}, \yd[1]{2} \right] \otimes [\varnothing,\yd[1]{1}] \cong \left[\yd[1]{2,1}, \yd[1]{2} \right] \oplus \left[\yd[1]{3}, \yd[1]{2} \right] \oplus \left[\yd[1]{3,1}, \yd[1]{3} \right].
\end{equation}
Similar to Eq.~\eqref{eq:CGregisters}, we would like to construct a quantum implementation that effects the dual CG transform. We define it as an isometry $W_\mathrm{dCG}$ the performs the following
\begin{equation}
    W_\mathrm{dCG} \ket{\gamma} \ket{q_\gamma} \ket{i} =\sum_{\gamma' \in \gamma - \square \atop \operatorname{len}(\gamma') \leq d} \sum_{q_{\gamma'} \in \mathcal{Q}^d_{\gamma'}} C^{\gamma,\gamma'}_{q_\gamma,i,q_{\gamma'}}  \ket{\gamma} \ket{\gamma'} \ket{q_{\gamma'}},
\end{equation}
where we will choose $q_\gamma$ or $q_{\gamma'}$ to be the canonical basis elements labeled by Gelfand patterns (Section~\ref{sec:GTbasis}).

Above, we should understand 
$W_\mathrm{dCG}$ as the block-diagonal isometry
\begin{equation}
    W_\mathrm{dCG}= \sum_{\gamma}  \ket{\gamma}\bra{\gamma} \otimes W_\mathrm{dCG}^\gamma,
\end{equation}
which applies the dual CG transforms in parallel conditioned on the input irrep $\gamma$. The register $\ket{\gamma}$ will be chosen sufficiently large to label all relevant irreps appearing in the mixed Schur transform.

The $d=2$ case has been recently solved in~\cite{buhrman2022quantum}. The authors used the fact that the dual irrep $\Bar{U}$, in the case of $\mathbf{U}_2$, can be rewritten as a composition of the inverse determinant irrep $U \rightarrow \operatorname{det} U^{-1}$ and the defining irrep $U$ (up to equivalence), i.e.,
\begin{equation}
     \bar{U} = \operatorname{det} U^{-1} \begin{pmatrix}
       0 & -1 \\
       1 & 0
    \end{pmatrix} U \begin{pmatrix}
       0 & -1 \\
       1 & 0
    \end{pmatrix}^\dagger.
    \label{eq:dual-defining}
\end{equation}
Therefore, the dual CG transform of $\mathbf{U}_2$ is essentially equivalent to the CG transform up to the local unitary $V = \begin{pmatrix}
       0 & -1 \\
       1 & 0
    \end{pmatrix}$.
However, we emphasize that this is only true when $d=2$. This is because, specifically for $\mathbf{SU}_2$ ($\operatorname{det} U =1$), the dual irrep $\bar{U}$ is isomorphic to the defining irrep $U$ via the unitary $V$. Whereas, $\bar{U}$ is generally not isomorphic to $U$ for $\mathbf{SU}_d$ with $d>2$.

For general $d$, there are known algorithms for computing the generic CG transform~\cite{koike1989decomposition, biedenharn1968pattern}. However, a naive construction of the dual CG transform by explicitly computing the entries, as done above, takes time $\mcO(\operatorname{dim} \mathcal{Q}^d_\gamma ) = (n+m)^{\mcO(d^2)}$\footnote{This is because the Gelfand pattern has $d(d+1)/2$ entries whose values may range from $-m$ to $n$.}. This is unfavorable for large $d$. In Section~\ref{sec:recursive}, we present a recursive construction of the dual CG transform via the Wigner-Eckart theorem~\cite{biedenharn1968pattern}, adapting the construction in~\cite{bacon2007quantum}, for a circuit complexity of $\mcO(d^4 \polylog(n,m,d,1/\varepsilon))$.

\begin{figure}
\centering\includegraphics[width=0.8\textwidth]{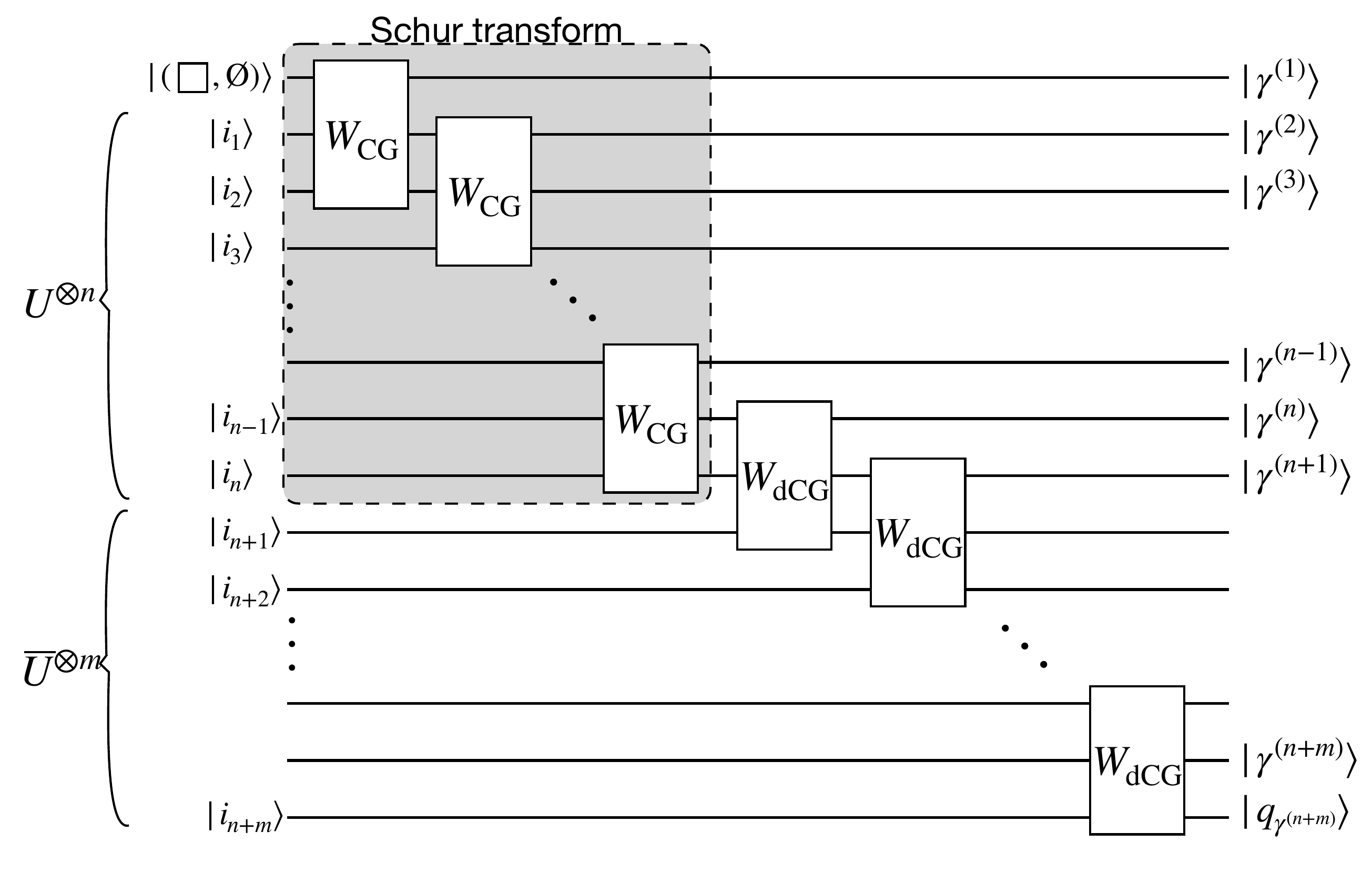}
    \caption{Clebsch-Gordan (CG) transforms $W_\mathrm{CG}$ are cascaded to produce the Schur transform (gray box)~\cite{bacon2007quantum}. Similarly, we construct a ``dual'' CG transform $W_\mathrm{dCG}$ for the dual defining representation $\bar{U}$ and obtain the mixed Schur transform. Here we perform all the CG transforms first and then the dual CG transforms, but the ordering of them can be arbitrary.}
    \label{fig:cascading}
\end{figure}

\subsection{Mixed Schur transform from CG and dual CG transforms}
Provided with efficient circuit constructions for the CG and dual CG transforms, we concatenate them to obtain the mixed Schur transform. Due to the mixed Schur-Weyl duality, the mixed Schur transform block-diagonalizes both the algebras $\mathcal{U}_{n,m}^d$ and $\mathcal{A}_{n,m}^d$.

Suppose we start with an input vector $\ket{i_1,\hdots,i_{n+m}}$, we combine the subsystems using
the CG transform, one at a time, up to subsystem $n$ as done in \cite{bacon2007quantum} (gray box in Fig.~\ref{fig:cascading}). We refer to~\cite{bacon2007quantum} for more details of this step.

After having performed the above normal Schur transform, the system is now of the form $\ket{\gamma^{(1)}}\hdots \ket{\gamma^{(n)}}\ket{q_{\gamma^{(n)}}} $, where $\gamma^{(k)}=[\alpha^{(k)},\varnothing]$. Continuing, we apply the dual CG transform on $\ket{[\alpha^{(n)},\varnothing]}\ket{q_{\gamma^{(n)}}} \ket{i_{n+1}}$ to produce $\ket{[\alpha^{(n)},\varnothing]} \ket{[\alpha^{(n+1)},\beta^{(n+1)}]}\ket{q_{\gamma^{(n+1)}}}$. An illustration of one step in this procedure is shown in the figure below.
\begin{figure}[H]
    \centering
    \includegraphics[width=0.5\textwidth]{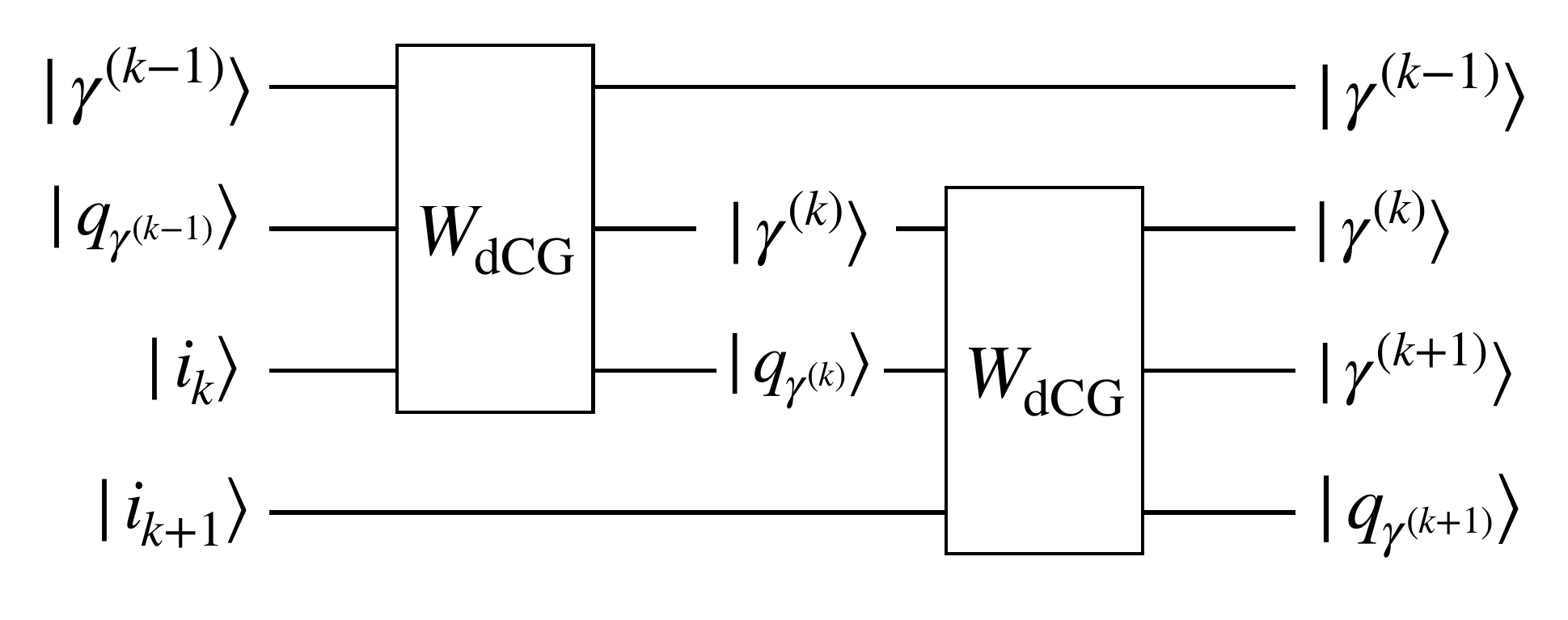}
    \caption{Concatenation of two dual CG transforms.}
    \label{fig:concatenateCG}
\end{figure}
Continuing adding qudits and applying the dual CG transform, we eventually obtain registers in the form of
\begin{equation}
    \ket{\gamma^{(1)}}\hdots \ket{\gamma^{(n+m)}}\ket{q_{\gamma^{(n+m)}}},
\end{equation}
but $\ket{\gamma^{(1)}}\hdots \ket{\gamma^{(n+m-1)}}$ describes exactly a path in the Bratteli diagram of $\mathcal{A}_{n,m}^d$ to $\gamma^{(n+m)}$, as described in Section \ref{sec:bratteli}! Hence, $\ket{\gamma^{(1)}}\hdots \ket{\gamma^{(n+m-1)}}$ represent a basis element $\ket{p_{\gamma^{(n+m)}}}$ of $\mathcal{P}^d_{\gamma^{(n+m)}}$, and we achieve registers of the form $\ket{\gamma,q_\gamma,p_\gamma}$in the RHS of Eq.~\eqref{eq:def-mixedschur} as desired.

The total complexity of the mixed Schur transform is $nT_\mathrm{CG}+ mT_{dCG}$, where $T_\mathrm{CG}=\mcO(d^3\operatorname{polylog}(n,d,1/\varepsilon))$ is the complexity of the CG transform derived in \cite{bacon2007quantum} and $T_\mathrm{dCG}=\mcO(d^4\operatorname{polylog}(n,m,d,1/\varepsilon))$ is the complexity of the dual CG transform described in Section~\ref{sec:recursive}.

\begin{remark} Above, we have presented the mixed Schur transform by first grouping irreps $U$ and invoking the Schur transform, and then adding $\Bar{U}$. However, they can in fact be combined in any arbitrary order as discussed in Remark~\ref{remark:insertionrule}. This is due to the flexible insertion rules of up-down staircase tableaux proved in~\cite{stembridge1987rational}.
\end{remark}

\section{Recursive circuit for the dual Clebsch-Gordan transform}\label{sec:recursive}
As discussed in the previous section, a naive implementation of the CG or dual CG transform would have a complexity of $(n+m)^{\mcO(d^2)}$. Bacon, Chuang, and Harrow~\cite{bacon2007quantum} improved the complexity of the CG transform to $\poly(d,\log n)$ by using a recursive construction based on the theory of tensor operators. Here we transfer their ideas to the case of dual CG transform, and in fact, more general classes of irreps, and further work out some details omitted in their paper.

We first define some convenient staircase notations that will be used in this section. Let $\varnothing \triangleq [\varnothing,\varnothing]$, $\square \triangleq [\square, \varnothing]$, and $\overline{\square} \triangleq [\varnothing, \square]$. For any $\mathbf{U}_d$ irrep $\gamma=[\alpha,\beta]$ and $j \in [d]$, denote by $\gamma - e_j$ the valid staircase resulting from, if $j \leq \operatorname{len}(\alpha)$, removing a box on row $j$ of $\alpha$, or, adding a box on row $j-\operatorname{len}(\alpha)$ of $\beta$. In other words, $\gamma - e_j \in \gamma - \square$ as defined in rule 2 in Section~\ref{sec:bratteli}. Denote by $\Gamma_d$ the set of all staircases of length $\leq d$.

\subsection{More representation theory facts}
For this section, we will additionally need the following facts from representation theory.

\begin{definition}[Isotypic decomposition, Proposition 4.1.15~\cite{goodman2009symmetry}] Let $V$ be a reducible representation of an algebra $\mcA$ (think group algera), such that $V \cong \bigoplus_{\lambda \in \hat{\mcA}} \otimes \mathbb{C}^{m_\lambda}$ where $\hat{\mcA}$ is the set of irreps and $m_\lambda$ is the multiplicty of $\lambda$. It holds that
\begin{equation*}
    \mathbb{C}^{m_\lambda} \cong \operatorname{Hom}_\mcA(V_\lambda, V),
\end{equation*}
where $\operatorname{Hom}_\mcA(V_\lambda, V)$ denotes the set of linear operators $O: V_\lambda \mapsto V$ that are invariant under the actions of $\mcA$ on $V_\lambda$ and $V$.
\label{def:isotypic}
\end{definition}

\begin{lemma}[Schur, Lemma 4.1.4~\cite{goodman2009symmetry}] Let $(r_1, V_1)$ and $(r_2,V_2)$ be irreps of an associative algebra $\mcA$. Assume that $V_1,V_2$ have countable dimensions over $\mcC$ Then
\begin{equation*}
    \operatorname{dim} \operatorname{Hom}_{\mathcal{A}}(V_1, V_2)=\left\{\begin{array}{l}
1 \text { if }(r_1, V_1) \cong(r_2, V_2) \\
0 \text { otherwise}
\end{array}\right. .
\end{equation*}
\label{lemma:Schur}
\end{lemma}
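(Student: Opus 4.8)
The plan is to argue directly from the definition of a homomorphism of $\mcA$-representations, exploiting that the kernel and image of such a map are subrepresentations, and that an irrep has no proper nonzero subrepresentations. First I would take an arbitrary $f \in \operatorname{Hom}_{\mcA}(V_1, V_2)$, i.e. a linear map with $f \circ r_1(a) = r_2(a) \circ f$ for all $a \in \mcA$. From the intertwining relation, $\ker f$ is an $\mcA$-invariant subspace of $V_1$: if $v \in \ker f$ then $f(r_1(a)v) = r_2(a) f(v) = 0$. Since $(r_1, V_1)$ is irreducible, either $\ker f = V_1$, in which case $f = 0$, or $\ker f = 0$, in which case $f$ is injective. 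Dually, $\operatorname{Img} f$ is an $\mcA$-invariant subspace of $V_2$, so by irreducibility of $(r_2, V_2)$ it is either $0$ (again $f=0$) or all of $V_2$. Hence any nonzero $f$ is a bijection, i.e. an isomorphism of $\mcA$-representations; this already shows $\operatorname{Hom}_{\mcA}(V_1, V_2) = 0$ when $V_1 \not\cong V_2$.

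For the case $(r_1, V_1) \cong (r_2, V_2)$, it suffices to show $\operatorname{dim} \operatorname{Hom}_{\mcA}(V,V) = 1$ for a single irrep $(r,V)$, since composing with a fixed isomorphism $V_1 \to V_2$ gives a linear bijection $\operatorname{Hom}_{\mcA}(V_1,V_1) \to \operatorname{Hom}_{\mcA}(V_1,V_2)$. The identity map $\id_V$ lies in $\operatorname{Hom}_{\mcA}(V,V)$, so the space is at least one-dimensional. For the upper bound, let $f \in \operatorname{Hom}_{\mcA}(V,V)$ be arbitrary. The standard argument uses that $f$ has an eigenvalue $c$ over the algebraically closed field $\mcC$; then $f - c\,\id_V$ is again an intertwiner, and it has nonzero kernel (an eigenvector), so by the first part of the argument $f - c\,\id_V = 0$, i.e. $f = c\,\id_V$. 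Thus every intertwiner is a scalar multiple of the identity and $\operatorname{dim}\operatorname{Hom}_{\mcA}(V,V) = 1$.

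The one subtlety to address is the hypothesis that $V_1, V_2$ have only countable dimension, which is needed precisely because $V$ may be infinite-dimensional (here $V$ could be an irrep of the non-compact group $\mathbf{GL}_d$ or, more to the point, we want the statement to apply uniformly). In infinite dimensions an operator need not have an eigenvalue, so the eigenvalue step fails as stated; instead one argues by contradiction. Suppose $f \in \operatorname{Hom}_{\mcA}(V,V)$ is not a scalar. Then for every $c \in \mcC$ the operator $f - c\,\id_V$ is a nonzero intertwiner, hence invertible by the first part, so $(f - c\,\id_V)^{-1}$ exists in the commutant. Fixing any nonzero $v \in V$, the vectors $(f - c\,\id_V)^{-1} v$ for distinct $c$ would have to be linearly independent over $\mcC$: a nontrivial linear relation among finitely many of them can be cleared of denominators to produce a nonzero polynomial $p$ with $p(f)\,v = 0$, and since $\mcC$ is algebraically closed $p$ factors into linear terms, forcing some $f - c_i\,\id_V$ to be non-injective, a contradiction. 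But then $V$ contains an uncountable linearly independent set, contradicting countable-dimensionality. Hence $f$ is scalar. I expect this last point — the careful handling of the infinite-dimensional case via the resolvent and countability — to be the only place requiring real attention; the finite-dimensional heart of the argument is entirely routine.
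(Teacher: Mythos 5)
Your proof is correct, and it is essentially the standard argument (the paper itself gives no proof, merely citing \cite{goodman2009symmetry}, whose Lemma 4.1.4 is proved by exactly this Dixmier-style resolvent trick: the uncountable family $(f - c\,\id_V)^{-1}v$ is linearly independent, contradicting countable dimension). Both the kernel/image step and the reduction of the isomorphic case to $\operatorname{Hom}_{\mcA}(V,V)$ are handled correctly, so there is nothing to add.
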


Consider any irreps $\mu,\lambda$ of an algebra $\mcA$. The space of linear operators $\operatorname{Hom}(V_\mu, V_\lambda)$ is also a representation since $\operatorname{Hom}(V_\mu, V_\lambda) \cong V_\mu ^* \otimes V_\lambda$ . The following fact follows from applying Definition~\ref{def:isotypic} to the irrep $\operatorname{Hom}(V_\mu, V_\lambda)$:
\begin{equation}
\begin{aligned}
\operatorname{Hom}(V_\mu, V_\lambda) & \cong \bigoplus_{\nu \in \hat{\mcA}} V_\nu \otimes \operatorname{Hom}_\mcA \left(V_\nu, \operatorname{Hom}\left(V_\mu, V_\lambda \right)\right) \\
& \cong \bigoplus_{\nu \in \hat{\mcA}} V_\nu \otimes \operatorname{Hom}_\mcA \left(V_\nu,V_\mu^* \otimes V_\lambda \right) \\
& \cong \bigoplus_{\nu \in \hat{\mcA}} V_\nu\otimes\left(V_\mu^* \otimes 
 V_\nu ^* \otimes V_\lambda \right)_\mcA \\
& \cong \bigoplus_{\nu \in \hat{\mcA}} V_\nu \otimes \operatorname{Hom}_\mcA \left(V_\mu \otimes V_\nu, V_\lambda \right).
\end{aligned}
\label{eq:irrep-multiplicity}
\end{equation}

\subsection{Tensor operators and the Wigner-Eckart theorem}
Let us start by reviewing the notion of irreducible tensor operators and the Wigner-Eckart theorem~\cite{biedenharn1968pattern, louck1970recent}, which relates the former to the generic CG transform. We will use this theorem crucially in the next subsection to construct the dual Clebsch-Gordan transform.

Let $\mathcal{Q}^d_\mu$ and $\mathcal{Q}^d_\lambda$ be irreps of $\mathbf{U}_d$ labelled by staircases $\mu, \lambda$. Observe that $\operatorname{Hom}(\mathcal{Q}^d_\mu, \mathcal{Q}^d_\lambda)$ is a representation of $\mathbf{U}_d$: for any $U \in \mathbf{U}_d$, $U$ acts on $\operatorname{Hom}(\mathcal{Q}^d_\mu, \mathcal{Q}^d_\lambda)$ by mapping $O \rightarrow \mathbf{q}^d_{\lambda}(U) O \mathbf{q}^d_{\mu}(U)^{-1}$ for $O \in \operatorname{Hom}(\mathcal{Q}^d_\mu, \mathcal{Q}^d_\lambda)$. Consider an irrep in this representation,
an \textit{irreducible tensor operator} is defined to be a collection of operators $T = \{T_1, T_2, \hdots\} \subset \operatorname{Hom}(\mathcal{Q}^d_\mu, \mathcal{Q}^d_\lambda)$ that forms a basis for an irrep of $\operatorname{Hom}(\mathcal{Q}^d_\mu, \mathcal{Q}^d_\lambda)$.
\begin{definition} Let $\mathcal{Q}^d_\bullet$, be irreps of $\mathbf{U}_d$ labelled by staircases $\bullet \in \{\mu, \lambda,\nu\}$.
    A set of operators $\mathbf{T}^{\nu}=\{T_{q_\nu}\}_{q_\nu \in \mathcal{Q}^d_\nu} \subset \operatorname{Hom}(\mathcal{Q}^d_\mu, \mathcal{Q}^d_\lambda)$, where $q_\nu$ label the GT basis elements of $\mathcal{Q}^d_\nu$, is an irreducible tensor operator corresponding to irrep $\nu$ if for any $U \in \mathbf{U}_d$ we have that
\begin{equation}
    \mathbf{q}^d_{\lambda}(U) T_{q_\nu} \mathbf{q}^d_{\mu}(U)^{-1}= \sum_{q_{\nu}' \in \mathcal{Q}^d_\nu} \bra{q_{\nu}'} \mathbf{q}^d_{\nu}(U) \ket{q_\nu} T_{q_{\nu}'}.
\end{equation}
\end{definition}

The Wigner-Eckart theorem provides a connection between the entries of an irreducible tensor operator and the CG transform.

\begin{theorem}[Wigner-Eckart, Theorem 7.2 in  \cite{harrow2005applications}]
Let $\mathbf{T}^{\nu}=\{T_{q_\nu}\}_{q_\nu \in \mathcal{Q}^d_\nu} \subset \operatorname{Hom}(\mathcal{Q}^d_\mu, \mathcal{Q}^d_\lambda)$ be an irreducible tensor operator.
Let $W_\mathrm{CG}^{\mu,\nu}$ denote the Clebsch-Gordan transform that decomposes $\mathcal{Q}^d_\mu \otimes \mathcal{Q}^d_\nu$ into irreps.
Furthermore, suppose that $\{\xi\}$ is an orthogonal basis of the multiplicity space $\operatorname{Hom}_{\mathbf{U}_d}(\mathcal{Q}^d_\mu \otimes \mathcal{Q}^d_\nu, \mathcal{Q}^d_\lambda)$ of $\mathcal{Q}^d_\lambda$ in $\mathcal{Q}^d_\mu \otimes \mathcal{Q}^d_\nu$. Then there exist scalars $\hat{T}_\xi^{\mu,\nu,\lambda}$ such that for any GT elements $\ket{q_\mu} \in \mathcal{Q}^d_{\mu}$, $\ket{q_\nu} \in \mathcal{Q}^d_{\nu}$, and $\ket{q_\lambda} \in \mathcal{Q}^d_{\lambda}$, it holds that
\begin{equation}
    \bra{q_\lambda} T_{q_\nu} \ket{q_\mu} = \sum_{\xi \in \operatorname{Hom}_{\mathbf{U}_d}(\mathcal{Q}^d_\mu \otimes \mathcal{Q}^d_\nu, 
    \mathcal{Q}^d_\lambda)} \hat{T}^{\mu,\nu,\lambda}_\xi \bra{\lambda, q_\lambda, \xi} W^{\mu,\nu}_\mathrm{CG} \ket{q_\mu,q_\nu}.
\end{equation}
\label{thm:wigner-eckart}
\end{theorem}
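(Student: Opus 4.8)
The plan is to recast the whole collection $\mathbf{T}^\nu$ as a single intertwiner and then read off its coordinates in the multiplicity space. First I would define the linear map
\[
\Phi \colon \mathcal{Q}^d_\mu \otimes \mathcal{Q}^d_\nu \longrightarrow \mathcal{Q}^d_\lambda, \qquad \Phi\big(\ket{q_\mu}\otimes\ket{q_\nu}\big) := T_{q_\nu}\ket{q_\mu},
\]
extended bilinearly. The first step is to check that the covariance property defining an irreducible tensor operator is \emph{equivalent} to $\Phi$ being $\mathbf{U}_d$-equivariant, i.e.\ $\Phi \circ\big(\mathbf{q}^d_\mu(U)\otimes\mathbf{q}^d_\nu(U)\big) = \mathbf{q}^d_\lambda(U)\circ\Phi$ for every $U \in \mathbf{U}_d$. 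This is a short manipulation: right-multiply the tensor-operator identity $\mathbf{q}^d_\lambda(U) T_{q_\nu}\mathbf{q}^d_\mu(U)^{-1} = \sum_{q_\nu'}\bra{q_\nu'}\mathbf{q}^d_\nu(U)\ket{q_\nu}\, T_{q_\nu'}$ by $\mathbf{q}^d_\mu(U)$, apply both sides to $\ket{q_\mu}$, and use $\sum_{q_\nu'}\bra{q_\nu'}\mathbf{q}^d_\nu(U)\ket{q_\nu}\,\ket{q_\nu'} = \mathbf{q}^d_\nu(U)\ket{q_\nu}$ together with linearity of $\Phi$ in its second factor; all steps reverse. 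Hence $\Phi \in \operatorname{Hom}_{\mathbf{U}_d}(\mathcal{Q}^d_\mu\otimes\mathcal{Q}^d_\nu, \mathcal{Q}^d_\lambda)$, which is finite-dimensional by Schur's lemma (Lemma~\ref{lemma:Schur}) applied to the irreducible summands of $\mathcal{Q}^d_\mu\otimes\mathcal{Q}^d_\nu$.

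Since $\{\xi\}$ is an orthogonal basis of exactly this space, I would then expand $\Phi = \sum_\xi \hat{T}^{\mu,\nu,\lambda}_\xi\,\xi$; this both defines the scalars $\hat{T}^{\mu,\nu,\lambda}_\xi$ and forces them to be unique (linear independence of the $\xi$), and they are the ``reduced matrix elements.'' It remains to connect a single basis intertwiner $\xi$ to the Clebsch--Gordan matrix entry. Here I would invoke the isotypic decomposition (Definition~\ref{def:isotypic}), which identifies the multiplicity register of $\mathcal{Q}^d_\lambda$ inside $\mathcal{Q}^d_\mu\otimes\mathcal{Q}^d_\nu$ with $\operatorname{Hom}_{\mathbf{U}_d}(\mathcal{Q}^d_\lambda, \mathcal{Q}^d_\mu\otimes\mathcal{Q}^d_\nu)$, and transport it along the adjoint: normalizing each $\xi$ so that $\xi\xi^\dagger = \id_{\mathcal{Q}^d_\lambda}$ (a rescaling harmless for an existence statement), the isometries $\xi^\dagger$ form the canonical basis of that register, and with the natural normalization fixing $(W^{\mu,\nu}_\mathrm{CG})^\dagger\ket{\lambda,q_\lambda,\xi} = \xi^\dagger\ket{q_\lambda}$, taking the adjoint pairing with $\ket{q_\mu}\otimes\ket{q_\nu}$ and conjugating gives $\bra{\lambda,q_\lambda,\xi}W^{\mu,\nu}_\mathrm{CG}\ket{q_\mu,q_\nu} = \bra{q_\lambda}\,\xi\big(\ket{q_\mu}\otimes\ket{q_\nu}\big)$. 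Substituting the expansion of $\Phi$ and pairing with $\bra{q_\lambda}$,
\[
\bra{q_\lambda}T_{q_\nu}\ket{q_\mu} = \bra{q_\lambda}\Phi\big(\ket{q_\mu}\otimes\ket{q_\nu}\big) = \sum_\xi \hat{T}^{\mu,\nu,\lambda}_\xi\, \bra{\lambda,q_\lambda,\xi}W^{\mu,\nu}_\mathrm{CG}\ket{q_\mu,q_\nu},
\]
which is the claim.

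The first paragraph's computation is routine; the crux is the identification in the second step --- making rigorous that the index $\xi$ labelling the columns of the \emph{concrete} isometry $W^{\mu,\nu}_\mathrm{CG}$ is literally the same object as the basis vector $\xi$ of the \emph{abstract} intertwiner space used to expand $\Phi$. This needs one fixed convention pinning down how the multiplicity register is identified with $\operatorname{Hom}_{\mathbf{U}_d}(\mathcal{Q}^d_\mu\otimes\mathcal{Q}^d_\nu,\mathcal{Q}^d_\lambda)$ via the adjoint, plus the observations that this identification is unitary (so orthogonal bases go to orthogonal bases) and nondegenerate on each isotypic block, the latter again by Schur's lemma (Lemma~\ref{lemma:Schur}). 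Once those conventions are in place, the rest is the display above.
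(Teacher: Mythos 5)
The paper does not actually prove this theorem---it defers to Chapter 18 of Vilenkin--Klimyk and Section 7 of Harrow's thesis---and your argument is exactly that standard proof: package the family $\{T_{q_\nu}\}$ as a single intertwiner $\Phi \in \operatorname{Hom}_{\mathbf{U}_d}(\mathcal{Q}^d_\mu\otimes\mathcal{Q}^d_\nu,\mathcal{Q}^d_\lambda)$, expand it in the basis $\{\xi\}$, and identify the Clebsch--Gordan multiplicity register with that Hom-space via the adjoint isometries $\xi^\dagger$ (Schur's lemma making the $\xi\xi'^\dagger$ scalar and the orthogonality carry over). Your equivariance computation and your handling of the normalization/labelling convention are correct, so the proposal is sound and takes essentially the same route as the proof the paper cites.
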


We refer to Chapter 18 of~\cite{vilenkin1992representations} for a mathematical proof and to Section 7 of~\cite{harrow2005applications} for another proof written in quantum information notations. It is worth noting that the theorem makes no assumptions about whether the irreps are polynomial or rational.

The Wigner-Eckart theorem relates the entries of a tensor operator to the entries of the CG transform (for input irreps $\mu, \nu$) via the so-called \emph{reduced Wigner coefficients} $\hat{T}^{\mu,\nu,\lambda}_\xi$ which, notably, are independent of the element $q_\nu$. Let us also note that this theorem is a generalization of the Schur's lemma, which corresponds to the special case when $\mathbf{T}^\nu$ transforms under the identity representation.

To apply this theorem to the dual CG transform, we consider the following specialized version when $\nu = \overline{\square}$. We allow $\mu$ to be arbitrary and thus insert the irrep label $\ket{\mu}$ as part of the input of the dual CG transform. Let $W_\mathrm{dCG}=\sum_{\mu} \ket{\mu}\bra{\mu}\otimes W_\mathrm{CG}^{\mu,\overline{\square}}$. In addition, we already see from the previous section that combining the irreps $\mu$ and $\overline{\square}$ only yields the irreps in $\mu - \square$. Furthermore, the output irreps are multiplicity-free, so we will not need the multiplicity label $\ket{\xi}$ in the output. We thus obtain the following specialized version of the Wigner-Eckart theorem.

\begin{theorem} Let $\gamma \in \mu - \square$ be a valid irrep appearing in the decomposition of $\mathcal{Q}^d_\mu \otimes \mathcal{Q}^d_{\overline{\square}}$. Let $\mathbf{T}^{\overline{\square}}=\{T_{i}\}_{i \in \mathcal{Q}^d_{\overline{\square}}} \subset \operatorname{Hom}(\mathcal{Q}^d_\mu, \mathcal{Q}^d_\gamma)$ be an irreducible tensor operator.
Then there exist scalars $\hat{T}^{\mu,\gamma}$ such that for any GT elements $\ket{q_\mu} \in \mathcal{Q}^d_{\mu}$, $\ket{i} \in \mathcal{Q}^d_{\overline{\square}}$, and $\ket{q_\gamma} \in \mathcal{Q}^d_{\gamma}$, it holds that
\begin{equation}
    \bra{q_\gamma} T_{i} \ket{q_\mu} = \hat{T}^{\mu,\gamma} \bra{\mu, \gamma, q_\gamma} W_\mathrm{dCG} \ket{\mu,q_\mu,i}.
\end{equation}
\label{thm:dualWE}    
\end{theorem}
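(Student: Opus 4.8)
The plan is to derive Theorem~\ref{thm:dualWE} as a direct specialization of the general Wigner-Eckart theorem (Theorem~\ref{thm:wigner-eckart}) by substituting $\nu = \overline{\square}$. First I would instantiate Theorem~\ref{thm:wigner-eckart} with $\lambda = \gamma$ and $\nu = \overline{\square}$, so that the irreducible tensor operator $\mathbf{T}^{\overline{\square}} = \{T_i\}_{i \in \mathcal{Q}^d_{\overline{\square}}} \subset \operatorname{Hom}(\mathcal{Q}^d_\mu, \mathcal{Q}^d_\gamma)$ becomes the object of interest, and the identity reads
\begin{equation*}
    \bra{q_\gamma} T_i \ket{q_\mu} = \sum_{\xi \in \operatorname{Hom}_{\mathbf{U}_d}(\mathcal{Q}^d_\mu \otimes \mathcal{Q}^d_{\overline{\square}}, \mathcal{Q}^d_\gamma)} \hat{T}^{\mu,\overline{\square},\gamma}_\xi \bra{\gamma, q_\gamma, \xi} W^{\mu,\overline{\square}}_\mathrm{CG} \ket{q_\mu, i}.
\end{equation*}

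The key step is then to eliminate the multiplicity label $\xi$ and collapse the sum to a single term. For this I would invoke the multiplicity-freeness of the dual CG decomposition established earlier: Section~\ref{sec:bratteli} and the Bratteli-diagram branching rules give that for $\gamma \in \mu - \square$, the irrep $\mathcal{Q}^d_\gamma$ appears in $\mathcal{Q}^d_\mu \otimes \mathcal{Q}^d_{\overline{\square}}$ with multiplicity exactly one. By Lemma~\ref{lemma:Schur} (Schur's lemma, applied as in Eq.~\eqref{eq:irrep-multiplicity} to identify $\operatorname{Hom}_{\mathbf{U}_d}(\mathcal{Q}^d_\mu \otimes \mathcal{Q}^d_{\overline{\square}}, \mathcal{Q}^d_\gamma)$ with the multiplicity space), this $\operatorname{Hom}$ space is one-dimensional, so the basis $\{\xi\}$ has a single element, the coefficient $\hat{T}^{\mu,\overline{\square},\gamma}_\xi$ is just a single scalar which I rename $\hat{T}^{\mu,\gamma}$, and the $\ket{\xi}$ register can be dropped from the notation. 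Finally I would identify $\bra{\gamma, q_\gamma} W^{\mu,\overline{\square}}_\mathrm{CG} \ket{q_\mu, i}$ with $\bra{\mu, \gamma, q_\gamma} W_\mathrm{dCG} \ket{\mu, q_\mu, i}$, which is literally the definition $W_\mathrm{dCG} = \sum_\mu \ket{\mu}\bra{\mu} \otimes W^{\mu,\overline{\square}}_\mathrm{CG}$ given just before the statement — the extra $\ket{\mu}$ registers on both sides match trivially since $W_\mathrm{dCG}$ acts as the identity on them.

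There is essentially no hard part here; the theorem is a bookkeeping specialization, and the substantive content (existence of irreducible tensor operators, the Wigner-Eckart correspondence, and the multiplicity-free branching) has already been set up. If anything, the only point requiring a sentence of care is confirming that the remark after Theorem~\ref{thm:wigner-eckart} — that no polynomial-versus-rational assumption is needed — indeed covers the case at hand, since here $\mu$ and $\gamma$ may be genuinely rational staircases and $\overline{\square} = [\varnothing, \square]$ is the rational dual irrep; this is exactly the generality in which Vilenkin-Klimyk~\cite{vilenkin1992representations} prove the theorem, so it applies verbatim. One should also note, for completeness, that such an irreducible tensor operator $\mathbf{T}^{\overline{\square}}$ with a nonzero reduced coefficient exists precisely because $\mathcal{Q}^d_\gamma$ does occur in $\mathcal{Q}^d_\mu \otimes \mathcal{Q}^d_{\overline{\square}}$, i.e.\ $\operatorname{Hom}_{\mathbf{U}_d}(\mathcal{Q}^d_\gamma, \operatorname{Hom}(\mathcal{Q}^d_\mu, \mathcal{Q}^d_{\overline{\square}}{}^* )) \neq 0$ by Eq.~\eqref{eq:irrep-multiplicity}, which is what makes the statement non-vacuous.
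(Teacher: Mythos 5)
Your proposal is correct and follows essentially the same route as the paper: the paper also obtains this theorem by specializing Theorem~\ref{thm:wigner-eckart} to $\nu=\overline{\square}$, collapsing the sum over $\xi$ via the multiplicity-free decomposition $\mathcal{Q}^d_\mu\otimes\mathcal{Q}^d_{\overline{\square}}\cong\bigoplus_{\gamma\in\mu-\square}\mathcal{Q}^d_\gamma$, and absorbing the $\ket{\mu}$ register through the block-diagonal definition of $W_\mathrm{dCG}$. (Only your final non-vacuousness aside misstates the relevant $\operatorname{Hom}$ space, which should be $\operatorname{Hom}_{\mathbf{U}_d}(\mathcal{Q}^d_\mu\otimes\mathcal{Q}^d_{\overline{\square}},\mathcal{Q}^d_\gamma)\neq 0$, but this does not affect the argument.)
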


\subsection{Dual CG transform using the Wigner-Eckart theorem}
We now apply the machinery in the previous subsection to construct an efficient quantum circuit for the dual CG transform. The strategy will be as follows. We first write the dual CG transform in terms of $\mathbf{U}_d$ tensor operators. Then we decompose these tensor operators into $\mathbf{U}_{d-1}$ tensor operators. Then we invoke the Wigner-Eckart theorem to relate these operators to the $\mathbf{U}_{d-1}$ dual CG transform. These steps yield an expression of the $\mathbf{U}_d$ dual CG transform in terms of the $\mathbf{U}_{d-1}$ dual CG transform and the reduced Wigner coefficients, which are efficiently computable. Recursively performing this procedure yields the desired dual CG transform. This decomposition is well-established in the representation theory of classical groups and was first used to construct the CG transform in~\cite{bacon2007quantum}. Here we show that it also works in the dual CG case where $\overline{\square}$ is an input. In fact, it works with a more general family of irreps. We provide more details in the parallel computation of the reduced Wigner coefficients that was omitted in~\cite{bacon2007quantum}.

Since we are recursively reducing the dimension of the unitary group, let us introduce a superscript used only in this subsection to label the dimension of the dual CG transform: a $\mathbf{U}_d$ dual CG transform (which performs the irrep decomposition of $\mathcal{Q}^d_\mu \otimes \mathcal{Q}^d_{\overline{\square}}$) is denoted as $W_\mathrm{dCG}^{[d]}$.

The first step in the recursive construction is to write the dual CG transform $W_\mathrm{dCG}^{[d]}$ in terms of certain $\mathbf{U}_d$ irreducible tensor operators.
\begin{prop}
\label{prop:dCG-tensor}
For any GT element $q_\mu \in \mathcal{Q}^d_\mu, i \in [d]$, and staircase $\mu$, it holds that
\begin{align}
    W_\mathrm{dCG}^{[d]} \ket{\mu}\ket{q_\mu}\ket{i} &= \ket{\mu} \sum_{j \in [d] \text{ s.t. } \atop \mu - e_j \in \Gamma_d }\sum_{q_{\mu-e_j} \in \mathcal{Q}^d_{\mu-e_j}} C^{\mu,j}_{q_\mu,i,q_{\mu-e_j}} \ket{\mu-e_j} \ket{q_{\mu-e_j}}\\
    &= \ket{\mu} \sum_{j \in [d] \text{ s.t. } \atop \mu - e_j \in \Gamma_d} \ket{\mu-e_j} T^{\mu,j}_i \ket{q_\mu}, \label{eq:def-T}
\end{align}
where we define the operators
\begin{equation}
    T_i^{\mu,j} \triangleq  \sum_{q_\mu \in \mathcal{Q}^d_\mu} \sum_{q_{\mu-e_j} \in \mathcal{Q}^d_{\mu-e_j}} C^{\mu,j}_{q_\mu,i,q_{\mu-e_j}}  \ket{q_{\mu-e_j}}\bra{q_\mu} \in \operatorname{Hom}(\mathcal{Q}^d_\mu, \mathcal{Q}^d_{\mu-e_j}).
\end{equation}
Furthermore, the set $T^{\mu,j} \triangleq \{T_{i}^{\mu,j}\}_{i \in [d]}$ is an irreducible tensor operator corresponding to irrep $\mathcal{Q}^d_{\overline{\square}}$.
\end{prop}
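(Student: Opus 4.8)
The plan is to verify the two displayed equalities essentially by unwinding definitions, and then establish the covariance property of $T^{\mu,j}$ by a short group-averaging-style argument. First I would observe that the first equality in the statement is just the definition of $W_\mathrm{dCG}^{[d]}$ given earlier in Section~\ref{sec:dualCG} (the isometry sending $\ket{\mu}\ket{q_\mu}\ket{i}$ to $\sum_{\gamma' \in \mu - \square} \sum_{q_{\gamma'}} C^{\mu,\gamma'}_{q_\mu,i,q_{\gamma'}}\ket{\mu}\ket{\gamma'}\ket{q_{\gamma'}}$), specialized to the fact that the staircases in $\mu - \square$ of length $\le d$ are exactly the $\mu - e_j$ for $j \in [d]$ with $\mu - e_j \in \Gamma_d$; this correspondence is immediate from the notation introduced at the start of Section~\ref{sec:recursive}, since removing a box from $\alpha$ on row $j$ or adding a box to $\beta$ on row $j - \operatorname{len}(\alpha)$ exhausts $\mu - \square$ as $j$ ranges over $[d]$. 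The second equality is then simply the observation that, by the definition of $T_i^{\mu,j}$ as $\sum_{q_\mu,q_{\mu-e_j}} C^{\mu,j}_{q_\mu,i,q_{\mu-e_j}}\ket{q_{\mu-e_j}}\bra{q_\mu}$, we have $T^{\mu,j}_i\ket{q_\mu} = \sum_{q_{\mu-e_j}} C^{\mu,j}_{q_\mu,i,q_{\mu-e_j}}\ket{q_{\mu-e_j}}$, so that substituting into the first line reproduces it term by term. This part is purely bookkeeping.

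The substantive claim is that $T^{\mu,j} = \{T_i^{\mu,j}\}_{i\in[d]}$ is an irreducible tensor operator transforming under $\mathcal{Q}^d_{\overline{\square}}$, i.e.\ that $\mathbf{q}^d_{\mu-e_j}(U)\, T_i^{\mu,j}\, \mathbf{q}^d_\mu(U)^{-1} = \sum_{i'} \bra{i'}\mathbf{q}^d_{\overline{\square}}(U)\ket{i} T_{i'}^{\mu,j}$ for all $U \in \mathbf{U}_d$. I would prove this by exploiting the fact that $W_\mathrm{dCG}$ is a $\mathbf{U}_d$-equivariant isometry by construction: it intertwines the action $\mathbf{q}^d_\mu(U)\otimes\mathbf{q}^d_{\overline{\square}}(U)$ on the input with $\bigoplus_{j}\mathbf{q}^d_{\mu-e_j}(U)$ on the output (this is exactly what it means to ``decompose $\mathcal{Q}^d_\mu\otimes\mathcal{Q}^d_{\overline\square}$ into irreps''). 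Concretely, applying both sides of Eq.~\eqref{eq:def-T} to the state $(\mathbf{q}^d_\mu(U)^{-1}\ket{q_\mu})\otimes(\mathbf{q}^d_{\overline\square}(U)^{-1}\ket{i})$, using equivariance on the left and linearity on the right, and then comparing coefficients of the fixed output component $\ket{\mu-e_j}\ket{q_{\mu-e_j}}$ — which is legitimate because the different $\mu-e_j$ label \emph{inequivalent} irreps so there is no mixing between the blocks — yields precisely the tensor-operator transformation law. One then reads off that $T^{\mu,j}$ carries the irrep $\mathcal{Q}^d_{\overline\square}$ (and not merely some subrepresentation of $\operatorname{Hom}(\mathcal{Q}^d_\mu,\mathcal{Q}^d_{\mu-e_j})$), and that it is a \emph{single} irreducible tensor operator rather than a direct sum, since $\overline\square$ appears with multiplicity one in that branching.

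The main obstacle, I expect, is being careful about two points. First, one must justify that one can isolate a single summand $\ket{\mu-e_j}$ on the output side — this rests on the inequivalence of the irreps $\{\mathcal{Q}^d_{\mu-e_j}\}_j$ together with the orthogonality of the irrep-label registers $\ket{\mu-e_j}$, so that the equivariance identity on the full direct sum restricts to each block; this is exactly the multiplicity-freeness noted after the statement of Theorem~\ref{thm:dualWE}. Second, one must handle the slightly awkward fact that $\mathbf{q}^d_{\overline\square}(U) = \bar U = (U^{-1})^\top$ is the dual representation, so that $\bra{i'}\mathbf{q}^d_{\overline\square}(U)\ket{i}$ must be matched against $\mathbf{q}^d_{\overline\square}(U)^{-1}$ in the computation, and one should double-check the placement of inverses/transposes against the convention for $\mathbf{q}^d_{\overline\square}$; this is a finite unitary-group computation with no conceptual difficulty, but it is the place where sign or conjugation errors would creep in. Everything else is a direct consequence of the definition of $W_\mathrm{dCG}$ as the equivariant change of basis effecting the dual CG decomposition.
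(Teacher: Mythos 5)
Your proposal is correct and matches the paper's approach: the paper's entire proof is the one-line observation that ``by definition, the set $T^{\mu,j}$ transforms according to $\mathcal{Q}^d_{\overline{\square}}$,'' and your write-up simply fills in the equivariance-of-$W_\mathrm{dCG}$ computation and block-by-block comparison that this one-liner leaves implicit. No gaps; you have just made explicit what the paper takes as immediate.
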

\begin{proof}
     By definition, the set $T^{\mu,j} \triangleq \{T_{i}^{\mu,j}\}_{i \in [d]}$ transforms according to $\mcQ^d_{\overline{\square}}$. Therefore, they form an irreducible tensor operator corresponding to the irrep $\mathcal{Q}^d_{\overline{\square}}$.
\end{proof}

The next step is to decompose $T^{\mu,j}$ into $\mathbf{U}_{d-1}$ irreducible tensor operators, which we will later express in terms of $W^{[d-1]}_\mathrm{dCG}$ using the Wigner-Eckart theorem.

\begin{lemma} For any irrep $\mu$, $i \in [d]$, and $j \in [d]$ such that $\mu - e_j$ is a $\mathbf{U}_d$ irrep, it holds that
    \begin{equation}
    T^{\mu,j}_i = \sum_{\mu' \precsim \mu } \sum_{j'=0}^{d-1} \ket{\mu' - e_{j'}}\bra{\mu'} \otimes T_i^{\mu,j,\mu',j'},
\end{equation}
where $\mu' \precsim \mu$ and $\mu' - e_{j'}$ are $\mathbf{U}_{d-1}$ irreps, and $T_i^{\mu,j,\mu',j'} \in \operatorname{Hom}(\mathcal{Q}^{d-1}_{\mu'}, \mathcal{Q}^{d-1}_{\mu'- e_{j'}})$.
\label{lem:tensor}
\end{lemma}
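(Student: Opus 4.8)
The statement asserts that the $\mathbf{U}_d$ tensor operator $T^{\mu,j}$ — which lives in $\operatorname{Hom}(\mathcal{Q}^d_\mu, \mathcal{Q}^d_{\mu-e_j})$ — decomposes block-wise with respect to the Gelfand–Tsetlin branching $\mathbf{U}_d \downarrow \mathbf{U}_{d-1}$ into a sum of operators $T^{\mu,j,\mu',j'}_i$, each mapping between $\mathbf{U}_{d-1}$ irreps $\mathcal{Q}^{d-1}_{\mu'}$ and $\mathcal{Q}^{d-1}_{\mu'-e_{j'}}$. The natural approach is a pure change-of-basis computation using the GT labeling. Recall from Section~\ref{sec:GTbasis} that any GT basis vector $\ket{q_\mu}$ of $\mathcal{Q}^d_\mu$ can be written as $\ket{\mu'}\ket{q_{\mu'}}$ with $\mu' \precsim \mu$ a $\mathbf{U}_{d-1}$ irrep and $\ket{q_{\mu'}} \in \mathcal{Q}^{d-1}_{\mu'}$; similarly $\ket{q_{\mu-e_j}}$ decomposes as $\ket{\nu'}\ket{q_{\nu'}}$ with $\nu' \precsim \mu - e_j$. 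The first step is thus to expand the definition of $T^{\mu,j}_i$ in this refined basis and define $T^{\mu,j,\mu',j'}_i$ as the matrix block whose $(q_{\nu'},q_{\mu'})$ entry is the CG coefficient $C^{\mu,j}_{q_\mu,i,q_{\mu-e_j}}$ with $q_\mu \leftrightarrow (\mu',q_{\mu'})$ and $q_{\mu-e_j}\leftrightarrow(\nu',q_{\nu'})$.

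**Key steps.** First I would rewrite $T^{\mu,j}_i = \sum_{\mu' \precsim \mu}\sum_{\nu' \precsim \mu-e_j} \ket{\nu'}\bra{\mu'}\otimes \big(\sum_{q_{\mu'},q_{\nu'}} C^{\mu,j}_{(\mu',q_{\mu'}),i,(\nu',q_{\nu'})}\ket{q_{\nu'}}\bra{q_{\mu'}}\big)$, purely unpacking the GT labeling. The substance of the lemma is then showing that for each pair $(\mu',\nu')$ with a nonzero block, $\nu'$ is forced to be of the form $\mu' - e_{j'}$ for some $j' \in \{0,1,\dots,d-1\}$ (where $j'=0$ should be interpreted as $\nu' = \mu'$, i.e.\ the ``no change'' case, consistent with how $e_{j'}$ acts; this matches $j$ ranging over $[d]$ versus $j'$ over $\{0,\dots,d-1\}$). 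This restriction comes from the equivariance of the CG coefficients under $\mathbf{U}_{d-1} \subset \mathbf{U}_d$: the tensor operator $T^{\mu,j}$ restricted to the $\mathbf{U}_{d-1}$ action transforms under $\mathcal{Q}^d_{\overline{\square}}\big|_{\mathbf{U}_{d-1}}$, and the Weyl branching rule (Section~\ref{sec:GTbasis}) gives $\mathcal{Q}^d_{[\varnothing,\square]}\big|_{\mathbf{U}_{d-1}} \cong \mathcal{Q}^{d-1}_{[\varnothing,\square]} \oplus \mathcal{Q}^{d-1}_{[\varnothing,\varnothing]}$ — the dual-of-defining branches into the $(d-1)$-dimensional dual-of-defining plus a trivial summand. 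Correspondingly each component $T^{\mu,j,\mu',\cdot}$ restricted to its target must be (a multiple of) an irreducible $\mathbf{U}_{d-1}$ tensor operator of type $\overline{\square}$ or of type $\varnothing$ (the latter being a scalar intertwiner, surviving only when $\mu' = \nu'$). By the $\mathbf{U}_{d-1}$ analogue of Proposition~\ref{prop:dCG-tensor}, a $\overline{\square}$-type tensor operator out of $\mathcal{Q}^{d-1}_{\mu'}$ lands in $\mathcal{Q}^{d-1}_{\mu'-e_{j'}}$ for $j' \in [d-1]$, and the $\varnothing$-type lands in $\mathcal{Q}^{d-1}_{\mu'}$ itself. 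Collecting these two possibilities under the single symbol $j' \in \{0,1,\dots,d-1\}$ and setting $T^{\mu,j,\mu',j'}_i$ to be the corresponding block gives exactly the claimed formula. Finally I would note that $T^{\mu,j,\mu',j'}_i \in \operatorname{Hom}(\mathcal{Q}^{d-1}_{\mu'}, \mathcal{Q}^{d-1}_{\mu'-e_{j'}})$ by construction, and that the sum over $j'$ is the advertised one.

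**Main obstacle.** The routine part is the basis bookkeeping; the one point requiring care is the claim that the only $\mathbf{U}_{d-1}$ irreps $\nu' \precsim \mu - e_j$ that can carry a nonzero block are of the form $\mu' - e_{j'}$ with $\mu' \precsim \mu$ and $j' \in \{0,\dots,d-1\}$. This is really a compatibility statement between the two interlacing patterns ($\mu' \precsim \mu$ and $\nu' \precsim \mu - e_j$) together with the single-box-difference condition $\mu - e_j \in \mu - \square$; one has to check that the combinatorics of interlacing forces $\nu'$ and $\mu'$ to differ by at most one box in the way an $e_{j'}$-move allows. I would handle this by invoking the branching-rule decomposition of $\mathcal{Q}^d_{\overline{\square}}$ over $\mathbf{U}_{d-1}$ (so that the argument is representation-theoretic rather than a bare combinatorial case analysis) combined with the fact — already implicit in Proposition~\ref{prop:dCG-tensor} and the multiplicity-freeness of the branching — that CG coefficients respect the GT filtration. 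The rest, including verifying the index ranges match between $j \in [d]$ and $j' \in \{0,\dots,d-1\}$, is mechanical.
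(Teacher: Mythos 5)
Your proposal is correct and follows essentially the same route as the paper: restrict $\{T^{\mu,j}_i\}_i$ to $\mathbf{U}_{d-1}$, use the branching $\mathcal{Q}^d_{\overline{\square}} \cong \mathcal{Q}^{d-1}_{\overline{\square}} \oplus \mathcal{Q}^{d-1}_{\varnothing}$ together with the GT block decomposition of $\operatorname{Hom}(\mathcal{Q}^d_\mu, \mathcal{Q}^d_{\mu-e_j})$, and then kill the blocks with $\mu'' \neq \mu'-e_{j'}$ by Schur's lemma (for $i=d$) and by the multiplicity count of $\mathcal{Q}^{d-1}_{\overline{\square}}$ in $\operatorname{Hom}(\mathcal{Q}^{d-1}_{\mu'}, \mathcal{Q}^{d-1}_{\mu''})$ via the dual CG rule for $\mathbf{U}_{d-1}$ (for $i \in [d-1]$). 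The ``obstacle'' you flag is resolved exactly as you suggest -- representation-theoretically rather than by interlacing combinatorics -- so there is no gap.
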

\begin{proof}
We first decompose each of $\mathcal{Q}^d_{\overline{\square}}$, $\mathcal{Q}^d_\mu$, and $\mathcal{Q}^d_{\mu-e_j}$ into $\mathbf{U}_{d-1}$ irreps.

Observe that $\mathcal{Q}^d_{\overline{\square}}$, as an irrep of $\mathbf{U}_{d-1}$, decomposes into $ \mathcal{Q}^{d-1}_{\overline{\square}} \oplus \mathcal{Q}^{d-1}_{\varnothing}$, where $\mathcal{Q}^{d-1}_{\overline{\square}}$ acts on the subspace $\operatorname{span}\{\ket{i} :i \in [d-1]\}$ and $\mathcal{Q}^{d-1}_{\varnothing}$ acts trivially on $\ket{d}$. Since we deal with unitary groups with different dimensions in this section, we will make it explicit under which group the irrep decomposition is being performed by stacking the group on top of the $\cong$ symbol
\begin{equation}  \mathcal{Q}^d_{\overline{\square}} \overset{\mathbf{U}_{d-1}}{\cong} \mathcal{Q}^{d-1}_{\overline{\square}} \oplus \mathcal{Q}^{d-1}_{\varnothing}.
\end{equation}
The previous decomposition implies that, for any $\mu,j$, we can identify the set  $\{T^{\mu,j}_i\}_{i\in [d-1]}$ as a $\mathbf{U}_{d-1}$ irreducible tensor operator corresponding to the dual irrep $\overline{\square}$ and the set $\{T^{\mu,j}_d\}$ (a set with a single operator) as another $\mathbf{U}_{d-1}$ irreducible tensor operator corresponding to the trivial irrep $\varnothing$.

Next, we decompose $\mathcal{Q}^d_\mu, \mathcal{Q}^d_{\mu-e_j}$ into $\mathbf{U}_{d-1}$ irreps. Recall from Section~\ref{sec:GTbasis} that $\mathcal{Q}^d_\mu \overset{\mathbf{U}_{d-1}}{\cong} \bigoplus_{\mu' \precsim \mu } \mathcal{Q}^{d-1}_{\mu'}$. Thus, we can decompose
\begin{equation}
    \operatorname{Hom}(\mathcal{Q}^d_\mu, \mathcal{Q}^d_{\mu-e_j}) \overset{\mathbf{U}_{d-1}}{\cong} \bigoplus_{\mu' \precsim \mu } \bigoplus_{\mu'' \precsim \mu - e_j} \operatorname{Hom}(\mathcal{Q}^{d-1}_{\mu'}, \mathcal{Q}^{d-1}_{\mu''}).
\end{equation}
This is where the convenience of the GT basis comes in. Recall from Section~\ref{sec:GTbasis} that in this basis, an element $\ket{q_\mu}$ in $\mathcal{Q}^d_\mu$ is of the form $\ket{\mu'}\ket{q_{\mu'}}$, where $\mu'$ interlaces $\mu$ and $\ket{q_{\mu'}}$ is an element in $\mathcal{Q}^{d-1}_{\mu'}$. Thus, we can decompose
\begin{equation}
    T^{\mu,j}_i = \sum_{\mu' \precsim \mu } \sum_{\mu'' \precsim \mu - e_j} \ket{\mu''}\bra{\mu'} \otimes T_i^{\mu,j,\mu',\mu''},
    \label{eq:decomposeT}
\end{equation}
where $T_i^{\mu,j,\mu',\mu''} \in \operatorname{Hom}(\mathcal{Q}^{d-1}_{\mu'}, \mathcal{Q}^{d-1}_{\mu''})$.

Now observe that many operators $T_i^{\mu,j,\mu',\mu''}$ in this sum are actually vanishing. Indeed, as $\{T^{\mu,j}_i\}_{i\in [d]}$  is a $\mathbf{U}_{d}$ irreducible tensor operator corresponding to $\mathcal{Q}^d_{\overline{\square}} \overset{\mathbf{U}_{d-1}}{\cong} \mathcal{Q}^{d-1}_{\overline{\square}} \oplus \mathcal{Q}^{d-1}_{\varnothing}$, the set $\{T^{\mu,j,\mu',\mu''}_i\}_{i \in [d-1]}$ must form a $\mathbf{U}_{d-1}$ irreducible tensor operator corresponding to the irrep $ \mathcal{Q}^{d-1}_{\overline{\square}}$, and similarly, the operator $\{T^{\mu,j,\mu',\mu''}_d\}$ corresponds to the irrep $\mathcal{Q}^{d-1}_{\varnothing}$. Thus, the only $\mathbf{U}_{d-1}$ irreps $\mu', \mu''$ that contribute to the sum in Eq. 
 \eqref{eq:decomposeT} are those such that $\operatorname{Hom}(\mathcal{Q}^{d-1}_{\mu'}, \mathcal{Q}^{d-1}_{\mu''})$ contains $\mathcal{Q}^{d-1}_{\overline{\square}}$ or $\mathcal{Q}^{d-1}_{\varnothing}$. In particular, when $i=d$, the operator $T^{\mu,j,\mu',\mu''}_d$ is a $\mathbf{U}_{d-1}$-invariant operator, i.e., $T^{\mu,j,\mu',\mu''}_d \in \operatorname{Hom}(\mathcal{Q}^{d-1}_{\mu'}, \mathcal{Q}^{d-1}_{\mu''})^{\mathbf{U}_{d-1}}$. According to Schur's lemma (Lemma~\ref{lemma:Schur}), $T^{\mu,j,\mu',\mu''}_d=0$ unless $\mu'=\mu''$, in which case $T^{\mu,j,\mu',\mu''}_d$ is a multiple of the identity operator, i.e., $T^{\mu,j,\mu',\mu'}_d\triangleq T^{\mu,j,\mu',0}_d = \hat{T}^{\mu,j,\mu',0} I_{\mathcal{Q}^{d-1}_{\mu'}}$ for some scalar $\hat{T}^{\mu,j,\mu',0}$ (the label $0$ is added for later notational convenience). In summary, when $i=d$, the dual CG transform acts as
 \begin{equation}
 \begin{aligned}
W_\mathrm{dCG}^{[d]}\ket{\mu}\ket{q_\mu} \ket{d} &\overset{\text{\eqref{eq:def-T}}}{=}  \sum_{j \in [d] \text{ s.t.} \atop \mu - e_j \in \Gamma_d } \ket{\mu} \ket{\mu-e_j} T_d^{\mu,j}\underbrace{\ket{q_\mu}}_{\text{equiv. to} \ket{\mu'}\ket{q_{\mu'}} \atop \text{ for some } \mu' \precsim \mu} \\ 
&= \sum_{j \in [d] \text{ s.t.} \atop \mu - e_j \in \Gamma_d } \ket{\mu} \ket{\mu-e_j} \ket{\mu'}\ket{q_{\mu'}} \hat{T}^{\mu,j,\mu',0}.
 \end{aligned}
 \label{eq:i=d}
 \end{equation}

The case $i \in [d-1]$ is slightly more involved. As $\{T^{\mu,j,\mu',\mu''}_i\}_{i \in [d-1]}$ transforms according to $\mathcal{Q}^{d-1}_{\overline{\square}}$, we need to determine whether $\operatorname{Hom}(\mathcal{Q}^{d-1}_{\mu'}, \mathcal{Q}^{d-1}_{\mu''})$ contains this irrep. Recall from Eq. \eqref{eq:irrep-multiplicity} that the multiplicity of $\mathcal{Q}^{d-1}_{\overline{\square}}$ is equal to the dimension of $\operatorname{Hom}_{\mathbf{U}_{d-1}}(\mathcal{Q}^{d-1}_{\mu'} \otimes \mathcal{Q}^{d-1}_{\overline{\square}}, \mathcal{Q}^{d-1}_{\mu''})$. We have that $\mathcal{Q}^{d-1}_{\mu'} \otimes \mathcal{Q}^{d-1}_{\overline{\square}} \overset{\mathbf{U}_{d-1}}{\cong} \oplus_{\mu'' \in \mu' -_{d-1} \square} \mathcal{Q}^{d-1}_{\mu''}$. Thus, by Schur's lemma, when $i \in [d-1]$, $T^{\mu,j,\mu',\mu''}_i=0$ unless $\mu'' = \mu' -e_{j'}$ for some $j'\in [d-1]$ such that $\mu' -e_{j'} \in \mu' -_{d-1} \square$.

To summarize, we have considered two cases, $i=d$ and $i \in [d-1]$, and decomposed $T^{\mu,j}_i$ into $\mathbf{U}_{d-1}$ irreducible tensor operators in each case.
We now define a unifying notation to combine these two cases. For $i \in [d]$, let $T_i^{\mu,j,\mu',j'}\triangleq T_i^{\mu,j,\mu',\mu' -e_{j'}}$, with the convention that $\mu'-e_0=\mu'$. In this notation, $T_d^{\mu,j,\mu',0}=T_d^{\mu,j,\mu',\mu'}= \hat{T}^{\mu,j,\mu',0} I_{\mathcal{Q}^{d-1}_{\mu'}}$, as defined in the previous paragraph, is the only nonzero operator when $i=d$, i.e., $T_d^{\mu,j,\mu',j'}=0$ for $j' \neq 0$.
Whereas for $i \in [d-1]$, $T_i^{\mu,j,\mu',j'}=0$ unless $\mu' -e_{j'} \in \mu' -_{d-1} \square$. So we can conveniently write
\begin{equation}
    T^{\mu,j}_i = \sum_{\mu' \precsim \mu } \sum_{j'=0}^{d-1} \ket{\mu' -e_{j'}}\bra{\mu'} \otimes T_i^{\mu,j,\mu',j'},
    \label{eq:tensor}
\end{equation}
where $T_i^{\mu,j,\mu',j'} \in \operatorname{Hom}(\mathcal{Q}^{d-1}_{\mu'}, \mathcal{Q}^{d-1}_{\mu' -e_{j'}})$.
\end{proof}

We are now ready to apply the Wigner-Eckart theorem on the $\mathbf{U}_{d-1}$ irreducible tensor operators $\{T_i^{\mu,j,\mu',j'}\}_{i \in [d-1]}$ in Lemma~\ref{lem:tensor} to relate them to the $\mathbf{U}_{d-1}$ dual CG transform $W^{[d-1]}_\mathrm{dCG}$. The irrep decomposition of $\mathcal{Q}^{d-1}_{\mu'} \otimes \mathcal{Q}^{d-1}_{\overline{\square}}$ into is multiplicity-free, so we can apply Theorem \ref{thm:dualWE}. In particular for any $\ket{q_{\mu'}} \in \mathcal{Q}^{d-1}_{\mu'}$ and $\ket{q_{\mu'-e_{j'}}} \in \mathcal{Q}^{d-1}_{\mu'-e_{j'}}$, where $j' \in [d-1]$ such that $\mu'-e_{j'} \in \mu' -_{d-1} \square$,
\begin{equation}
    \bra{q_{\mu'-e_{j'}}} T_i^{\mu,j,\mu',j'} \ket{q_{\mu'}} = \hat{T}^{\mu,j,\mu',j'} \bra{\mu',\mu'-e_{j'}, q_{\mu'-e_{j'}}} W^{[d-1]}_\mathrm{dCG} \ket{\mu',q_{\mu'},i},
    \label{eq:d-1Wigner}
\end{equation}
where $\hat{T}^{\mu,j,\mu',j'}$ are the corresponding reduced Wigner coefficients.

For later convenience, we collect Equations~\eqref{eq:i=d},~\eqref{eq:tensor},~\eqref{eq:d-1Wigner} and Lemma~\ref{lem:tensor} into the following proposition that relates between $T_i^{\mu,j}$ and $W^{[d-1]}_\mathrm{dCG}$. 

\begin{prop} Consider any irrep $\mu$, and $j\in [d]$, such that $\mu -e_j$ is a $\mathbf{U}_d$ irrep. For $i=d$, it holds that
\begin{equation*}
    T^{\mu,j}_i = \sum_{\mu' \precsim \mu} \sum_{q_{\mu'} \in  \mathcal{Q}^{d-1}_{\mu'} } \ket{\mu', q_{\mu'}} \bra{\mu', q_{\mu'}} \cdot \hat{T}^{\mu,j,\mu',0},
\end{equation*}
and for $i\in [d-1]$, it holds that
\begin{equation*}
     T^{\mu,j}_i  = \sum_{\mu' \precsim \mu} \sum_{j'=0}^{d-1} \sum_{q_{\mu'} \in \mathcal{Q}^{d-1}_{\mu'} \atop  q_{\mu'-e_{j'}}  \in \mathcal{Q}^{d-1}_{\mu'-e_{j'}}} \ket{\mu'-e_{j'}, q_{\mu'-e_{j'}} } \bra{\mu', q_{\mu'} } \cdot \hat{T}^{\mu,j,\mu',j'} \bra{\mu',\mu'-e_{j'}, q_{\mu'-e_{j'}}} W^{[d-1]}_\mathrm{dCG} \ket{\mu',q_{\mu'},i},
\end{equation*}
where $\hat{T}^{\mu,j,\mu',j'}$ are the reduced Wigner coefficients.
\label{prop:recursive}
\end{prop}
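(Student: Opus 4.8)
The plan is to assemble the proposition directly from ingredients already in hand: the decomposition~\eqref{eq:tensor} of $T^{\mu,j}_i$ into $\mathbf{U}_{d-1}$ block operators, the vanishing/Schur analysis carried out inside the proof of Lemma~\ref{lem:tensor}, the $i=d$ identity~\eqref{eq:i=d}, and the $\mathbf{U}_{d-1}$ Wigner-Eckart identity~\eqref{eq:d-1Wigner}. No new representation theory is needed; the proof is a change-of-basis bookkeeping that expresses each $T^{\mu,j}_i$ as a composition of the GT relabelling $\ket{q_\mu}\leftrightarrow\ket{\mu'}\ket{q_{\mu'}}$ of $\mcQ^d_\mu$ with either a scalar (for $i=d$) or the $\mathbf{U}_{d-1}$ dual CG transform (for $i\in[d-1]$).

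For $i=d$ I would start from~\eqref{eq:tensor}, recall from the proof of Lemma~\ref{lem:tensor} that $T_d^{\mu,j,\mu',\mu''}$ is a $\mathbf{U}_{d-1}$-invariant operator, so by Schur's lemma (Lemma~\ref{lemma:Schur}) it vanishes unless $\mu''=\mu'$ and equals $\hat{T}^{\mu,j,\mu',0}\,I_{\mcQ^{d-1}_{\mu'}}$ in that case; hence only the $j'=0$ term survives, and writing $I_{\mcQ^{d-1}_{\mu'}}=\sum_{q_{\mu'}}\ket{q_{\mu'}}\bra{q_{\mu'}}$ and merging with the $\ket{\mu'}\bra{\mu'}$ factor into the GT label yields the first displayed formula. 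For $i\in[d-1]$ I would again start from~\eqref{eq:tensor} and expand each block operator in the GT bases of its domain and codomain, $T_i^{\mu,j,\mu',j'}=\sum_{q_{\mu'},\,q_{\mu'-e_{j'}}}\bra{q_{\mu'-e_{j'}}}T_i^{\mu,j,\mu',j'}\ket{q_{\mu'}}\,\ket{q_{\mu'-e_{j'}}}\bra{q_{\mu'}}$; since the $\mathbf{U}_{d-1}$ decomposition of $\mcQ^{d-1}_{\mu'}\otimes\mcQ^{d-1}_{\overline{\square}}$ is multiplicity-free, Theorem~\ref{thm:dualWE} applies and substitutes each matrix entry by $\hat{T}^{\mu,j,\mu',j'}\bra{\mu',\mu'-e_{j'},q_{\mu'-e_{j'}}}W^{[d-1]}_\mathrm{dCG}\ket{\mu',q_{\mu'},i}$ via~\eqref{eq:d-1Wigner}; merging with the $\ket{\mu'-e_{j'}}\bra{\mu'}$ factor into the GT labels reproduces the second displayed formula, and the $j'=0$ term contributes nothing because $\mu'-e_0=\mu'\notin\mu'-_{d-1}\square$, consistent with running the sum over $j'=0,\dots,d-1$.

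There is no genuine obstacle here — the statement is a consolidation of earlier work — so the only care required is with conventions. Specifically, I would double-check that the convention $\mu'-e_0:=\mu'$ is used consistently, that the single-element trivial tensor operator $\{T_d^{\mu,j,\mu',0}\}$ is routed through Schur's lemma rather than the (vacuous) Wigner-Eckart statement, exactly as split off in the proof of Lemma~\ref{lem:tensor}, and that the GT register $\ket{\mu'}$ appearing in $\ket{q_\mu}=\ket{\mu'}\ket{q_{\mu'}}$ is literally the same register on which $W^{[d-1]}_\mathrm{dCG}$ acts, so that the composition of the two change-of-basis steps is exactly the operator written in the proposition.
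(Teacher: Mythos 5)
Your proposal is correct and takes essentially the same route as the paper: the paper gives no separate proof, explicitly presenting Proposition~\ref{prop:recursive} as a consolidation of Eq.~\eqref{eq:i=d}, Eq.~\eqref{eq:tensor}, Eq.~\eqref{eq:d-1Wigner}, and Lemma~\ref{lem:tensor}, which is exactly the assembly you describe. Your added checks (routing the trivial tensor operator through Schur's lemma, the $\mu'-e_0=\mu'$ convention, and the vanishing of the $j'=0$ term for $i\in[d-1]$) match the paper's treatment.
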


Finally, we combine Propositions~\ref{prop:dCG-tensor} and \ref{prop:recursive} as follows. When $i \in [d-1]$, the dual CG transform acts as
\begin{equation}
\begin{aligned}
W_\mathrm{dCG}^{[d]}\ket{\mu}\ket{q_\mu} \ket{i}&\overset{\text{\eqref{eq:def-T}}}{=}  \sum_{j \in [d] \text{ s.t.} \atop \mu -e_j \in \Gamma_d } \ket{\mu} \ket{\mu-e_j} T_i^{\mu,j}\underbrace{\ket{q_\mu}}_{\text{equiv. to} \ket{\mu'}\ket{q_{\mu'}} \atop \text{ for some } \mu' \precsim \mu} \\ 
&\overset{\text{\eqref{eq:tensor}}}{=} \sum_{j\in[d] \text{ s.t.} \atop \mu -e_j \in \Gamma_d} \sum_{j'\in [d-1] \text{ s.t.} \atop \mu' -e_{j'} \in \Gamma_{d-1}} \ket{\mu} \ket{\mu-e_j}  \ket{\mu' -e_{j'}}  T_i^{\mu,j,\mu',j'} \ket{q_{\mu'}}\\
&\overset{\text{\eqref{eq:d-1Wigner}}}{=} \sum_{j,j', q}  \ket{\mu} \ket{\mu-e_j}  \ket{\mu' -e_{j'}} \ket{q} \hat{T}^{\mu,j,\mu',j'} \bra{\mu',\mu'-e_{j'}, q} W^{[d-1]}_\mathrm{dCG} \ket{\mu',q_{\mu'},i},
\end{aligned}
\label{eq:i<d}
\end{equation}
where we denoted $q\equiv q_{\mu'-e_{j'}} \in \mathcal{Q}^{d-1}_{\mu'-e_{j'}}$ and omitted the conditions on $j,j'$ in the last sum for brevity. Observe that $\ket{\mu'-e_{j'}}\ket{q_{\mu'-e_{j'}}}$ is nothing but an element in $\mathcal{Q}^d_{\mu-e_j}$ according to the GT basis! Of course, one might concern that $\mu'-e_{j'}$ does not interlace $\mu-e_j$, but the reduced Wigner coefficients $\hat{T}^{\mu,j,\mu',j'}$ conveniently vanish in this case as we will see next subsection.

We can incorporate the case $i=d$ of Eq. \eqref{eq:i=d} into \eqref{eq:i<d} by replacing $W^{[d-1]}_\mathrm{dCG}$ with
\begin{equation}
    \widetilde{W}^{[d-1]}_\mathrm{dCG} = \left(\sum_{\mu', q_{\mu'} \in \mathcal{Q}^{d-1}_{\mu'}}  \ket{\mu',\mu',q_{\mu'}}\bra{\mu', q_{\mu'}, d}\right) +W^{[d-1]}_\mathrm{dCG}.
    \label{eq:modified-WdCG}
\end{equation}
\begin{figure}
    \centering
    \includegraphics[width=0.7\textwidth]{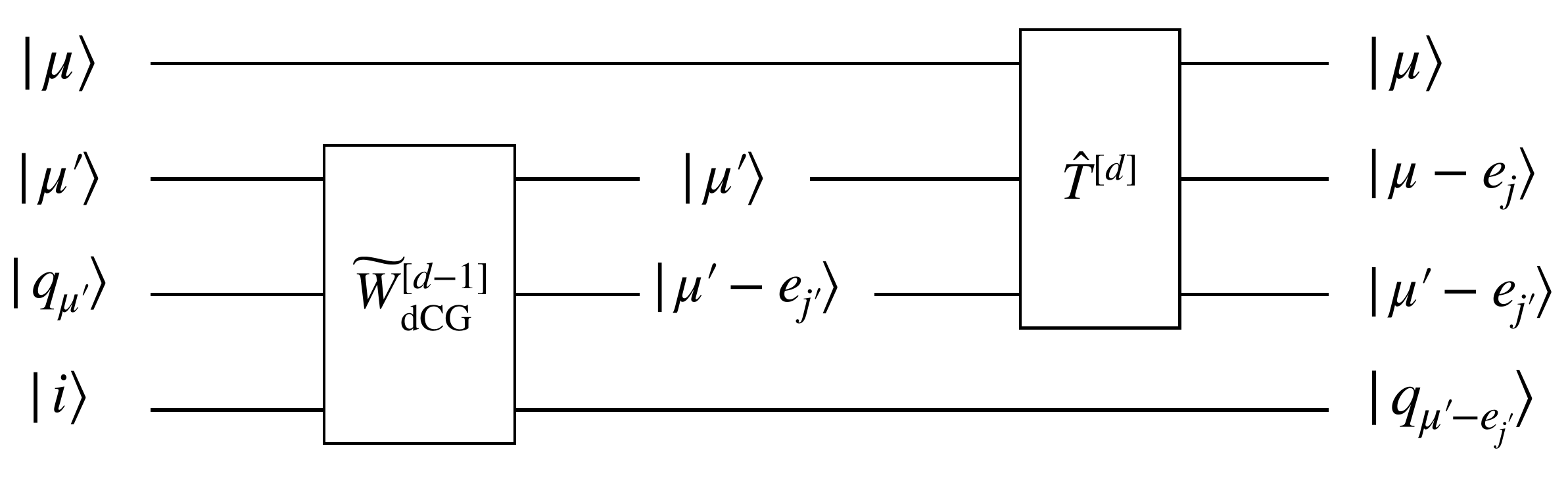}
    \caption{The $\mathbf{U}_d$ dual CG transform is decomposed into the $\mathbf{U}_{d-1}$ dual CG transform (in particular the modified version in Eq.~\eqref{eq:modified-WdCG}) and the reduced Wigner operator $T^{[d]}$.}
    \label{fig:recursive}
\end{figure}
With this, we are ready to fully describe the recursive step in the implementation of $W^{[d]}_\mathrm{dCG}$. For an input $\ket{\mu}\ket{q_\mu} \ket{i} = \ket{\mu} \ket{\mu'} \ket{q_{\mu'}} \ket{i}$, we first apply $\widetilde{W}^{[d-1]}_\mathrm{dCG}$ on $\ket{\mu'} \ket{q_{\mu'}} \ket{i}$ to get a superposition of registers of the form $\ket{\mu'}\ket{\mu'-e_{j'}}\ket{q_{\mu'-e_{j'}}}$. Then we apply the operator 
\begin{equation}
    \hat{T}^{[d]} \triangleq \sum_{\mu} \sum_{ \mu'\precsim \mu} \sum_{j=1}^d \sum_{j'=0}^{d-1} \hat{T}^{\mu,j,\mu',j'} \ket{\mu, \mu-e_j,\mu'-e_{j'}}\bra{\mu,\mu',\mu'-e_{j'}} 
\end{equation} on the registers $\ket{\mu}\ket{\mu'}\ket{\mu'-e_{j'}}$. See Figure~\ref{fig:recursive}. We continue the recursion until $d=1$, when the dual CG transform can be trivially implemented. Note that $\hat{T}^{[d]}$ is a unitary operator because $W^{[d]}_\mathrm{dCG}$ and $\widetilde{W}^{[d-1]}_\mathrm{dCG}$ are by definition. The next subsection shows how to implement $\hat{T}^{[d]}$ efficiently.

\subsection{Implementing the reduced Wigner operator}

Observe that $\hat{T}^{[d]}$ is effectively a controlled $d\times d$ operator. Indeed, it performs a unitary operator (from $0\leq j' \leq d-1$ to $1\leq j \leq d$) conditioned on $\mu,\mu'$; and the entries of this rotation are the reduced Wigner coefficients $\hat{T}^{\mu,j,\mu',j'}$ in Eq.~\eqref{eq:d-1Wigner}, whose efficiently computable formula was first derived by Biedenharn and Louck~\cite{biedenharn1968pattern}. We use the following simplified formula according to Equations 5 and 6, Section 18.2.10 in the textbook by Vilenkin and Klimyk~\cite{vilenkin1992representations}.

\begin{theorem}[Reduced Wigner coefficients for the dual irrep~\cite{vilenkin1992representations}] Define $s_{i}=\mu_i - i$ for $i\in [d]$ and $s'_{i}=\mu'_i - i$ for $i \in [d-1]$, and $S(j,j')=1$ if $j \leq j'$ and $S(j,j')=-1$ if $j > j'$ . Then, for $j'=0$,
\begin{equation}
    \hat{T}^{\mu,j,\mu',0}= \left| \frac{\displaystyle \prod_{k=1}^{d-1} (\mu'_k - \mu_k) }{\displaystyle \prod_{k \neq i}^{d} (\mu_k -\mu_j) }  \right|^{1/2},
\end{equation}
and for $j' \in [d-1]$,
\begin{equation}
\begin{aligned}
        &\hat{T}^{\mu,j,\mu',j'}= S(j,j')  \left|\left(\prod_{k=1}^{d-1} 
        \frac{\mu'_k - \mu_k}{\mu'_k -\mu'_j +1} \right) \left( \prod_{k=1}^{d}
        \frac{\mu_k-\mu'_j +1}{\mu_k -\mu_j} \right) \right|^{\frac{1}{2}}.
\end{aligned}
\end{equation}
\end{theorem}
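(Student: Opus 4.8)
The plan is to recognize $\hat T^{\mu,j,\mu',j'}$ as an object already computed in the classical literature and then transcribe and verify. Concretely: by Proposition~\ref{prop:dCG-tensor} the set $T^{\mu,j}=\{T^{\mu,j}_i\}_{i\in[d]}$ is (for our phase convention) the \emph{elementary box-removing tensor operator} of $\mathbf{U}_d$ transforming under the dual defining irrep $\overline{\square}$; and by the defining relation~\eqref{eq:d-1Wigner} together with the Wigner--Eckart theorem (Theorem~\ref{thm:dualWE}), the numbers $\hat T^{\mu,j,\mu',j'}$ are exactly the \emph{reduced Wigner coefficients} of the restriction $\mathbf{U}_d\downarrow\mathbf{U}_{d-1}$ for this operator. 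These are tabulated in the Biedenharn--Louck pattern calculus, and in the normalized form we quote, in Section~18.2.10 of Vilenkin--Klimyk~\cite{vilenkin1992representations}. So the three tasks are: (i) match our $T^{\mu,j}$ to their elementary operator, (ii) copy out the closed form, and (iii) settle the degenerate branch $j'=0$, the sign factor $S(j,j')$, and well-definedness.

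For (i)--(ii) I would give either of two interchangeable arguments. The duality argument: since $\mathcal{Q}^d_{\overline{\square}}\cong(\mathcal{Q}^d_{\square})^*$, the $\mathbf{U}_d$ dual CG coefficients $C^{\mu,j}_{q_\mu,i,q_{\mu-e_j}}$ inside $T^{\mu,j}$ equal, after passing to conjugate modules ($\mu\mapsto\mu^*$, with the GT indices reversed, so that ``remove a box from row $j$ of $\mu$'' becomes ``add a box to $\mu^*$''), the ordinary add-a-box CG coefficients for $\square$ up to a phase; and the reduced Wigner coefficients of the latter are classical. Alternatively, the direct argument: work in the Gelfand--Tsetlin basis of Section~\ref{sec:GTbasis}, where the matrix elements of the generators $E_{k,k\pm1}$ on GT patterns are products over betweenness differences; build $T^{\mu,j}$ from these; and invoke the Racah factorization lemma, which writes each such GT matrix element as a product of reduced Wigner coefficients down the chain $\mathbf{U}_d\supset\mathbf{U}_{d-1}\supset\cdots$. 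Matching one factor of that product against~\eqref{eq:d-1Wigner} isolates $\hat T^{\mu,j,\mu',j'}$ and reproduces the stated ratio of products; the absolute value is forced by demanding that both the GT basis and the dual CG transform be real orthogonal, which pins each reduced coefficient down to a sign.

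Fixing that sign is the only genuinely computational step, and it is read off from~\cite{vilenkin1992representations} once one adopts their standard real phase convention: the relative sign between the $j\le j'$ and $j>j'$ contributions is precisely $S(j,j')$. The $j'=0$ row corresponds to the component $i=d$, where Lemma~\ref{lem:tensor} already showed (Schur's lemma) that $T^{\mu,j}_d$ is a scalar on each $\mathcal{Q}^{d-1}_{\mu'}$; specializing the general formula to this scalar case (no $\mu'_j$ appears and the second product collapses) gives the quoted expression for $\hat T^{\mu,j,\mu',0}$. Well-definedness is immediate from the reason the shifted parts $s_i=\mu_i-i$ were introduced: they are strictly decreasing for every valid staircase, so the denominators never vanish, and one checks that whenever $\mu'-e_{j'}$ fails to interlace $\mu-e_j$ a numerator factor vanishes, matching the claim made just after Eq.~\eqref{eq:i<d}.

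I expect the main obstacle to be purely bookkeeping rather than conceptual: reconciling our staircase / GT-pattern notation and phase conventions with those of the classical sources (which phrase everything for $\mathbf{GL}_d$, use different pattern orderings, and sometimes unnormalized operators), and making sure the $j'=0$ row of $\hat T^{[d]}$ --- which is absent from the bare $\mathbf{U}_{d-1}$ dual CG transform and reinstated only via the modified transform $\widetilde W^{[d-1]}_\mathrm{dCG}$ of Eq.~\eqref{eq:modified-WdCG} --- is normalized so that the assembled $\hat T^{[d]}$ is actually unitary, as it must be since it is a ratio of the unitaries $W^{[d]}_\mathrm{dCG}$ and $\widetilde W^{[d-1]}_\mathrm{dCG}$. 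A running sanity check is the explicit $n=2,m=1,d=2$ transform $W_{\mathrm{Sch}(2,1)}$ written out in Section~\ref{sec:example}, whose entries are built from exactly these coefficients.
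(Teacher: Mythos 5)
Your proposal takes essentially the same route as the paper: the paper offers no proof of this theorem beyond the citation to Equations 5 and 6 of Section 18.2.10 in Vilenkin--Klimyk, and your plan is precisely to identify $\hat T^{\mu,j,\mu',j'}$ as those classical reduced Wigner coefficients and transcribe them, with the extra (correct and useful) bookkeeping about the $j'=0$ branch, the sign $S(j,j')$, and unitarity of the assembled $\hat T^{[d]}$. Your remark about the shifted parts $s_i=\mu_i-i$ is also well taken, since the displayed products as written in terms of the raw $\mu_k$ can have vanishing denominators for staircases with repeated parts, and the theorem defines $s_i$ without using it --- the formula should be read in terms of the $s_k$, $s'_k$.
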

Conditioned on $\mu, \mu'$, each coefficient $\hat{T}^{\mu,j,\mu',j'}$ can be computed in time $d\operatorname{polylog}(d,n,m, 1/\varepsilon)$ to any desired error $\varepsilon$ and will be computed in parallel conditioned on the registers $\ket{\mu,\mu',\mu'-e_j'}$ as described below.

\begin{remark} The reduced Wigner operator $\hat{T}^{[d]}$ can be implemented with a circuit complexity of $\mcO(d^3 \polylog(d,n,m,1/\varepsilon)$ via the following procdure:
\begin{enumerate}
    \item \textbf{Input}: register of the form $\ket{\mu,\mu',\mu'-e_j'}$ (see Fig.~\ref{fig:recursive}).
    \item Conditioned on $\ket{\mu,\mu'}$, compute the entries $\hat{T}^{\mu,j,\mu',j'}$ for all $0 \leq j' \leq d-1$ and $j \in [d]$ and write all of them on a working space. This is a purely classical and takes time $\Tilde{\mcO}(d^3)$.
    \begin{equation}
        \ket{\mu,\mu',\mu'-e_j'} \mapsto \ket{\mu,\mu',\mu'-e_j'} \otimes \ket{\emph{all matrix entries}}
    \end{equation}
    \item Use the procedure in Section 4.5 of Nielsen and Chuang~\cite{nielsen2000quantum} to decompose this $d \times d$ unitary into $T=\mcO(d^2 \polylog(d,n,m,1/\varepsilon))$ CNOT, Hadamard, and $Z^{1/4}$ gates acting on $\log_2 d$ qubits,
    \begin{equation}
            \ket{\mu,\mu',\mu'-e_j'} \otimes \ket{\emph{all matrix entries}} \mapsto \ket{\mu,\mu',\mu'-e_j'} \otimes \ket{\emph{gate decomposition}},
    \end{equation}
    where $\ket{\emph{gate decomposition}} \equiv \bigotimes_{t=1}^{T} \ket{\emph{gate}_t}$ and the qubits that $\emph{gate}_t$ acts on are fixed independent of $\mu,\mu',j,j'$ via some canonical choice of gate and location ordering. This takes time $\Tilde{\mcO}(d^2)$.
    \item First classically perform $\ket{\mu,\mu',\mu'-e_j'} \mapsto  \ket{\mu,j',\mu'-e_j'} $. Then
    for each subregister $\ket{\emph{gate}_t}$, perform the described gate (either $CNOT, H$, $Z^{1/4}$, or $I$) on the corresponding location in the subregister holding $\ket{j'}$. This latter step takes time $\Tilde{\mcO}(d^2)$
    \begin{equation}
        \ket{\mu, j',\mu'-e_j'} \mapsto \sum_{j} \hat{T}^{\mu,j,\mu',j'}  \ket{\mu, j,\mu'-e_j'}
    \end{equation}
    \item Undo the classical computation in steps 2,3 and discard the working space qubits in time $\Tilde{\mcO}(d^3)$.
    \item Classically perform $\ket{\mu, j,\mu'-e_j'} \mapsto \ket{\mu, \mu-e_j,\mu'-e_j'}$.
\end{enumerate}
\end{remark}

All in all, we obtain the following circuit complexities for the dual CG transform and mixed Schur transform.
\begin{cor} The dual Clebsch-Gordan transform can be implemented to precision $\varepsilon$ using $\mcO(d^4 \polylog(d,n,m,1/\varepsilon)$ elementary gates and $\mcO(d^2 \polylog(d,n,m,1/\varepsilon)$ ancilla qubits that are discarded at the end.
\end{cor}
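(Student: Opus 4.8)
The plan is to read off the cost of $W_\mathrm{dCG}$ directly from the recursive decomposition set up above, so the proof is essentially a bookkeeping argument across the $d$ levels of the recursion on the dimension of the unitary group. First I would record that, by Propositions~\ref{prop:dCG-tensor} and~\ref{prop:recursive} together with Lemma~\ref{lem:tensor} (combined in Eqs.~\eqref{eq:i=d} and~\eqref{eq:i<d}), the $\mathbf{U}_d$ dual CG transform $W^{[d]}_\mathrm{dCG}$ factors---up to the GT relabelings $\ket{q_\mu}\leftrightarrow\ket{\mu'}\ket{q_{\mu'}}$ and $\ket{\mu-e_j}\ket{q_{\mu'-e_{j'}}}\leftrightarrow\ket{q_{\mu-e_j}}$---as the modified $\mathbf{U}_{d-1}$ dual CG transform $\widetilde W^{[d-1]}_\mathrm{dCG}$ of Eq.~\eqref{eq:modified-WdCG} followed by the reduced Wigner operator $\hat T^{[d]}$ (Figure~\ref{fig:recursive}). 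Unrolling this down to the base case $d=1$, where the dual CG transform is a trivial relabeling (the dual irrep of $\mathbf{U}_1$ being the one-dimensional inverse-determinant representation), expresses $W^{[d]}_\mathrm{dCG}$ as a sequential product of $d$ suitably embedded reduced Wigner operators $\hat T^{[1]},\dots,\hat T^{[d]}$.

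Next I would invoke the preceding Remark, which implements each $\hat T^{[k]}$ with $\mcO(k^3\polylog(k,n,m,1/\varepsilon'))$ elementary gates and $\mcO(k^2\polylog(k,n,m,1/\varepsilon'))$ ancilla qubits that are uncomputed and released at the end. The three ingredients are: the reduced Wigner coefficients have the closed form from the reduced-Wigner-coefficient theorem above, so all $\mcO(k^2)$ of them are computed reversibly into scratch space in time $\mcO(k^3\polylog)$; the resulting $k\times k$ unitary on $\lceil\log_2 k\rceil$ qubits is compiled into $\mcO(k^2\polylog)$ CNOT/Hadamard/$Z^{1/4}$ gates via~\cite[Sec.~4.5]{nielsen2000quantum} at locations fixed by a canonical ordering; and applying those controlled on the gate-description register, plus uncomputation of the scratch space, costs a further $\mcO(k^3\polylog)$. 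I would note in passing that the off-support terms (with $\mu'-e_{j'}$ not interlacing $\mu-e_j$) do not contribute, since the corresponding coefficients vanish identically from the product formulas.

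The totaling step is then routine: $\sum_{k=1}^d\mcO(k^3\polylog(k,n,m,1/\varepsilon'))=\mcO(d^4\polylog(d,n,m,1/\varepsilon'))$ for gates, and because the $\hat T^{[k]}$ are applied one after another with each releasing its scratch space before the next begins, the same $\mcO(d^2\polylog)$ ancillas can be reused at every level, yielding the claimed ancilla bound (the irrep-label and GT-pattern registers, of size $\mcO(d^2\log(n+m))$, also fit inside it). Finally I would set $\varepsilon'=\Theta(\varepsilon/d)$ and use subadditivity of the diamond-norm error under composition of $d$ channels to obtain overall precision $\varepsilon$, absorbing $\log(d/\varepsilon)$ into the polylog. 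The one place needing care rather than arithmetic is the correctness of the recursion itself---that $\widetilde W^{[d-1]}_\mathrm{dCG}$ faithfully merges the $i=d$ branch of Eq.~\eqref{eq:i=d} into Eq.~\eqref{eq:i<d}, that the GT relabelings line up consistently across levels, and that $\hat T^{[k]}$ is genuinely unitary so the error analysis applies---but all of this is already delivered by Lemma~\ref{lem:tensor} and Proposition~\ref{prop:recursive}, so the corollary follows.
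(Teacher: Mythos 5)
Your proposal is correct and follows essentially the same route as the paper: the corollary is obtained by unrolling the recursion $W^{[d]}_\mathrm{dCG}=\hat T^{[d]}\cdot\widetilde W^{[d-1]}_\mathrm{dCG}$ down to the trivial $d=1$ base case and summing the $\mcO(k^3\polylog)$ cost of each reduced Wigner operator from the preceding Remark, with the $\mcO(d^2\polylog)$ scratch space reused across levels. Your explicit handling of the per-level precision $\varepsilon'=\Theta(\varepsilon/d)$ and the vanishing off-support coefficients only makes explicit what the paper leaves implicit.
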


\begin{cor}
The mixed Schur transform can be implemented to precision $\varepsilon$ using $\widetilde{\mcO}((n+m)d^4)$ elementary gates, $\widetilde{\mcO}(n+m+d^2)$ ancilla qubits (that are needed to label the irreps and GT basis vectors), and additionally $\widetilde{\mcO}(d^2)$ workspace ancilla qubits that are discarded at the end.
\label{cor:main}
\end{cor}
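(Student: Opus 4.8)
The plan is to assemble the circuit exactly as in Section~\ref{sec:transform} (Fig.~\ref{fig:cascading}): cascade the $n$ Clebsch--Gordan transforms of~\cite{bacon2007quantum} to build the ordinary Schur transform on the first $n$ qudits, then cascade the $m$ dual Clebsch--Gordan transforms of the preceding corollary to absorb the $m$ dual qudits, at each step carrying the input irrep label to the output. Correctness of the resulting block-diagonalization --- that the accumulated irrep-label registers spell out a Bratteli-diagram path and hence a mixed Schur basis label as in Eq.~\eqref{eq:def-mixedschur} --- is already established there and follows from the mixed Schur--Weyl duality (Theorem~\ref{thm:mixedSW}) together with the multiplicity-freeness of the Bratteli diagram; no new representation theory is needed. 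What remains is to bound (i) the gate count, (ii) the qubits needed to hold the output labels, and (iii) the recyclable workspace, and to propagate the precision.

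For the gate count, fix the target diamond-norm precision $\varepsilon$ and run each of the $n$ CG transforms and each of the $m$ dual CG transforms to individual precision $\varepsilon' = \Theta(\varepsilon/(n+m))$. Composition of approximate isometries is subadditive in operator-norm error, $\|V_1\cdots V_N - \widetilde V_1\cdots \widetilde V_N\| \le \sum_i \|V_i - \widetilde V_i\|$, and the diamond-norm distance between the induced channels is at most a constant times this, so over the $N = n+m$ stages the accumulated error is $\mcO(\varepsilon)$, rescaled to $\varepsilon$. Each CG transform uses $\mcO(d^3\polylog(n,d,1/\varepsilon'))$ gates and each dual CG transform uses $\mcO(d^4\polylog(n,m,d,1/\varepsilon'))$ gates by the preceding corollary; since $1/\varepsilon' = \mcO((n+m)/\varepsilon)$ only inflates the polylogarithmic factors, the total is $n\,T_{\mathrm{CG}} + m\,T_{\mathrm{dCG}} = \widetilde{\mcO}((n+m)d^4)$ elementary gates.

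For the qubit count, separate the ``output'' registers $\ket{\gamma,q_\gamma,p_\gamma}$ from the internal workspace. The multiplicity register $\ket{p_\gamma}$ records a path in the Bratteli diagram of $\mathcal{A}^d_{n,m}$ (Section~\ref{sec:bratteli}); specifying which box is added or removed at each of the $n+m$ steps costs $\mcO(\log d)$ bits per step, so $\ket{p_\gamma}$ together with the final irrep label $\ket{\gamma}$ uses $\mcO((n+m)\log d + d\log(n+m))=\widetilde{\mcO}(n+m+d)$ qubits. The Gelfand--Tsetlin register $\ket{q_\gamma}$ (Section~\ref{sec:GTbasis}) stores a Gelfand pattern with $d(d+1)/2$ entries, each in $\{-m,\dots,n\}$, hence $\mcO(d^2\log(n+m))=\widetilde{\mcO}(d^2)$ qubits. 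The output registers therefore use $\widetilde{\mcO}(n+m+d^2)$ qubits in total. The workspace is the scratch space used inside a single dual CG transform to compute the reduced Wigner coefficients, synthesize the $d\times d$ reduced Wigner operator $\hat T^{[d]}$ via the Nielsen--Chuang procedure~\cite{nielsen2000quantum}, and hold the resulting gate list --- $\mcO(d^2\polylog(d,n,m,1/\varepsilon))$ qubits that are uncomputed and freed at the end of each dual CG transform and can thus be reused by the next invocation, contributing the stated additional $\widetilde{\mcO}(d^2)$ workspace qubits discarded at the end.

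The only genuinely delicate points are: (a) verifying that truncating the block sums in the CG and dual CG transforms to staircases of length $\le d$ with entries in $\{-m,\dots,n\}$ discards no irrep that can actually appear after $n+m$ steps --- which is immediate, since the branching rules of Sections~\ref{sec:GTbasis} and~\ref{sec:bratteli} force every intermediate $\gamma^{(k)}$ to meet these constraints automatically; and (b) the precision bookkeeping, i.e., confirming that replacing $\varepsilon$ by $\varepsilon' = \Theta(\varepsilon/(n+m))$ inside each of the $n+m$ calls leaves the complexity $\widetilde{\mcO}((n+m)d^4)$ (the extra $\log(n+m)$ is swallowed by $\polylog$), and that the transform being an isometry into a larger labeled space --- rather than a unitary --- does not interfere with the triangle-inequality argument. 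I expect (b) to be the part that most needs to be spelled out, though it is routine.
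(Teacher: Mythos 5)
Your proposal is correct and follows essentially the same route as the paper: cascade the $n$ CG and $m$ dual CG transforms, sum their costs to get $nT_{\mathrm{CG}}+mT_{\mathrm{dCG}}=\widetilde{\mcO}((n+m)d^4)$, and account for the label registers ($\widetilde{\mcO}(n+m)$ for $\gamma$ and the Bratteli path, $\widetilde{\mcO}(d^2)$ for the Gelfand pattern) plus the reusable $\widetilde{\mcO}(d^2)$ Wigner-coefficient workspace. Your explicit error-propagation argument (running each stage at precision $\varepsilon/(n+m)$ and using subadditivity) is a detail the paper leaves implicit, and it is handled correctly.
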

\begin{proof}
    As shown in Section~\ref{sec:transform}, the total complexity of the mixed Schur transform is $nT_\mathrm{CG}+ mT_{dCG}$, where $T_\mathrm{CG}=\mcO(d^3\operatorname{polylog}(n,d,1/\varepsilon))$ is the complexity of the CG transform derived in \cite{bacon2007quantum} and $T_\mathrm{dCG}=\mcO(d^4\operatorname{polylog}(n,m,d,1/\varepsilon))$ is the complexity of the dual CG transform. We need $\mcO(\min\{(n+m)\log(n+m), (m+m)\log d \})$ qubits to label the staircases $\gamma$, cf. Remark~\ref{remark:countirreps}, $\mcO(d^2\log(m+n))$ qubits to encode a Gelfand pattern for $\ket{q_\gamma}$, cf. Section~\ref{sec:GTbasis}, and $\mcO((n+m)\log d)$ qubits to encode a path in the Bratteli diagram for $\ket{p_\gamma}$ (cf. Section~\ref{sec:bratteli}). The extra $\Tilde{\mcO}(d^2)$ workspace qubits used for computing the reduced Wigner coefficients are reset after every dual CG transforms.
\end{proof}

\begin{remark} Finally, we remark that extensions of the CG or dual CG transforms to other input irreps are in fact possible using the same tensor operator framework. Indeed, its key idea--recursively reducing the $d$-dimensional to $(d-1)$-dimensional CG transforms via the reduced Wigner coefficients--is available as long as the reduced Wigner coefficients can be efficiently computed. This is the case when the input irrep $\nu$ is so-called elementary irreps, which are of the form $\nu=(c^{k},(c-1)^{d-k})$ for any integer $c$ (see Section 18.2.9~\cite{vilenkin1992representations}). It's also the case for irreps of the form $\nu=( c+s, c^{d-1})$ or $\nu=(c^{d-1},c-s)$ for positive integer $s$~\cite[Section 18.2.6]{vilenkin1992representations}.
\end{remark}

\section{Applications}\label{sec:applications}
In this section, we provide some applications of the mixed Schur transform. While we place our main focus on the implementation of unitary-equivariant channels, we believe there are more applications of it to be found in quantum algorithms, quantum learning and property testing, and complexity theory, which we briefly discuss and justify.

\subsection{Implementing unitary-equivariant channels}\label{sec:implementation}
Grinko and Ozols~\cite{grinko2022linear} recently provides a classical algorithm for solving linear programs with unitary-equivariant constraints, with possible extensions to semidefinite programs. In particular, they consider various optimization problems of unitary-equivariant channels and develop a diagrammatic framework for solving them using the mixed Schur-Weyl duality. These channels have a wide range of applications in quantum information in tasks such as quantum majority vote \cite{buhrman2022quantum}, asymmetric cloning \cite{nechita2021geometrical}, multiport-based teleportation \cite{kopszak2021multiport}, black-box unitary transformation~\cite{yoshida2022reversing}, unitary-equivariant variational ansatzes~\cite{nguyen2022theory}, etc. Others include situations that involve the partial transpose on subsystems, such as entanglement detection~\cite{bardet2020characterization, huber2022dimension}. Despite their wide applications, efficient implementations of these channels have been, to our knowledge, underexplored. A prior work~\cite{buhrman2022quantum} resolves this question for the special case of $d=2$, $m$ odd, and $n=1$ in the context of quantum majority vote. However, using counting argument, the exponentially large dimension of $\mathcal{A}_{n,m}^d$ will exclude any \textit{general} efficient implementation.

For the above reason, one must consider this task case by case or under reasonable input models. In this section, we discuss how to use the mixed Schur transform to achieve this task in various settings, partially answering an open question in~\cite{grinko2022linear} where the authors asked how to implement these channels provided with an efficient description of the Choi state in the Schur basis.

As a reminder, an $m$ to $n$ qudit channel $\mcN$ is said to be unitary-equivariant if
\begin{equation}
    \mathcal{N}(U^{\otimes m} \rho U^{\dagger \otimes m}) = U^{\otimes n}  \mathcal{N}(\rho )U^{\dagger \otimes n}, \qquad \forall U \in \mathbf{U}_d.
\end{equation}
Consider the Choi state $J^{\mcN} \triangleq (\id_{d^m} \otimes \mcN) (\ket{\Omega_d}\bra{\Omega_d}^{\otimes m})$, where $\ket{\Omega_d}$ is the normalized $d$-dimensional EPR state, unitary equivariance can be shown to be equivalent to~\cite{wolf2012quantum}
\begin{equation}
    [J^\mathcal{N}, \Bar{U}^{\otimes m} \otimes U^{\otimes n}]=0, \qquad \forall U \in \mathbf{U}_d.
\end{equation}
Later, we also consider the unnormalized Choi state $\Tilde{J}^{\mcN} = d \cdot J^\mcN$. Note the following identity~\cite{wolf2012quantum}
\begin{equation}
    \mcN(\rho) = \Tr_A [\Tilde{J}_{AB}^\mcN ( \rho^\top_A  \otimes \id_B) ].
\end{equation}

\subsubsection{Teleportation-based implementation}
Here we consider an implementation of unitary-equivariant channels based on (equivariant) teleportation. The main idea is to employ the standard gate teleportation scheme and exploiting the symmetry properties of the channel~\cite[Section 2]{wolf2012quantum}. This method works well when the input size is $m=1$ or small constant.

\begin{assumption} The entries of the Choi state in the Schur basis are known and efficiently computable, and the Schur basis Choi state $J^\mcN_\mathrm{Sch}$ can be prepared as a quantum state at unit cost.
\label{assumption}
\end{assumption}

We start with the case $m=1$ and show the channel can be implemented deterministically from the Choi state.

\begin{prop}[Teleportation-based implementation] In the case $m=1$, suppose the Choi state $J^\mcN$ of a qudit unitary-equivariant channel $\mcN$ is given in the Schur basis as a quantum state, then $\mcN$ can be implemented deterministically using one copy of $J^\mcN$ with a circuit complexity of $\mcO(d^6 + n)$.
\label{prop:cov-tele}
\end{prop}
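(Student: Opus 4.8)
The plan is to adapt the standard gate-teleportation identity to the equivariant setting, exploiting the fact that for $m=1$ the Choi state lives on a qudit $A$ and the output system of $n$ qudits, and its symmetry $[J^\mcN, \bar U \otimes U^{\otimes n}]=0$ lets us ``push through'' any correction operator. Recall the teleportation fact: if Alice holds an unknown qudit state $\rho_C$ and shares the (unnormalized) Choi state $\tilde J^\mcN_{AB}$ on $A$ (input slot) and $B$ (the $n$ output qudits), then measuring $C$ and $A$ in the generalized Bell basis $\{\ket{\Phi_{x,z}}\}_{x,z\in[d]}$ yields, conditioned on outcome $(x,z)$, the un-normalized state $\Tr_A[\tilde J^\mcN_{AB}((X^xZ^z\rho (X^xZ^z)^\dagger)^\top_A\otimes \id_B)] = \mcN\!\big((X^xZ^z)\rho(X^xZ^z)^\dagger\big)$ on $B$, up to the usual transpose/conjugation bookkeeping on the Pauli labels. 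The key new ingredient is that for each Heisenberg–Weyl correction $V_{x,z}=X^xZ^z$ there is a unitary $U_{x,z}\in\mathbf{U}_d$ (in fact $\bar V_{x,z}$ up to a phase, since the $X,Z$ group elements are unitary and the dual of $V_{x,z}$ is again a Heisenberg–Weyl element) such that $\mcN(V_{x,z}\,\cdot\,V_{x,z}^\dagger)=U_{x,z}^{\otimes n}\,\mcN(\cdot)\,U_{x,z}^{\dagger\otimes n}$; applying $U_{x,z}^{\dagger\otimes n}$ on the output corrects the teleported state exactly, yielding $\mcN(\rho)$ deterministically.

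Concretely, I would carry out the steps as follows. \textbf{Step 1:} State precisely the modified teleportation circuit — prepare $J^\mcN_\mathrm{Sch}$ on the $n+1$ logical qudits using Assumption~\ref{assumption}, apply the inverse mixed Schur transform $W_{\mathrm{Sch}(n,1)}^\dagger$ to bring it to the computational basis as $J^\mcN_{AB}$, adjoin the input qudit $C$ in state $\rho$, perform a Bell measurement on $(C,A)$, and record outcome $(x,z)$. \textbf{Step 2:} Verify the teleportation identity above using the characterization $\mcN(\rho)=\Tr_A[\tilde J^\mcN_{AB}(\rho^\top_A\otimes\id_B)]$ from the excerpt, tracking the transpose so that the post-measurement state on $B$ is (proportional to) $\mcN(V_{x,z}\rho V_{x,z}^\dagger)$ for an explicit $V_{x,z}$ in the Heisenberg–Weyl group. \textbf{Step 3:} Invoke unitary equivariance: since $V_{x,z}\in\mathbf{U}_d$, we have $\mcN(V_{x,z}\rho V_{x,z}^\dagger)=V_{x,z}^{\otimes n}\mcN(\rho)V_{x,z}^{\dagger\otimes n}$, so applying the Pauli-string correction $V_{x,z}^{\dagger\otimes n}=(X^xZ^z)^{\dagger\otimes n}$ on the $n$ output qudits yields $\mcN(\rho)$ exactly, independently of the outcome — hence deterministic. \textbf{Step 4:} Tally the cost: the mixed Schur transform on $n+1$ qudits costs $\widetilde{\mcO}((n+1)d^4)$ by Corollary~\ref{cor:main}; wait — for $m=1$ the dominant term should be re-examined. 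A Bell measurement on two qudits and the Pauli correction on $n$ qudits cost $\mcO(\poly(d)+n)$. Assembling these and simplifying gives the claimed $\mcO(d^6+n)$ (the $d^6$ presumably coming from a more careful accounting of the Schur transform's $d$-dependence at $m=1$, or from classically computing Choi-state amplitudes; I would reconcile this with Corollary~\ref{cor:main} in the writeup).

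The main obstacle I anticipate is \textbf{Step 2}, the careful bookkeeping of transposes and complex conjugations: the Choi-state convention, the Bell-basis convention, and the appearance of $\bar U=(U^{-1})^\top$ rather than $U$ all interact, and one must be sure that the correction operator that appears really is an honest element of $\mathbf{U}_d$ (so that equivariance applies) rather than, say, $\bar V_{x,z}$ which would only equal a unitary up to the subtleties the paper emphasizes for $d>2$. In fact, since $X$ and $Z$ generate a group of unitaries and $\overline{X^xZ^z}$ is again (up to a phase) a Heisenberg–Weyl element, this works out, but it must be checked that the phase is immaterial (it is, as it cancels in $V(\cdot)V^\dagger$) and that the resulting $V_{x,z}^{\dagger\otimes n}$ is the correct correction — I would verify this by a direct computation on the maximally entangled state. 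A secondary point is making the circuit-complexity bound $\mcO(d^6+n)$ square with the $\widetilde{\mcO}(nd^4)$ figure from Corollary~\ref{cor:main}; I expect the $d^6$ is an artifact of either a looser analysis specialized to $m=1$ or the cost of preparing/reading the Choi amplitudes, and I would state explicitly which.
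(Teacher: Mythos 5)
Your proposal is correct and lands on the same protocol as the paper -- teleportation through the Choi state with a Heisenberg--Weyl--labelled measurement and an $n$-fold tensor-power correction justified by unitary equivariance -- but the verification route differs enough to be worth noting. The paper first proves correctness for a \emph{continuous} POVM $\{M^U\}_{U\in\mathbf{U}_d}$ built from the Haar measure, checking completeness via Schur's lemma (irreducibility of $\bar U$) and the channel identity via the transpose trick plus equivariance; only afterwards does it observe that a unitary $1$-design suffices and instantiate it with the Weyl operators $W_{a,b}$, implementing the resulting discretized POVM as a generic isometry with an outcome register, which is exactly where the $\mcO(d^6)$ in the statement comes from (a general isometry on $\sim d^2$ dimensions with $d^2$ outcomes). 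You instead verify the discrete protocol directly through the standard gate-teleportation identity in the generalized Bell basis; this is more elementary, and your concern about whether the correction is an honest element of $\mathbf{U}_d$ resolves exactly as you suspect, since $\overline{W_{a,b}}$ is again a Weyl operator up to a global phase that cancels in conjugation. Your route would in fact give a cheaper measurement cost than $d^6$, since an actual generalized Bell measurement needs only a qudit CNOT and Fourier transform rather than a generic isometry compilation. Finally, the tension you flag with Corollary~\ref{cor:main} is real and not an artifact of your reading: the paper's own proof begins by applying the mixed Schur transform at cost $\widetilde{\mcO}(nd^4)$ to bring $J^\mcN_{\mathrm{Sch}}$ to the computational basis, a term not reflected in the stated $\mcO(d^6+n)$; the $d^6$ is purely the POVM-isometry compilation cost, not the Schur transform or any Choi-amplitude computation, so in a careful writeup you should state the total as the Schur-transform cost plus the measurement and correction costs.
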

\begin{proof} 
    Let $\rho \in \mathbb{C}^d \times \mathbb{C}^d $ be the input state. First,
     we apply the mixed Schur transform to convert the Choi state $J^\mcN_{AB}$ to the computational basis, which costs $\widetilde{\mcO}(n d^4)$ gate complexity. Consider a teleportation protocol between two parties: Alice and Bob, who hold each half $A$ and $B$ of $J^\mcN$, respectively.   
     Alice performs the following joint POVM on her half of the Choi state $J^\mathcal{N}$ and $\rho$:
    \begin{equation}
        \{M^U\}_{U \in \mathbf{U}_d} =\{d^2 \cdot (\id_{d} \otimes \bar{U}) \ket{\Phi_{d}}\bra{\Phi_d} (\id_{d} \otimes U^\top) d\mu(U) \}_{U \in \mathbf{U}_d},
        \label{eq:cov-povm}
    \end{equation}
    Note that this is a continuous POVM and will need to be discretized in an efficiently implementable way that we describe later. Let us first verify this is a valid POVM. Indeed, by construction, $M^U \geq 0$ for any $U \in \mathbf{U}_d$ and 
    \begin{equation}
        \int_{U \in \mathbf{U}_d} d^2 \cdot (\id_{d} \otimes \bar{U}) \ket{\Phi_{d}}\bra{\Phi_d} (\id_{d} \otimes U^\top) d\mu(U) = \id_{d^2},
    \end{equation}
    where we have used the identity $\int_{U \in \mathbf{U}_d} \Bar{U}_B O_{AB} U_B^\top = \frac{\Tr_B[O]}{d} \otimes \id_B$ which in turn follows from $\Bar{U}$ being irreducible and the Schur's lemma.
    
    Upon Alice measuring an element $U$ in this set, Bob performs the correction $U^{\dagger \otimes n}$ on his half of the Choi state $J^\mcN$ to get the output $\mathcal{N}(\rho)$. The entire protocol is equivalent to the following channel
    \begin{equation}
        \mcC(\rho_A \otimes J^\mcN_{A'
        B}) = \int_{U \in \mathbf{U}_d} d\mu(U) U^{\dagger \otimes n} \Tr_{AA'} [(\rho_A \otimes J^\mcN_{A'B}) (M^U_{AA'} \otimes \id_B)] U^{\otimes n},
    \end{equation}
    which we now verify is equal to $\mcN(\rho)$. Indeed, we have that
    \begin{align*}
        &\Tr_{AA'} [(\rho_A \otimes J_{A'B}^\mcN) (M^U_{AA'} \otimes \id_B)]\\
        & = d^2 d\mu(U) \Tr_{AA'} [(\rho_A \otimes  (\id_{A'}\otimes \mcN_{A'' \mapsto B}) (\ket{\Phi_d}\bra{\Phi_d}_{A'A''})) ((\id_A \otimes \bar{U}_{A'}) \ket{\Phi_{d}}\bra{\Phi_d}_{AA'} (\id_A \otimes U^\top_{A'}) \otimes \id_B)]\\
        & = d^2 d\mu(U)\Tr_{AA'} [(\rho_A \otimes  (\id_{A'}\otimes \mcN_{A'' \mapsto B}) (\id_A \otimes U_{A''} \ket{\Phi_d} \bra{\Phi_d}_{A'A''} \id_A \otimes U^\dagger_{A''} ) ) (\ket{\Phi_{d}}\bra{\Phi_d}_{AA'}  \otimes \id_B)]\\
        & = d^2 d\mu(U) U^{\otimes n}_{B}  \Tr_{AA'} [(\rho_A \otimes  (\id_{A'}\otimes \mcN_{A'' \mapsto B}) (\ket{\Phi_d} \bra{\Phi_d}_{A'A''}) \ket{\Phi_{d}}\bra{\Phi_d}_{AA'}] U^{\dagger \otimes n}_{B}\\
        & = d\mu(U) U^{\otimes n}  \mcN (\rho) U^{\dagger \otimes n}.
    \end{align*}
    Above, the first equality plugs in definitions, the second and fourth use the transpose trick with the EPR state, and the third equality follows from unitary equivariance.

    Finally we replace the continuous POVM over the unitary group by an efficient discrete POVM which effects it. In this case, it suffices to replace the Haar measure in Eq.~\eqref{eq:cov-povm} by a 1-design. For this, we simply use the uniform distribution over the Weyl operators~\cite{watrous2018thetheory}
    \begin{align}
        W_{a,b} = T^a P^b, &\qquad (a,b) \in \mathbb{Z}_d \times \mathbb{Z}_d,\\
        \text{where }&T = \sum_{j} \ket{j+1}\bra{j} \qquad \text{and} \qquad P = \sum_{j} e^{2\pi j /d} \ket{j} \bra{j}.
    \end{align}
    The discretized POVM can then be implemented by the isometry
    \begin{equation}
        W = \frac{1}{d} \sum_{a,b \in \mathbb{Z}_d \times \mathbb{Z}_d} \sqrt{ (\id_{d} \otimes \bar{W}_{a,b}) \ket{\Phi_{d}}\bra{\Phi_d} (\id_{d} \otimes W_{a,b}^\top)} \otimes \ket{a,b},
    \end{equation}
    which costs $O(d^6)$ elementary gates~\cite[Theorem 1]{iten2016quantum}. Upon receiving the measurement outcome $(a,b)$, Bob applies to correction $W_{a,b}^{\otimes n}$, giving the total gate complexity as specified.
\end{proof}

The dependence $d^6$ is somewhat unpleasing, but we do not attempt to optimize it here. One immediate improvement can perhaps be achieved by reducing the number of POVM elements with approximate designs from random circuits~\cite{brandao2016local, haferkamp2022random}.

\paragraph{Applications with $m=1$} Some prototypical applications with $m=1$ includes covariant error correction~\cite{kong2022near, faist2020continuous}, quantum state cloning and gate super-replication~\cite{fan2014quantum, nechita2021geometrical, chiribella2015universal}, variational quantum algorithms~\cite{nguyen2022theory, zheng2022super}. In particular, an encoding map in a covariant quantum error-correcting code with transversal gates is essentially a covariant channel under the relevant group, in the sense that if the logical qudit is applied $U$, its encoding is applied $U^{\otimes n}$. While a universal set of gates cannot be implemented transversally, they can for approximate codes. For (asymmetric) quantum state cloning, the standard setting is where one is provided with a qudit state $\rho$. For gate super-replication, one is given a small uses of an unknown unitary $U$ and the task is to generate more uses of that unitary. In variational quantum machine learning, it is recently introduced in~\cite{nguyen2022theory} `lifting' equivariant layers, which increase the number of qubits and thus have effects opposite to so-called pooling layers. Ref.~\cite{nguyen2022theory} derived a complete set of $1$-to-$2$ qubit unitary-equivariant channels, which we also analyzed as an example in Section~\ref{sec:example}.

\paragraph{Random unitary-equivariant channels} Recent work \cite{marvian2022restrictions} show that universal symmetric unitaries cannot be generated from local symmetric unitaries, hence it is non-trivial to generate $k$-design symmetric unitaries. We observe that the quantum Schur transform provides a way to overcome this issue using controlled local random unitaries. In particular, we can efficiently generate $k$-design unitaries \cite{brandao2016local, haferkamp2022random} using local gates (controlled on the irrep label register) within each block of the Schur basis, where there are no symmetry restrictions. Then, applying the Schur transform yields a $k$-design $\mathbf{U}_d$-symmetric unitary. This has applications in covariant quantum error correction~\cite{kong2022near} where one requires an efficiently generated random symmetric unitary to construct random equivariant codes. Similarly, the mixed Schur transform also provides a way to generate random equivariant channels by first generating a random Choi state in the Schur basis, applying the mixed Schur transform, and then teleporting as described above.

\paragraph{Generalization to $m=O(1)$} Finally, we note the teleportation protocol in Proposition~\ref{prop:cov-tele} can be straightforwardly generalized to $m>1$ at the cost of losing determinism. The reason is because the $m>1$ version of the ensemble $\{M^U\}$ in Eq.~\eqref{eq:cov-povm} no longer constitutes a complete POVM. However, the method still gives success rate independent of $d$ when $m$ is small. Indeed, we  can now replace the POVM element in Eq.~\eqref{eq:cov-povm} by 
    \begin{equation}
        M = C \int_{U \in \mathbf{U}_d} d\mu(U)  M^U,
    \end{equation}
    where 
    \begin{equation}
        M^U = (\id_{d^m} \otimes \bar{U}^{\otimes m}) \ket{\Phi_{d}}\bra{\Phi_d}^{\otimes m} (\id_{d^m} \otimes U^{\top \otimes m}),
    \end{equation}
and let $M^\perp = \id - M$ and $C$ is a normalization constant to ensure $\|M\| \leq 1$ (e.g. we chose $C=d^2$ when $m=1$, which led to $M=\id$). As done before, this POVM can be discretized by using (approximate) unitary $m$-designs.

Next, we calculate in terms of $C$ the probability of measuring $M$ on the joint system of $\rho$ and Alice's half of $J^\mcN$ (in which case Bob can apply the correction and the channel is successfully implemented). Since to Alice's perspective, her half of the Choi state is the maximally mixed state $\frac{\id_d^{\otimes m}}{d^m}$, the probability of measuring $M$ is
\begin{equation}
    \Prob[M] = \Tr M  \left( \rho \otimes \frac{\id_d^{\otimes m}}{d^m} \right) = \frac{C}{d^m} \Tr[\ket{\Phi_{d}}\bra{\Phi_d}^{\otimes m} (\rho \otimes \id_d^{\otimes m})] = \frac{C}{d^{2m}}.
\end{equation}

Thus, we would like to choose $C$ as large as possible while keeping $\|M\| \leq 1$. To upper bound $C$, we use the following formula from Weingarten calculus~\cite{collins2006integration}
\begin{equation}
    \int_{\mathbf{U}_d} d\mu(V) V^{\otimes m} B V^{\dagger \otimes m} = \sum_{\sigma, \pi \in \mathfrak{S}_m} W(\sigma \pi) \Tr (B \psi^d_m(\sigma)) \psi^d_m(\pi),
    \label{eq:weingarten}
\end{equation}
Above, the Weingarten function is defined, for any $\tau \in \mathfrak{S}_m$, as
\begin{equation}
    W(\tau) = \sum_{\lambda \vdash m \atop
    \operatorname{len}(\lambda) \leq d} \frac{m_\lambda^2}{(m!)^2 d_\lambda} \chi_\lambda(\tau),
\end{equation}
where $d_\lambda$ and $m_\lambda$ are the dimension and multiplicity of the $\mathbf{U}_d$ irrep $\lambda$ in Schur-Weyl duality, and $\chi_\gamma(\tau) = \Tr \mathbf{p}_\lambda(\tau)$ is the character of the $\mathfrak{S}_m$ irrep $\lambda$, cf. Theorem~\ref{thm:schur-weyl}.

For $m=2$ we have two irreps $\yd[1]{2}$ and $\yd[1]{1,1}$, with $m_{\yd[0.5]{2}} = m_{\yd[0.5]{1,1}}=1$, $d_{\yd[0.5]{2}} = \frac{d(d+1)}{2}$, $d_{\yd[0.5]{1,1}}= \frac{d(d-1)}{2}$, $\chi_{\yd[0.5]{2}}(e)=\chi_{\yd[0.5]{2}}((1\ 2)) = 1$, $\chi_{\yd[0.5]{1,1}}(e)=1, \chi_{\yd[0.5]{1,1}}((1\ 2)) = -1$. Thus, according to Eq.~\eqref{eq:weingarten} we have
\begin{equation}
    \int_{\mathbf{U}_d} d\mu(V) V^{\otimes 2} B V^{\dagger \otimes 2}=\frac{1}{d^2-1}\left(\Tr[B]-\frac{\Tr[B\,\, \mathrm{SWAP}]}{d}\right)\id\otimes \id +\frac{1}{d^2-1}\left(\Tr[B\,\, \mathrm{SWAP}]-\frac{\Tr[B]}{d}\right) \mathrm{SWAP},
\end{equation}
and therefore we find $M$ to be
\begin{equation}
    M = \frac{C}{d^2(d^2-1)}\left( \id_d^{\otimes 4} + \mathrm{SWAP} \otimes \mathrm{SWAP} - \frac{\mathrm{SWAP} \otimes \id_{d}^{\otimes 2} +  \id_{d}^{\otimes 2} \otimes \mathrm{SWAP} }{d} \right).
\end{equation}
Choosing $C=\frac{d^3(d-1)}{2}$ then ensures $\|M\| \leq 1$, giving rise to a success probability of $C/d^4 = (d-1)/2d$.
Thus, the probability of success is at least $1/4$ and asymptotes $1/2$ when $d$ is large.

Cases of larger $m$ can be similarly analyzed. There is unlikely a closed-form formula for a tight bound on the success probability for general $m$. However, we can readily see that the success probability is a constant when $m=O(1)$ and $d \gg m$ as follows. Observe that $m_\lambda$, $(m!)^2$, $\chi_\lambda(\tau)$, and $|\{\lambda: \lambda \vdash m\}|$ are all $O(1)$. In addition, we can see from the Weyl dimension formula in Section~\ref{sec:GTbasis} that $d_\lambda = \Omega(d^m)$. Thus, $W(\tau) = O(\frac{1}{d^m})$ for any $\tau \in \mathfrak{S}_m$. We can also bound the operator norm of each summand in Eq.~\eqref{eq:weingarten} when applied to $B = \ket{\Phi_{d}}\bra{\Phi_d}^{\otimes m}$:
\begin{equation}
    \| \Tr_A (\ket{\Phi_{d}}\bra{\Phi_d}^{\otimes m} (\id_{d^m} \otimes \psi^d_m(\sigma))) (\id_{d^m} \otimes \psi^d_m(\pi))\| \leq  \frac{1}{d^m}\| \id_{d^m} \otimes \psi^d_m(\pi)\| \leq \frac{1}{d^m},
\end{equation}
where $\Tr_A$ is performed on Alice's half of the Choi state, and the first inequality follows from the equality $\Tr (\ket{\Phi_{d}}\bra{\Phi_d}^{\otimes m} (Q \otimes Q')) = \frac{1}{d^m} \Tr(Q^\top Q')$ and Holder's inequality. Therefore, we can bound $\|M\|$ by $O(\frac{C}{d^{2m}})$ (again, we are treating $m=O(1) \ll d$). So it suffices to choose $C= \Omega(d^{2m})$ to guarantee $\|M\| \leq 1$. This leads to $\Prob[M] = \Omega(1)$.

\subsubsection{Estimating Pauli-sparse observables with shadow tomography}
In many practical cases, one needs not obtain the full output state. Rather, the ability to estimate expectation values of observables on the output state may already suffice for one's interest. For example, if the output $n$ is small (e.g., $n=1$ in quantum majority vote~\cite{buhrman2022quantum}, universal spin flip machine~\cite{gisin1999spin}, or more generally any basis-independent evaluations of Boolean functions), then the ability to estimate all Pauli observables essentially implies the ability to reconstruct the output in time $\poly(d)$. Consider the following task: given $M$ observables $O_1,\hdots,O_M \in \mcB((\mathbb{C}^2)^{\otimes n})$ and an input state $\rho \in \mcB((\mathbb{C}^2)^{\otimes m})$, estimate each of $\Tr[\mcN(\rho) O_i]$ to precision $\varepsilon$.

We make the observation that this is exactly what shadow tomography can achieve in several settings. Below, we outline two such protocols that are applicable when one can prepare the Choi state of the channel. We consider the same setting as in Assumption~\ref{assumption}, which allows us to apply the mixed Schur transform to prepare the Choi state $J^\mcN$. After this step, we can apply the results below due to~\cite{caro2022learning} and~\cite{huang2022learning} for qubit channels. All of them should morally generalize to qudits by replacing $n$ and $m$ by $n \log_2 d$ and $m\log_2 d$, respectively. We note that these results apply to general Choi states, thus it seems likely that they can be significantly improved for channels with symmetry as considered in this paper\footnote{For example, one can keep the Choi state as it is in the Schur basis and instead apply the mixed Schur transform to the states/observables, then apply shadow tomography within each diagonal irrep block.}. However, we will not attempt to do this here. 

The first result is due to Caro~\cite{caro2022learning}, which applies under certain sparsity assumptions. 
\begin{definition}[Pauli sparse states and  observables] An observable $O \in (\mathbb{C}^2)^{\otimes n}$ is $s_O$-sparse if its Pauli expansion has at most $s_O$ nonzero coefficients. An $s_\rho$-sparse input quantum state is similarly defined.
\end{definition}

\begin{prop}[Predicting Pauli-sparse expectation values for arbitrary channels, Corollary 4.2 in~\cite{caro2022learning}] There is a procedure that uses $\mcO(\frac{m+\log(1/\delta)}{\varepsilon^4} B^4 s^4_\rho s_O^2)$ copies of the Choi state $J^\mcN$ to produce a classical description $\Tilde{\mcN}$ of $\mcN$. From $\Tilde{\mcN}$, any $M$ expectation values of the form $\Tr(O_i \mcN(\rho)), \leq i \in [M]$ where $O_i$ are $s_O$-Pauli-sparse observables satisfying $\|O_i\|_2 \leq B 2^{n/2}$ and $\rho$ is $s_\rho$-Pauli-sparse state can be predicted to accuracy $\varepsilon$ with probability $1-\delta$ using $\mcO(\frac{\log (\min\{M s_\rho s_O, 16^m\})+\log(1/\delta))}{\varepsilon^2}B^2 s_\rho^2 s_O )$ additional copies of $J^\mcN$ and classical computation time and space $\mcO(\frac{m^2+m\log (1/\delta)}{\varepsilon^4} B^4 s_\rho^4 s_O^2M)$.
\label{prop:caro}
\end{prop}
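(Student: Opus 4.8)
The plan is to invoke the classical-shadow channel-learning machinery of Caro~\cite{caro2022learning} essentially as a black box, after reducing our setting to the input model considered there. Under Assumption~\ref{assumption}, the Schur-basis Choi state $J^\mcN_\mathrm{Sch}$ can be prepared at unit cost; applying the inverse mixed Schur transform $W_{\mathrm{Sch}(n,m)}^\dagger$, which by Corollary~\ref{cor:main} costs $\widetilde{\mcO}((n+m)d^4)$ gates, then yields copies of the computational-basis Choi state $J^\mcN$. At this point we are exactly in the setup of~\cite{caro2022learning}: copies of $J^\mcN$ on $n+m$ qubits (we state the result for qubits, i.e.\ $d=2$; for general $d$ one replaces $n,m$ by $n\log_2 d,\ m\log_2 d$ throughout), on which one performs randomized single-copy Pauli (or Clifford) measurements to build classical shadows.

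First I would recall the two-phase structure of the protocol. In the first phase one collects $\mcO\!\left(\tfrac{m+\log(1/\delta)}{\varepsilon^4}B^4 s_\rho^4 s_O^2\right)$ classical shadows of $J^\mcN$ and stores them as the classical description $\widetilde{\mcN}$. In the second phase, for each query $(O_i,\rho)$ one forms the empirical estimator for $\Tr\!\big(\widetilde{J}^\mcN(\rho^\top\otimes\id)\,(\id\otimes O_i)\big)$ using the identity $\mcN(\rho)=\Tr_A[\widetilde{J}^\mcN_{AB}(\rho^\top_A\otimes\id_B)]$ from the preliminaries, and applies median-of-means over $\mcO\!\left(\tfrac{\log(\min\{Ms_\rho s_O,\,16^m\})+\log(1/\delta)}{\varepsilon^2}B^2 s_\rho^2 s_O\right)$ additional shadows to boost confidence uniformly over the $M$ queries (the $16^m$ cap arises from taking the union bound over all Pauli strings rather than over the queries). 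The classical runtime $\mcO\!\left(\tfrac{m^2+m\log(1/\delta)}{\varepsilon^4}B^4 s_\rho^4 s_O^2 M\right)$ is then the cost of evaluating all $M$ estimators from $\widetilde{\mcN}$.

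The heart of the argument — and the only genuinely nontrivial step — is the variance bound: the single-shot shadow estimator for $\Tr(O_i\mcN(\rho))$ must have variance $\mcO(B^2 s_\rho^2 s_O)$. This is precisely the shadow-norm computation of~\cite{caro2022learning}: expanding both $\rho$ and $O_i$ in the Pauli basis, the Pauli shadow channel acts diagonally with known eigenvalues $3^{-|P|}$, so the estimator is a sum of at most $s_\rho s_O$ Pauli--Pauli terms; bounding each contribution using $\|O_i\|_2^2/2^n\le B^2$ and using the sparsity to control the number of cross terms yields the stated variance. I would simply cite this bound rather than rederive it. The only place where one must exercise care is the qudit-to-qubit translation for $d>2$: one checks that encoding each qudit in $\lceil\log_2 d\rceil$ qubits preserves Pauli-sparsity and the normalization $\|O\|_2\le B\,2^{n/2}$ up to the substitution $n\mapsto n\log_2 d$ (and similarly for $m$), so that no hidden polynomial-in-$d$ overhead appears beyond this substitution; this is why the clean statement is phrased for qubits. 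With these points in place the proposition follows from~\cite[Corollary 4.2]{caro2022learning} applied to the Choi state produced by the mixed Schur transform.
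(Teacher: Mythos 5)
Your proposal is correct and matches the paper's treatment: Proposition~\ref{prop:caro} is stated in the paper as a direct import of Corollary~4.2 of~\cite{caro2022learning} with no proof given, the reduction being exactly the one you describe (prepare $J^\mcN$ from the Schur-basis Choi state via the mixed Schur transform, then apply the cited result for qubit channels, with the $n \mapsto n\log_2 d$, $m \mapsto m\log_2 d$ substitution for qudits). Your additional sketch of the two-phase shadow protocol and the variance bound is a faithful summary of the cited argument but is not required by, or present in, the paper.
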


Their result can naturally be generalized to approximately Pauli-sparse observables and states or under sparsity assumptions in other reasonable operator bases, see~\cite{caro2022learning}. Important examples of Pauli-sparse observables include Pauli operators of arbitrary weight, arbitrary local observables (see Examples 3.4 in~\cite{caro2022learning}), and the observables in Proposition~\ref{prop:huang} below. Pauli-sparse input states, however, are somewhat more restricted. Examples include  any $m=O(1)$-qubit states  and stabilizer subspace states, $\frac{1}{2^m}\prod_{\ell=1}^{k} (\frac{\id_{2^m}+ P_\ell}{2})$, which are $2^k$-Pauli-sparse. According to Remark 3.7 in~\cite{caro2022learning}, any Pauli-sparse states necessarily have rank $\Omega(2^m/\poly(m))$.

The above protocol works for any states and observables under sparsity assumptions. The next result due to Huang, Chen, and Preskill~\cite{huang2022learning}, on the other hand, will work for any sparse observables but only guarantees good average-case precision over certain distributions over the input state. We observe that their algorithm can be adapted to the case when only the Choi state of the channel is provided.

\begin{prop}[Predicting arbitrary observables on average, adaptation of case $\epsilon'=0$ in Theorem 14 in~\cite{huang2022learning}] Consider any quantum state distribution $\mcD$ that is invariant under single-qubit Clifford gates and any observable $O$ that is a linear combination of few-body Hermitian operators such that each qubit participates in $O(1)$ of them, with $\|O\|\leq 1$. Given a constant $\varepsilon$, there is a procedure that uses $N=\log (\frac{m}{\delta}) 2^{\mcO(\log(1/\varepsilon) \log m)}$ copies of the Choi state $J^\mcN$ and additionally $\mcO(mN)$ classical computation time to output an estimate $e(\rho,O)$ such that the following holds with probability $\geq 1-\delta$
\begin{equation}
    \mathbb{E}_{\rho \sim \mcD} |e(\rho,O) - \Tr (O \mcN(\rho)) |^2 \leq \varepsilon .
\end{equation}
\label{prop:huang}
\end{prop}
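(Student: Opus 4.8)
The strategy is to reduce the channel-learning task to the state-learning result of Huang, Chen, and Preskill~\cite{huang2022learning} by gate teleportation through the Choi state. Recall the Choi--Jamio\l kowski identity from the preliminaries, $\mcN(\rho) = \Tr_A[\Tilde{J}^\mcN_{AB}(\rho^\top_A\otimes\id_B)]$, which gives for any observable $O$ on the output register $B$
\begin{equation}
    \Tr(O\,\mcN(\rho)) = \Tr[\Tilde{J}^\mcN_{AB}(\rho^\top_A\otimes O_B)].
\end{equation}
Operationally, one copy of $J^\mcN_{AB}$, together with the input state $\rho$ prepared on a fresh register $A'$ and a Weyl-basis (Bell) measurement performed jointly on $AA'$, realizes a single use of the channel: the measurement returns a uniformly random, \emph{observed} Weyl correction $W_{a,b}$ and collapses $B$ to $\mcN(W_{a,b}\,\rho\,W_{a,b}^\dagger)$ --- this is the standard gate-teleportation calculation, identical in structure to the one carried out in the proof of Proposition~\ref{prop:cov-tele}. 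Hence any protocol that accesses $\mcN$ through $N$ input--output queries can be run verbatim using $N$ copies of $J^\mcN$ at an extra cost of $\mcO(m\log d)$ gates per copy.

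I would then invoke Theorem~14 of~\cite{huang2022learning} in its $\epsilon'=0$ specialization (the setting in which the channel is exactly the one being sampled, which holds here since we are handed true copies of $J^\mcN$). Its training phase feeds the channel random product stabilizer inputs, records a classical shadow of each output on the $n$-qubit register $B$, and fits a bounded-degree predictor; for an input distribution $\mcD$ invariant under single-qubit Clifford gates and an observable $O$ that is a sum of few-body terms with $\mcO(1)$-bounded qubit participation and $\|O\|\le 1$, it outputs $e(\rho,O)$ with $\mathbb{E}_{\rho\sim\mcD}|e(\rho,O)-\Tr(O\,\mcN(\rho))|^2\le\varepsilon$ with probability $\ge 1-\delta$ using $N=\log(m/\delta)\,2^{\mcO(\log(1/\varepsilon)\log m)}$ channel queries and $\mcO(mN)$ classical post-processing. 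Substituting the teleportation simulation of each query yields exactly the claimed copy complexity in $J^\mcN$, with the $\mcO(mN)$ teleportation gates absorbed into the stated $\mcO(mN)$ classical time; the qudit case follows by replacing $m,n$ throughout by $m\log_2 d$, $n\log_2 d$ as in the surrounding discussion.

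The one point requiring care --- and the main obstacle --- is that the teleportation-simulated queries must be statistically indistinguishable from the genuine queries used by the HCP protocol. Since $W_{a,b}$ is observed, the effective input to the channel is $W_{a,b}\,\rho\,W_{a,b}^\dagger$ with $\rho$ a uniformly random product stabilizer state; as this ensemble is invariant under tensor products of single-qubit Pauli conjugation, the effective training distribution is unchanged, and we may simply relabel $\rho\mapsto W_{a,b}\rho W_{a,b}^\dagger$ before passing the pair (input, output shadow) to the learner (if the HCP training ensemble were not Pauli-invariant, one would instead record $W_{a,b}$ alongside the shadow and fold it into the learner's input description). Likewise the test distribution $\mcD$ is Clifford-invariant, hence Pauli-invariant, so no correction is needed there. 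Every other ingredient --- the classical shadow tomography on $B$ and the polynomial-fitting step --- is untouched, so the guarantee of~\cite{huang2022learning} transfers unchanged; verifying this compatibility, rather than proving any new estimate, is the only nontrivial step.
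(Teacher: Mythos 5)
Your proposal is correct, but it takes a different route from the paper. The paper's proof observes that the two steps of the HCP training phase that touch the channel --- sampling a random product stabilizer input $\ket{\psi}$ and sending it through $\mcN$ --- are realized directly by measuring the \emph{input} register of one copy of $J^{\mcN}$ qubit-by-qubit in random Pauli bases: the measurement record tells you which product stabilizer state $\ket{\psi}$ was effectively prepared, and the output register is left in $\mcN(\ket{\psi}\bra{\psi})$, after which step (3) (random Pauli measurements on the output) proceeds as in \cite{huang2022learning}. You instead prepare the stabilizer input on a fresh $m$-qubit register and teleport it through the Choi state via a joint Bell/Weyl measurement, then absorb the observed random Pauli correction by invoking invariance of the product-stabilizer training ensemble under single-qubit Pauli conjugation. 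Both reductions feed the same data into Theorem 14 of \cite{huang2022learning} and yield the stated complexities; the paper's version is leaner in that it needs no ancillary register, no entangling $2m$-qubit measurement, and no twirl/invariance argument (only single-qubit measurements on the Choi state, in line with the ``minimal quantum resources'' point made after the proposition), whereas your teleportation argument is the natural specialization of the machinery of Proposition~\ref{prop:cov-tele} and makes explicit which structural feature (Pauli-invariance of the query ensemble) the simulation relies on. One small technical point in both write-ups, which you should state precisely if you keep your route: projecting the input half of $J^{\mcN}$ (or Bell-measuring against it) produces $\mcN$ applied to the \emph{complex conjugate} of the nominal input, i.e.\ the output register holds $\mcN(W\,\bar{\rho}\,W^{\dagger})$ rather than $\mcN(W\rho W^{\dagger})$; since complex conjugation permutes the six single-qubit stabilizer states (swapping $\ket{+i}$ and $\ket{-i}$) this is just a known relabeling within the same uniform ensemble, so the guarantee is unaffected, but the relabeling should be recorded by the learner.
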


\begin{proof}
    The main step in their algorithm is obtaining a classical shadow of the channel and Heisenberg-evolved operator $\mcN^\dagger(O)$ (Equation 7 in~\cite{huang2022learning}). This was performed by (1) sampling a random input product state, $\ket{\psi}=\bigotimes_{i=1}^{m} \ket{s_i}$, with $\ket{s_i} \in \{\ket{0},\ket{1},\ket{+},\ket{-}, \ket{+i},\ket{-i}\}$ are the eigenvectors of Pauli $Z,X,Y$ operators, (2) sending $\ket{\psi}$ through the channel, and (3) performing a random Pauli measurement on each output qubit. Note that steps (1) and (2) can as well be achieved using the Choi state by measuring in a random Pauli basis the first register of the Choi state $J^\mcN = \frac{1}{2^m} \sum_{i,j \in [2^m]} \ket{i}\bra{j} \otimes \mcN(\ket{i}\bra{j})$. The resulting state in the second register will then be $\mcN(\ket{\psi}\bra{\psi})$ for a random product stabilizer state $\ket{\psi}$ and we know what $\ket{\psi}$ is from the measurement outcomes on the first register.
\end{proof}

Finally, we note that the above results were proved and stated assuming the input and output Hilbert spaces have the same dimension $n=m$ in the respective references. Given the mild dependence on $n$ in Proposition~\ref{prop:caro}, we can simply pad, either the input or output, with a dummy fixed state to apply it to the general case when $n \neq m$. On the other hand, upon inspecting the proof of Proposition~\ref{prop:huang} in~\cite[Appendix D.4.d]{huang2022learning}, it can be seen that the exponent in the major term of the sample complexity $N$ depends on the input dimension $m$ but not $n$. Intuitively, this is because their algorithm is based on learning the Heisenberg-evolved observable $\mcN^\dagger(O)$, which lives in the input Hilbert space.

The conditions on the input state or its distribution in the above protocols might make them applicable in certain applications but not others. For example, Proposition~\ref{prop:huang} may be applicable to multiport-based teleportation~\cite{ishizaka2008asymptotic, mozrzymas2021optimal}, where the input state can be arbitrary (think $\mcD$ is the Haar-random distribution), while it is unlikely so for quantum majority vote, where the input state is a tensor product of the same qudit state (and thus the distribution is not invariant under local gates). However, the enormous practical advantage of these protocols is that they use very little quantum processing resources: Proposition~\ref{prop:caro} only uses 2-copy Bell measurements and Proposition~\ref{prop:huang} only performs qubit-wise operations on the Choi state.

\subsubsection{Block-encoding-based implementation}
Next, we consider an alternative setting based on the block-encoding framework~\cite{low2019hamiltonian, gilyen2019quantum}. The limitations will depend on the spectral properties of the Choi state. The method allows efficiently estimating the expectation value $\Tr(O \mcN(\rho))$ of an observable $O$ under the following assumption.

\begin{assumption}
One can efficiently prepare a quantum state $\sigma \in \mcB((\mathbb{C}^2)^{\otimes n})$ such that $\sigma = \frac{O^\top + c\id}{Z}$ for some normalization coefficients $c, Z$.
\label{a2}
\end{assumption}

For example, this is the case when $n=O(1)$ (so that $\mcN(\rho)$ can also be fully reconstructed in $\poly(n,d)$ time), when $O$ is low-rank, or when $O$ itself is an $n$-qudit quantum state.

The main idea is to use the following characterization of the \textit{unnormalized} Choi state
\begin{equation}
    \mcN(\rho) = \Tr_A [\Tilde{J}^\mcN_{AB} ( \rho_A^\top  \otimes \id_B)],
    \label{eq:choi-channel}
\end{equation}

We will also require that a block-encoding of the Choi state can be efficiently prepared. A standard situation where this is the case is when the Choi state is sparse in the Schur basis
and its entries are efficiently computable. This includes the cases considered in~\cite{grinko2022linear}, where the Choi state is diagonal in the Schur basis. However, there are efficient block encodings for certain structured classes of dense matrices as well~\cite{sunderhauf2023block, nguyen2022block}, 
so the methods presented here may apply more generally. We note that block-encodings can also be used to prepare quantum states, as required in Assumption~\ref{assumption}.

\begin{definition}[Block encoding \cite{gilyen2019quantum}] Suppose that $A$ is an $n$-qubit operator, $\alpha, \varepsilon>0$, then we say that an $(n+a)$-qubit unitary $U$ is an $(\alpha, \varepsilon)$-block-encoding of $A$ if
\begin{equation*}
\left\|A-\alpha(\bra{0}^{\otimes a} \otimes I) U (\ket{0}^{\otimes a} \otimes I )\right\| \leq \varepsilon,
\end{equation*}
where $\|\cdot\|$ denotes the operator norm of a matrix.
\end{definition}
Intuitively, this means $A$ is ``stored'' in the top left block of the unitary $U$
\begin{equation*}
    U \approx \begin{pmatrix}
        A/\alpha & * \\
        * & * 
    \end{pmatrix}.
\end{equation*}
Here, $\alpha \geq \|A\|$ for $U$ to be a unitary.
As an example, a unitary is $(1,0)$-block-encoding of itself. Note that often in the literature~\cite{gilyen2019quantum} the number of ancilla qubits $a$ is included in the block-encoding notation. We omit it for brevity here since we will only ever use $\Tilde{\mcO}(n)$ ancillas in this work.

Block encodings can be used as follows. Given an $n$-qubit input state $\ket{\psi}$, applying the unitary $U$ to
the state $\ket{0^a}\ket{\psi}$, measuring the first register, and postselecting on the $\ket{0^a}$ outcome, the second system contains a state proportional $A \ket{\psi}$.

Next, we describe the standard input model in the block-encoding framework.

\begin{definition}[Oracle access model]\label{def:sparse-oracle}
Let $A \in \mathbb{C}^{2^n \times 2^n}$ be a $t$-sparse matrix with bounded entries $\max_{i,j} |a_{ij}| \leq 1$. Suppose that we have the access to the following $2(n+1)$-qubit oracles
\begin{equation*}
\begin{aligned}
    &O_r: \ket{i}\ket{k} \rightarrow \ket{i} \ket{r_{ik}} & 
    0\leq  i < 2^n, 0 \leq k < t,\\
    &O_c: \ket{l}\ket{j} \rightarrow \ket{c_{lj}} \ket{j} & 0 \leq j < 2^n, 0 \leq l <t,
\end{aligned}
\end{equation*}
where $r_{ik}$ is the index for the $k$-th non-zero entry of the $i$-th row of A, or if there are less than $k$ non-zero
entries, then $r_{ik}=k+2^n$, and similarly $c_{lj}$ is the index for the $l$-th non-zero entry of the $j$-th column of $A$, or if there are less than $l$ non-zero entries, then $c_{lj}=l+2^n$. Additionally, the following oracle is provided
\begin{equation*}
    O_{A}: \ket{i}\ket{j}\ket{0}^{\otimes b} \rightarrow \ket{i}\ket{j}\ket{\Tilde{a}_{ij}}, \hspace{0.5cm}  0 \leq i,j < 2^n,
\end{equation*}
where $\Tilde{a}_{ij}$ is the $b$-qubit description of $a_{ij}$ with $b=\Theta\log(1/\varepsilon)$, and if $i$ or $j$ is out of range then $\Tilde{a}_{ij}=0$.
\end{definition}

\begin{lemma}[Block-encoding of oracle-access matrices, Lemma 48 in~\cite{gilyen2019quantum}] Provided with oracle access to a $t$-sparse matrix $A$ whose entries are bounded $B$, one can implement a $(tB, \varepsilon)$-block-encoding of $A$ with a single use of $O_r, O_c$, two uses of $O_A$, $n+3$ ancillas, and additionally $\mcO(n+\operatorname{polylog}(\frac{tB}{\varepsilon}))$ two-qubit gates.
\label{lem:sparse}
\end{lemma}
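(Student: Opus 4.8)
The plan is to reproduce the standard sparse-access block-encoding of~\cite{gilyen2019quantum} (their Lemma~48); concretely I would exhibit the circuit and track amplitudes. Besides the $n$-qubit main register, allocate an $(n+1)$-qubit ancilla that plays the double role of ``enumeration index'' and ``row index'', together with $\mcO(1)$ flag qubits (totaling the stated $n+3$ ancillas), and a $b$-qubit scratch with $b=\Theta(\log(tB/\varepsilon))$ that is computed and uncomputed internally. Build $U = R\cdot\mathrm{SWAP}\cdot L$, where $\mathrm{SWAP}$ exchanges the $(n+1)$-qubit ancilla with the main register, $L$ is a ``column-preparation'' routine built from $O_c$ and $O_A$, and $R$ is a ``row-uncomputation'' routine built from $O_r^{-1}$.

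In detail, acting on a basis state $|0\rangle_f|0\rangle_{\mathrm{row}}|j\rangle$ the routine $L$ first Hadamard-transforms the index qubits to $\tfrac1{\sqrt t}\sum_{l<t}|l\rangle$, then applies $O_c$ to replace $|l\rangle$ by the row index $c_{lj}$ of the $l$-th nonzero entry of column $j$, then uses $O_A$ to write $\tilde a_{c_{lj},j}$ onto the scratch, performs a single-qubit rotation of $f$ controlled on those $b$ bits, and uncomputes the scratch with $O_A^{-1}$, producing
\[
 \tfrac1{\sqrt t}\sum_{l}\Bigl(\tfrac{\tilde a_{c_{lj},j}}{B}\,|0\rangle_f+\sqrt{1-|\tilde a_{c_{lj},j}/B|^2}\,|1\rangle_f\Bigr)\,|c_{lj}\rangle_{\mathrm{row}}\,|j\rangle .
\]
After $\mathrm{SWAP}$ the main register holds $i=c_{lj}$ and the ancilla holds $j$; since $j$ is a nonzero column index of row $i$ it equals $r_{ik}$ for a unique $k$, so $R$ applies $O_r^{-1}$ on (main, ancilla) to turn the ancilla back into $|k\rangle$ and then Hadamards to disentangle it. Reading off the coefficient of $|0\rangle_f|0\rangle_{\mathrm{row}}|i\rangle$ in $U|0\rangle_f|0\rangle_{\mathrm{row}}|j\rangle$ gives $\tilde a_{ij}/(tB)$, so $(\langle0|_{\mathrm{anc}}\otimes I)\,U\,(|0\rangle_{\mathrm{anc}}\otimes I)=\tilde A/(tB)$ with $\|\tilde A-A\|\le t\max_{ij}|\tilde a_{ij}-a_{ij}|\le\varepsilon$ by the choice of $b$; hence $U$ is a $(tB,\varepsilon)$-block-encoding of $A$. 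For the gate count, the Hadamard layers and the SWAP are $\mcO(n)$ two-qubit gates, the rotation of $f$ (computing an inverse-sine angle from the $b$-bit value by reversible arithmetic, plus a phase for complex entries) costs $\polylog(tB/\varepsilon)$ gates, and $O_r,O_c$ are each used once and $O_A$ twice, matching the claim.

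The genuinely delicate part is the oracle bookkeeping rather than any substantive analysis: one must check that every ``out-of-range'' oracle output (fewer than $t$ nonzeros, or $c_{lj}\ge 2^n$, or $r_{ik}\ge 2^n$) is routed to a state orthogonal to $|0\rangle_f|0\rangle_{\mathrm{row}}$ so it never contaminates the top-left block, and that $O_r^{-1}$ is invoked only where it is defined, which is precisely the consistency between the nonzero pattern of $A$ described ``by rows'' and ``by columns''. Dealing with complex-valued entries (separating modulus and argument, or using a two-qubit rotation) is a second, routine, point. For the complete verification of these bookkeeping details I would ultimately defer to~\cite[Lemma~48]{gilyen2019quantum}.
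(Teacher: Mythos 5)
This lemma is imported verbatim from \cite[Lemma 48]{gilyen2019quantum} and the paper gives no proof of its own, so the only meaningful comparison is with the cited construction; your circuit $U=R\cdot\mathrm{SWAP}\cdot L$ with column-preparation via $O_c,O_A$ and row-uncomputation via $O_r^{-1}$ is exactly that construction (up to the immaterial choice of loading the full amplitude $\tilde a_{ij}/B$ on the column side rather than splitting square roots symmetrically), and your amplitude bookkeeping, error bound $\|A-\tilde A\|\le t\max_{ij}|\tilde a_{ij}-a_{ij}|$, and gate count all check out. No gaps beyond the out-of-range index handling you already flag and correctly defer to the reference.
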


We say a matrix is sparse when $t = \poly(n)$, in which case many common matrix operations on $A$ can be performed efficiently~\cite{gilyen2019quantum}. In the setting of unitary-equivariant channels, we require that the entries of the Choi state \textit{in the Schur basis} are an efficiently computable function of the indices in order to use this oracle model. For example, this can be the output of the algorithm in~\cite{grinko2022linear}.

\begin{lemma}[Block-encoding Schur-sparse Choi state and its square root]
\label{lem:blockencodeChoi} Suppose $\mcN$ is an $n$-to-$m$ qudit unitary-equivariant channel such that its unnormalized Choi matrix $\Tilde{J}^\mcN$ is $t$-sparse in the Schur basis. Furthermore, suppose the entries of $\Tilde{J}^\mcN$ in the Schur basis are bounded above by $B$ and the minimum nonzero eigenvalue of $\Tilde{J}^\mcN$ is bounded from below by $\kappa$. Then using the oracle in Definition~\ref{def:sparse-oracle}, one can implement a $(tB, \varepsilon )$-block-encoding of $\Tilde{J}^\mcN$ using $a=\Tilde{\mcO}((n+m)d)$\footnote{The $\widetilde{\mcO}(\cdot)$ notation hides $\log n$, $\log m$, $\log d$, and $\log (1/\varepsilon)$ factors, where $\varepsilon$ is the target precision in diamond norm. 
 } ancillas and $\Tilde{\mcO}((n+m)d^4)$ two-qubit gates. Furthermore, if $\Tilde{J}^\mcN$ is diagonal in Schur basis~\cite{grinko2022linear} ($t=1$), then one can implement a $(\sqrt{B},\varepsilon)$-block-encoding of $(\Tilde{J}^\mcN)^{1/2}$ with the same complexity. Generally if $t > 1$, then one can instead obtain a $(\sqrt{tB},\varepsilon)$-block-encoding with complexity $\Tilde{\mcO}((n+m)d^4tB/\kappa)$.
\end{lemma}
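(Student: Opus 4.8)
The plan is to combine the sparse-oracle block-encoding primitive (Lemma~\ref{lem:sparse}) with the fact that the Schur basis is efficiently computable, and then use the mixed Schur transform (Corollary~\ref{cor:main}) to conjugate back into the computational basis. First I would observe that the hypothesis gives us oracle access (Definition~\ref{def:sparse-oracle}) to $\Tilde{J}^\mcN$ \emph{in the Schur basis}: the row/column-index oracles $O_r, O_c$ follow from $t$-sparsity, and the entry oracle $O_{\Tilde{J}}$ follows from the assumed efficient computability of Schur-basis entries. Applying Lemma~\ref{lem:sparse} then yields a $(tB, \varepsilon)$-block-encoding of the Schur-basis matrix of $\Tilde{J}^\mcN$ using $\mcO(N + \polylog(tB/\varepsilon))$ two-qubit gates, where $N = \widetilde{\mcO}((n+m)d)$ is the number of qubits holding a Schur-basis label (staircase, Gelfand pattern, Bratteli path), as counted in the proof of Corollary~\ref{cor:main}. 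Conjugating this block-encoding by $W_{\mathrm{Sch}(n,m)}$ and $W_{\mathrm{Sch}(n,m)}^\dagger$ on the system register — which costs $\widetilde{\mcO}((n+m)d^4)$ gates each — turns it into a block-encoding of $\Tilde{J}^\mcN$ itself in the computational basis, with the same normalization $tB$ and error $\varepsilon$ up to the diamond/operator-norm error incurred by the approximate Schur transform, which can be absorbed into $\varepsilon$ by taking the Schur-transform precision polynomially smaller. The ancilla count is $\widetilde{\mcO}((n+m)d^4)$ for the Schur transform plus $N+\mcO(1)$ for Lemma~\ref{lem:sparse}, i.e. $\widetilde{\mcO}((n+m)d)$, and the gate count is $\widetilde{\mcO}((n+m)d^4)$, dominated by the two Schur transforms.

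For the square-root claim when $\Tilde{J}^\mcN$ is diagonal in the Schur basis ($t=1$), I would note that a diagonal matrix with efficiently computable, nonnegative entries bounded by $B$ admits a trivial $(\sqrt{B}, \varepsilon)$-block-encoding of its entrywise square root: simply compute $\sqrt{\Tilde{a}_{ii}/B}$ into an ancilla register, use it as a rotation angle on a flag qubit, and uncompute — this is again $\widetilde{\mcO}((n+m)d)$ ancillas and $\widetilde{\mcO}((n+m)d^4)$ gates after conjugating by the Schur transform. For general $t > 1$, $(\Tilde{J}^\mcN)^{1/2}$ is no longer sparse, so instead I would apply quantum singular value transformation (Gilyén et al.) to the $(tB,\varepsilon)$-block-encoding of $\Tilde{J}^\mcN$, using a polynomial approximation of $x \mapsto \sqrt{x}$ on the spectral interval $[\kappa/(tB), 1]$. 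Standard polynomial-approximation bounds give degree $\widetilde{\mcO}(tB/\kappa)$ for the square-root function on that interval, so QSVT produces a $(\sqrt{tB},\varepsilon)$-block-encoding of $(\Tilde{J}^\mcN)^{1/2}$ with a multiplicative $\widetilde{\mcO}(tB/\kappa)$ overhead in gate count, giving $\widetilde{\mcO}((n+m)d^4 \, tB/\kappa)$ as claimed.

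The main obstacle, and the step that needs the most care, is the error bookkeeping across the basis change: the Schur transform is implemented only approximately (to diamond-norm precision), so I must verify that conjugating an $(\alpha,\varepsilon)$-block-encoding by an operator that is $\delta$-close to unitary yields an $(\alpha, \varepsilon + \mcO(\alpha\delta))$-block-encoding, and then choose $\delta = \varepsilon/(\mathrm{poly}\,\alpha)$ so the final error is still $\varepsilon$; since $\alpha = tB$ is at most polynomially large and $\delta$ enters the Schur-transform cost only through $\polylog(1/\delta)$, this does not change the stated complexities. A secondary point is confirming that the Schur-basis sparsity oracles can genuinely be built from the stated hypotheses — in particular that, given the Bratteli-path/Gelfand-pattern encoding, one can list the (at most $t$) nonzero entries in each row efficiently; this is immediate from the assumed efficient computability of the Schur-basis entry function but should be stated explicitly. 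Everything else is a routine assembly of Lemma~\ref{lem:sparse}, Corollary~\ref{cor:main}, and (for the $t>1$ square root) the QSVT polynomial-approximation machinery.
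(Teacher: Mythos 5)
Your proposal is correct and follows essentially the same route as the paper's proof: apply Lemma~\ref{lem:sparse} to the $t$-sparse Schur-basis matrix, conjugate by the mixed Schur transform (Corollary~\ref{cor:main}) to return to the computational basis, handle the diagonal square root by querying square roots of entries, and use QSVT with a degree-$\widetilde{\mcO}(tB/\kappa)$ approximation of $\sqrt{x}$ for $t>1$. Your explicit error bookkeeping for the approximate Schur transform is a welcome addition that the paper leaves implicit.
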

\begin{proof}
    Denote by $\Tilde{J}_\mathrm{Sch} = W_{\mathrm{Sch}(n,m)}^\dagger \Tilde{J}^\mcN W_{\mathrm{Sch}(n,m)}$ the unnormalized Choi matrix in the Schur basis, which is $t$-sparse by assumption.  
    Applying Lemma~\ref{lem:sparse} we get a $(tB, \varepsilon)$-block-encoding $E_\mathrm{Sch}$ of $\Tilde{J}_\mathrm{Sch}^\top$ using $a= (n+m) \log d +3$ ancillas and gate complexity $\Tilde{\mcO}((n+m)d)$. Next we obtain a $(tB, \varepsilon )$-block-encoding $E$ of $(\Tilde{J}^\mcN)\top$ by taking $(\id \otimes W_{\mathrm{Sch}(n,m)}^*) E_\mathrm{Sch} (\id \otimes W_{\mathrm{Sch}(n,m)}^\top)$, where the identity operator $\id$ acts on the ancilla qubits of the block encoding. Note that the transpose and the complex conjugate of $W_{\mathrm{Sch}(n,m)}$ can be implemented by reversing the gate order and taking the complex conjugate of the local gates of the mixed Schur transform, respectively. As shown in the previous sections, the gate complexity of the mixed Schur transform is $\Tilde{\mcO}((n+m)d^4)$.

    Next, we describe how to implement a block-encoding of $(\Tilde{J}^\mcN)^{1/2}$. This can be easily achieved if $\Tilde{J}_\mathrm{Sch}$ is diagonal as considered in~\cite{grinko2022linear}, as we can instead query the square root of the entries of it and get a $(\sqrt{B},\varepsilon)$-block-encoding. If $t\neq 1$, we can instead obtain $(\Tilde{J}^\mcN)^{1/2}$ from the block-encoding of $\Tilde{J}^\mcN$ by using the quantum singular value transform to perform a degree-$\Tilde{\mcO}(tB/\kappa)$ polynomial approximating the function $f(x)=\sqrt{x}$~\cite[Corollary 55]{chakraborty2018power}. 
\end{proof}

\begin{prop}[Block-encoding-based implementation]
\label{prop:amplitude} Suppose we can efficiently prepare a quantum state $\sigma \in \mcB((\mathbb{C}^2)^{\otimes n})$ as described in Assumption~\ref{a2}. In addition, suppose we can prepare a $(\sqrt{tB},\frac{\varepsilon}{tBZ})$-block-encoding $U$ of $(\Tilde{J}^\mcN)^{\top 1/2}$.
Then, we can estimate $\Tr(O \mcN(\rho))$ to precision $2\varepsilon$ using $\mcO(tBZ/\varepsilon)$ queries to $U$ and elementary gates.
\end{prop}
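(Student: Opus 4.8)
The plan is to rewrite $\Tr(O\,\mcN(\rho))$ as an affine function of a single scalar overlap built from $\Tilde{J}^\mcN$, $\rho$, and $\sigma$, and then to estimate that overlap as a $\ket{0^a}$-acceptance probability of the block-encoding $U$ by amplitude estimation. Write $A$ for the $m$-qudit register and $B$ for the $n$-qudit register as in Eq.~\eqref{eq:choi-channel}. Starting from $\mcN(\rho)=\Tr_A[\Tilde{J}^\mcN_{AB}(\rho_A^\top\otimes\id_B)]$ one gets $\Tr(O\,\mcN(\rho))=\Tr_{AB}[\Tilde{J}^\mcN_{AB}(\rho_A^\top\otimes O_B)]$; applying $\Tr X=\Tr X^\top$ turns the right-hand side into $\Tr_{AB}[(\Tilde{J}^\mcN_{AB})^\top(\rho_A\otimes O_B^\top)]$ (note the transpose trick conveniently removes the transpose on $\rho$). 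Assumption~\ref{a2} supplies $O^\top=Z\sigma-c\,\id$, hence
\begin{equation*}
\Tr(O\,\mcN(\rho))=Z\,\Tr_{AB}\!\big[(\Tilde{J}^\mcN_{AB})^\top(\rho_A\otimes\sigma_B)\big]-c\,\Tr_{AB}\!\big[(\Tilde{J}^\mcN_{AB})^\top(\rho_A\otimes\id_B)\big].
\end{equation*}
The second term equals $c$: since $\Tr_B[(\Tilde{J}^\mcN_{AB})^\top]=(\Tr_B\Tilde{J}^\mcN_{AB})^\top=\id_A$ (the identity $\Tr_B\Tilde{J}^\mcN_{AB}=\id_A$ being exactly trace-preservation of $\mcN$ read off Eq.~\eqref{eq:choi-channel}), that term is $c\,\Tr\rho=c$. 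So it suffices to estimate the scalar $q:=\Tr_{AB}[(\Tilde{J}^\mcN_{AB})^\top(\rho_A\otimes\sigma_B)]$ and output $Zq-c$.

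To estimate $q$, observe that $\Tilde{J}^\mcN$ is a positive multiple of the Choi matrix of the completely positive map $\mcN$, so $\Tilde{J}^\mcN\succeq0$, $(\Tilde{J}^\mcN)^\top=M^2$ with $M:=(\Tilde{J}^\mcN)^{\top 1/2}$ Hermitian, and $\|M\|\le\sqrt{tB}$ (since $U$ is a $(\sqrt{tB},\cdot)$-block-encoding of $M$). Prepare purifications $\ket{\phi_\rho}_{AA'}$ of $\rho$ (from the given access to the input state) and $\ket{\phi_\sigma}_{BB'}$ of $\sigma$ (efficiently, by Assumption~\ref{a2}), and put $\ket v:=\ket{\phi_\rho}_{AA'}\ket{\phi_\sigma}_{BB'}$; then $q=\Tr_{AB}[M^2(\rho_A\otimes\sigma_B)]=\|(M_{AB}\otimes\id_{A'B'})\ket v\|^2$. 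Apply $U$ to $\ket{0^a}\ket v$ (ancilla together with $AB$) and measure the ancilla: because $U$ is a $(\sqrt{tB},\tfrac{\varepsilon}{tBZ})$-block-encoding of $M$, the $\ket{0^a}$ branch carries the unnormalized state $\tfrac{1}{\sqrt{tB}}\tilde M\ket v$ with $\|\tilde M-M\|\le\tfrac{\varepsilon}{tBZ}$, so $p:=\Prob[0^a]=\tfrac{1}{tB}\|\tilde M\ket v\|^2$ obeys $\big|\,tB\,p-q\,\big|\le\tfrac{\varepsilon}{tBZ}\big(2\sqrt{tB}+\tfrac{\varepsilon}{tBZ}\big)$, i.e.\ $Z|tB\,p-q|\le\varepsilon$ for $tB\gtrsim1$; in particular $p\le1$ up to this negligible error (because $q\le\|\Tilde{J}^\mcN\|\le tB$).

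Finally I would run amplitude estimation with ``good'' subspace $\ket{0^a}\otimes(\cdot)$ on the routine that prepares $\ket{0^a}\ket v$ and applies $U$ — one call each to the state preparations and to $U,U^\dagger$ per Grover iterate. With $N$ iterates amplitude estimation returns $\tilde p$ with $|\tilde p-p|=\mcO(\sqrt{p}/N+1/N^2)=\mcO(1/N)$ since $p\le1$, so choosing $N=\mcO(tBZ/\varepsilon)$ gives $ZtB|\tilde p-p|\le\varepsilon$, and then $Z\cdot tB\cdot\tilde p-c$ estimates $\Tr(O\,\mcN(\rho))=Zq-c$ to within $2\varepsilon$ (one $\varepsilon$ from the block-encoding, one from amplitude estimation). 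The total cost is $\mcO(tBZ/\varepsilon)$ queries to $U$ and to the state preparations of $\rho$ and $\sigma$, together with $\mcO(tBZ/\varepsilon)$ auxiliary two-qubit gates for the reflections. The argument is essentially an assembly of Eq.~\eqref{eq:choi-channel}, Assumption~\ref{a2}, positivity of the Choi matrix, and textbook amplitude estimation; the only point needing care is the error accounting of the previous paragraph, namely checking that the block-encoding precision $\tfrac{\varepsilon}{tBZ}$ demanded in the hypothesis is precisely what survives the squaring $M\mapsto\|M\ket v\|^2$ and the rescaling by $ZtB$.
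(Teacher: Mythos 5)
Your proposal is correct and follows essentially the same route as the paper: apply the block-encoding of $(\Tilde{J}^\mcN)^{\top 1/2}$ to $\rho\otimes\sigma$, read off $\frac{1}{tBZ}(\Tr(O\,\mcN(\rho))+c)$ as the $\ket{0^a}$-acceptance probability via Eq.~\eqref{eq:choi-channel} and Assumption~\ref{a2}, and recover the target by amplitude estimation with $\mcO(tBZ/\varepsilon)$ iterates. Your additions — working with purifications so the estimated quantity is a genuine amplitude, and the explicit propagation of the block-encoding error through the squaring $M\mapsto\|M\ket v\|^2$ — only tighten details the paper leaves implicit.
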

\begin{proof} 
    Let $U$ be the block encoding of $(\Tilde{J}^\mcN)^{\top 1/2}$ and let $a$ be the number of ancilla qubits it uses. We will apply $U$ on $\ket{0^a}\bra{0^a}\otimes \rho \otimes \sigma $ and then measure the ancilla register. Notice that the probability of measuring $\ket{0^a}$ is
    \begin{align*}
        &\Tr ( \bra{0^a} U \ket{0^a} \rho \otimes \sigma \bra{0^a} U^\dagger \ket{0^a}  )\\
        & \overset{\frac{\varepsilon}{tBZ}}{\approx} \frac{1}{tB} \Tr ((\Tilde{J}^\mcN)^{\top 1/2} \rho \otimes 
        \sigma (\Tilde{J}^\mcN)^{\top 1/2} )\\
        & =  \frac{1}{tB Z} \Tr ((\Tilde{J}^\mcN)^{\top} \rho \otimes  (O^\top +c \id) )  \\
        & =  \frac{1}{tB Z} (\Tr (\mcN(\rho)  O ) +c) .
    \end{align*}
Thus, estimating this probability to precision $\varepsilon/tBZ$ using $\mcO(tBZ/\varepsilon)$ iterations of amplitude estimation~\cite{brassard2002quantum} gives a $2\varepsilon$-error estimate of $\Tr (\mcN(\rho) O))$.
\end{proof}
We now combine Lemma~\ref{lem:blockencodeChoi} and Propostion~\ref{prop:amplitude}. First of all, note that Lemma~\ref{lem:blockencodeChoi} can be easily modified to prepare a block encoding of $(\Tilde{J}^\mcN)^{\top 1/2}$ (note the transpose). Therefore, for Choi state diagonal in the Schur basis ($t=1$), we can estimate $\Tr(\mcN(\rho) O)$ to precision $2\varepsilon$ using $\mcO(tBZ)$ queries to the oracles in Assumption~\ref{a2} and Definition~\ref{def:sparse-oracle} and
$\Tilde{\mcO}(tBZ(n+m)d^4)$ gate complexity. For Choi states that are $t$-sparse in Schur basis, the previous complexities become $\mcO(tBZ)$ queries and
$\Tilde{\mcO}(t^2 \frac{B^2}{\kappa} Z(n+m)d^4)$ gate complexity, respectively.

In addition, we note that compositioning channels is straightforward in the block-encoding framework. For example, $\mcN_{A \rightarrow B}$ and $\mcN'_{B\mapsto C}$ can be composed by multiplying the two unitaries block-encoding $(\Tilde{J}^\mcN)^{\top 1/2}_{AB}\otimes \id_C$ and $\id_A \otimes  (\Tilde{J}^{\mcN'})^{\top 1/2}_{BC}$. This operation can be done efficiently by~\cite[Lemma 53]{gilyen2019quantum}.

\begin{remark}
Let us comment on the parameters $B,\kappa$, and $Z$. Here, $Z$ is the normalization coefficient when encoding the observable $O$ into a quantum state, so it is roughly the trace norm $\|O\|_1$ of the observable $O$, which is small when $n=O(1)$, when $O$ is low-rank, or when $O$ itself is a quantum state. These complement the Pauli-sparse observables considered in the shadow tomography protocols in Propositions~\ref{prop:caro} and~\ref{prop:huang}. Next, the parameters $B$ and $\kappa$ depend on the spectral properties of the unnormalized Choi state $\Tilde{J}^\mcN$. If the sparsity $t$ is small, it can be seen that $B$ is roughly the operator norm of the \textit{unnormalized} Choi state. On the other hand, $\|\Tilde{J}^\mcN\|_1 = d^m$. Thus, for $B$ to be small, i.e. $\poly(n,m,d)$, we necessarily require $\Tilde{J}^\mcN$ to have $\Omega(\frac{d^m}{\poly(n,m,d)})$ rank or equivalently, the channel $\mcN$ should have high Kraus-rank. When $t > 1$, the complexity further depends on the ratio $B/\kappa$, which is roughly the condition number of $J^\mcN$. So we further require the condition number of $J^\mcN$ to be small in this case.
\end{remark}

\subsection{Other potential applications}\label{sec:other-applications}
We discuss other possible applications and open questions regarding the mixed Schur transform.

\paragraph{Port-based teleportation} In the previous section, we gave several alternatives to implement \textit{general} unitary-equivariant channels. However, one can likely exploit the finer details of specific applications for this. In a concurrent work~\cite{grinko2023gelfand}, the authors give an efficient implementation for the measurements in (single)port-based teleportation~\cite{ishizaka2008asymptotic} by using the mixed Schur transform with $m=1$ (see also a concurrent work of~\cite{fei2023efficient}). Port-based teleportation is a seminal protocol~\cite{ishizaka2008asymptotic} that, surprisingly, removes the need for the correction step of the receiving party. It has applications in universal programmable quantum processors, quantum cryptography~\cite{beigi2011simplified}, black-box unitary transformation~\cite{yoshida2022reversing}, establishing connections between quantum communication complexity and Bell's inequality violations~\cite{buhrman2016quantum}, etc.
Constructing an efficient implementation for the multi-port case, where Ref.~\cite{mozrzymas2021optimal} has recently settled the solutions for the optimal performance and measurements, is thus of great interest.

\paragraph{Quantum SDP solver with symmetry constraints} Quantum SDP solvers~\cite{brandao2017quantum, van2018improvements} are an important recent advance in quantum algorithms that offer quadratic speedups in certain settings.
They are based on quantizing the seminal algorithm of Arora and Kale~\cite{arora2007combinatorial}. It is evident from the Arora-Kale algorithm (cf.~\cite[Section 2]{van2018improvements}) that the output solution inherits the block structure from the constraint matrices, thus it can be adapted to SDP with symmetry constraints as long as the irrep decomposition of the symmetry is efficiently computable.
For this reason, we believe that applying quantum SDP solvers to problems with symmetry constraints could gain larger speedups being combined with efficient quantum implementations of representation-theoretic transforms--another thing that quantum computers are good at~\cite{moore2006generic}.
The mixed Schur transform studied in this work could be used as a subroutine in such problems with unitary-equivariant or partially transposed permutation constraints.

\paragraph{Learning and property testing} Testing symmetries of quantum systems is a fundamental task in physics and quantum information. Recent algorithms~\cite{laborde2021testing} have been proposed for finite groups (using quantum Fourier transforms) and small instances of continuous symmetries.
The mixed Schur transform should naturally be useful in testing unitary-equivariance of quantum processes, a  prototypical case of which is unitary dynamics with $\mathbf{U}_d$-invariant Hamiltonian. Mixed tensor representations also appear in the notion of strong unitary-designs~\cite{nakata2021quantum}, where they are applied to high-order randomized benchmarking protocols and testing self-adjointness of quantum channels. Another important task is testing time-reversal symmetry of a unitary (i.e. whether $U=\Bar{U}$). One natural approach is perhaps to pretend $U^{\otimes (n+m)}=U^{\otimes n} \otimes \Bar{U}^{\otimes m}$ and perform both Schur and mixed Schur sampling~\cite{childs2007weak} and compare their outcomes.

In unitary process learning, one can ask whether being given access to the $\Bar{U}$ in addition to $U$ can improve the learner's sample or time complexities. Vice versa, black-box unitary transformation protocols such as $U \mapsto U^\dagger$~\cite{yoshida2022reversing} and $U \mapsto \Bar{U}$~\cite{grinko2022linear} may have implications to whether having additional accesses to $U^\dagger$ or $\Bar{U}$ helps in unitary process tomography.

\begin{remark} The mixed Schur transform is unlikely to be applicable to quantum states. This is because the dual irrep $\Bar{U}$ should generally be $(A^{-1})^\top$ for general invertible matrices $A \in \mathbf{GL}_n$. However, this would translate to $(\rho^{-1})^\top/\Tr(\rho^{-1})$ for quantum states.
\end{remark}

\paragraph{Extension of permutational quantum computing} Estimating irrep matrix entries of $\mathfrak{S}_n$ (i.e. strong Schur sampling) is a defining problem of the complexity class Permutational Quantum Polynomial (PQP) introduced by Jordan~\cite{jordan2010permutational}. Roughly speaking, the problem is to estimate the amplitude $|\bra{\lambda,q_\lambda,p_\lambda}_\mathrm{Sch} \psi_n^d (\pi) \ket{\mu,q_\mu,p_\mu}_\mathrm{Sch} |^2$ to additive error for given Schur basis vectors $\ket{\lambda,q_\lambda,p_\lambda}_\mathrm{Sch}$, $\ket{\mu,q_\mu,p_\mu}_\mathrm{Sch}$, and a permutation $\pi$ (cf. Eq.~\eqref{eq:psi_n^d}). This task can be achieved using the Schur transform.  However, it was shown to be classically tractable~\cite{havlivcek2018quantum}. To overcome this,~\cite{zheng2022super} extended PQP via Hamiltonian simulation techniques. Their idea is to replace the permutation $\psi_n^d (\pi)$ by the unitary evolution of a Hamiltonian $H$ that is a linear combination of polynomially many different permutations $\psi_n^d (\pi)$. Similarly, the mixed Schur transform can be used to strongly sample from $\mathcal{A}^d_{n,m}$ irreps, thus offering yet another extension of PQP (one may call it ``partially transposed'' PQP).
In particular, we can estimate (and sample from) mixed Schur basis elements $|\bra{\gamma,q_\gamma,p_\gamma}_\mathrm{Sch} e^{-iHt} \ket{\mu,q_\mu,p_\mu}_\mathrm{Sch} |^2$, where the Hamiltonian $H$ is a linear combination of walled Brauer diagrams. For example, $H$ could be an antiferromagnetic Lie-superalgebra spin chain studied in~\cite{candu2011continuum}. The basis elements $\ket{\gamma,q_\gamma,p_\gamma}_\mathrm{Sch}$ can be prepared efficiently using the mixed Schur transform, whereas $H$, being a polynomially sparse Hamiltonian, can be efficiently simulated by standard block-encoding techniques~\cite{low2019hamiltonian, gilyen2019quantum}. Therefore, this problem can be solved efficiently on a quantum computer, and generalizes both PQP~\cite{jordan2010permutational} and PQP+~\cite{zheng2022super}. On the other hand, for the same reasoning as in~\cite{zheng2022super} - namely the symmetric group Fourier transform takes classical exponential time, we conjecture that this model is not classically simulable.

\paragraph{Quantum complexity of representation-theoretic problems} Recent works~\cite{ikenmeyer2023remark, bravyi2023quantum} studied the quantum complexity of Kronecker coefficients. Roughly speaking, the Kronecker coefficients are the analog of Clebsch-Gordan coefficients for $\mathfrak{S}_n$. While it is still unknown if computing them is in $\mathsf{\#P}$, the authors showed that this problem is in $\mathsf{\#BQP}$ using the generalized phase estimation algorithm~\cite[Section 8]{harrow2005applications}, which is in turn based on the $\mathfrak{S}_n$ quantum Fourier transform~\cite{beals1997quantum}.
What is the complexity of the (generalized) Kronecker coefficients of the walled Brauer algebra? While it seems reasonable they should have the same complexity as the regular Kronecker coefficients, a prerequisite would be an efficient quantum circuit for a Fourier transform of the walled Brauer algebra. It is thus an interesting open question if such a transform exists, and if so whether it could be achieved using the mixed Schur transform.

\printbibliography

@article{zheng2022super,
  title={On the Super-exponential Quantum Speedup of Equivariant Quantum Machine Learning Algorithms with SU($ d $) Symmetry},
  author={Zheng, Han and Li, Zimu and Liu, Junyu and Strelchuk, Sergii and Kondor, Risi},
  journal={arXiv e-prints},
  pages={arXiv--2207},
  year={2022},
  doi={https://doi.org/10.48550/arXiv.2207.07250}
}

@inproceedings{bacon2007quantum,
  title={The quantum Schur and Clebsch-Gordan transforms: I. efficient qudit circuits},
  author={Bacon, Dave and Chuang, Isaac L and Harrow, Aram W},
  booktitle={Proceedings of the eighteenth annual ACM-SIAM symposium on Discrete algorithms},
  pages={1235--1244},
  year={2007}
}

@book{nielsen2000quantum,
 author = {Michael A. Nielsen and Isaac L. Chuang},
 year = {2000},
 title = {Quantum Computation and Quantum Information},
 publisher = {Cambridge University Press},
 address = {Cambridge}
}

@article{nguyen2022theory,
  title={Theory for Equivariant Quantum Neural Networks},
  author={Nguyen, Quynh T and Schatzki, Louis and Braccia, Paolo and Ragone, Michael and Coles, Patrick J and Sauvage, Frederic and Larocca, Martin and Cerezo, M},
  journal={arXiv preprint arXiv:2210.08566},
  year={2022}
}

@article{low2019hamiltonian,
  title={Hamiltonian simulation by qubitization},
  author={Low, Guang Hao and Chuang, Isaac L},
  journal={Quantum},
  volume={3},
  pages={163},
  year={2019},
  publisher={Verein zur F{\"o}rderung des Open Access Publizierens in den Quantenwissenschaften},
  url={https://quantum-journal.org/papers/q-2019-07-12-163/}
}

@article{harrow2005applications,
  title={Applications of coherent classical communication and the Schur transform to quantum information theory},
  author={Harrow, Aram W},
  journal={arXiv preprint quant-ph/0512255},
  year={2005},
  url={https://arxiv.org/abs/quant-ph/0512255}
}

@article{collins2006integration,
  title={Integration with respect to the Haar measure on unitary, orthogonal and symplectic group},
  author={Collins, Beno{\^\i}t and {\'S}niady, Piotr},
  journal={Communications in Mathematical Physics},
  volume={264},
  number={3},
  pages={773--795},
  year={2006},
  publisher={Springer},
  url={https://link.springer.com/article/10.1007%2Fs00220-006-1554-3},
  doi={	10.1007/s00220-006-1554-3}
}

@article{brandao2016local,
  title={Local random quantum circuits are approximate polynomial-designs},
  author={Brandao, Fernando GSL and Harrow, Aram W and Horodecki, Micha{\l}},
  journal={Communications in Mathematical Physics},
  volume={346},
  number={2},
  pages={397--434},
  year={2016},
  publisher={Springer},
  url={https://link.springer.com/article/10.1007%2Fs00220-016-2706-8},
  doi={10.1007/s00220-016-2706-8}
}

@book{goodman2009symmetry,
  title={Symmetry, representations, and invariants},
  author={Goodman, Roe and Wallach, Nolan R},
  volume={255},
  year={2009},
  publisher={Springer}
}

@book{watrous2018thetheory, 
    title={The Theory of Quantum Information},
    publisher={Cambridge University Press}, 
    author={Watrous, John}, 
    year={2018}
}

@article{marvian2022restrictions,
  title={Restrictions on realizable unitary operations imposed by symmetry and locality},
  author={Marvian, Iman},
  journal={Nature Physics},
  volume={18},
  number={3},
  pages={283--289},
  year={2022},
  publisher={Nature Publishing Group},
  url={https://www.nature.com/articles/s41567-021-01464-0}
}

@article{kempe2001theory,
  title={Theory of decoherence-free fault-tolerant universal quantum computation},
  author={Kempe, Julia and Bacon, Dave and Lidar, Daniel A and Whaley, K Birgitta},
  journal={Physical Review A},
  volume={63},
  number={4},
  pages={042307},
  year={2001},
  publisher={APS},
  url={https://journals.aps.org/pra/abstract/10.1103/PhysRevA.63.042307},
  doi={10.1103/PhysRevA.63.042307}
}

@article{krovi2019efficient,
  title={An efficient high dimensional quantum Schur transform},
  author={Krovi, Hari},
  journal={Quantum},
  volume={3},
  pages={122},
  year={2019},
  publisher={Verein zur F{\"o}rderung des Open Access Publizierens in den Quantenwissenschaften},
  url={https://quantum-journal.org/papers/q-2019-02-14-122/},
  doi={10.22331/q-2019-02-14-122}
}

@book{hayashi2017group,
  title={Group representation for quantum theory},
  author={Hayashi, Masahito},
  year={2017},
  publisher={Springer}
}

@article{haferkamp2022random,
  title={Random quantum circuits are approximate unitary $t$-designs in depth {$O\left (nt^{5+ o(1)}\right)$}},
  author={Haferkamp, Jonas},
  journal={arXiv preprint arXiv:2203.16571},
  year={2022},
  url={https://arxiv.org/abs/2203.16571}
}

@article{childs2010quantum,
	doi = {10.1103/revmodphys.82.1},
	url = {https://doi.org/10.1103%2Frevmodphys.82.1},
	year = 2010,
	month = {jan},
	publisher = {American Physical Society ({APS})},
	volume = {82},
	number = {1},
	pages = {1--52},
	author = {Andrew M. Childs and Wim van Dam},
	title = {Quantum algorithms for algebraic problems},
	journal = {Reviews of Modern Physics}
}

@article{gilyen2019quantum,
  title={Quantum singular value transformation and beyond: exponential improvements for quantum matrix arithmetics},
  author={Gily{\'e}n, Andr{\'a}s and Su, Yuan and Low, Guang Hao and Wiebe, Nathan}, 
  year = {2018},
keywords = {Quantum Physics (quant-ph)},
url = {https://arxiv.org/abs/1806.01838},
journal={arXiv preprint arXiv:1806.01838}
}

@article{iten2016quantum,
  title={Quantum circuits for isometries},
  author={Iten, Raban and Colbeck, Roger and Kukuljan, Ivan and Home, Jonathan and Christandl, Matthias},
  journal={Physical Review A},
  volume={93},
  number={3},
  pages={032318},
  year={2016},
  publisher={APS},
  doi={10.1103/PhysRevA.93.032318},
  url={https://link.aps.org/doi/10.1103/PhysRevA.93.032318}
}

@article{grinko2022linear,
  title={Linear programming with unitary-equivariant constraints},
  author={Grinko, Dmitry and Ozols, Maris},
  journal={arXiv preprint arXiv:2207.05713},
  year={2022},
  url={https://arxiv.org/abs/2207.05713}
}

@article{jordan2010permutational,
  title={Permutational quantum computing},
  author={Jordan, Stephen P},
  journal={Quantum Information \& Computation},
  volume={10},
  number={5},
  pages={470--497},
  year={2010},
  publisher={Rinton Press, Incorporated Paramus, NJ},
  doi = {10.5555/2011362.2011369}
}

@article{koike1989decomposition,
  title={On the decomposition of tensor products of the representations of the classical groups: by means of the universal characters},
  author={Koike, Kazuhiko},
  journal={Advances in Mathematics},
  volume={74},
  number={1},
  pages={57--86},
  year={1989},
  publisher={Academic Press},
  doi = {https://doi.org/10.1016/0001-8708(89)90004-2}
}

@article{havlivcek2018quantum,
  title={Quantum schur sampling circuits can be strongly simulated},
  author={Havl{\'\i}{\v{c}}ek, Vojt{\v{e}}ch and Strelchuk, Sergii},
  journal={Physical review letters},
  volume={121},
  number={6},
  pages={060505},
  year={2018},
  publisher={APS},
  doi={https://doi.org/10.1103/PhysRevLett.121.060505}
}

@article{haah2017sample,
  title={Sample-optimal tomography of quantum states},
  author={Haah, Jeongwan and Harrow, Aram W and Ji, Zhengfeng and Wu, Xiaodi and Yu, Nengkun},
  journal={IEEE Transactions on Information Theory},
  volume={63},
  number={9},
  pages={5628--5641},
  year={2017},
  publisher={IEEE}
}

@article{brandao2017quantum,
  title={Quantum SDP solvers: Large speed-ups, optimality, and applications to quantum learning},
  author={Brand{\~a}o, Fernando GSL and Kalev, Amir and Li, Tongyang and Lin, Cedric Yen-Yu and Svore, Krysta M and Wu, Xiaodi},
  journal={arXiv preprint arXiv:1710.02581},
  year={2017},
  url={https://doi.org/10.48550/arXiv.1710.02581}
}

@article{kong2022near,
  title={Near-optimal covariant quantum error-correcting codes from random unitaries with symmetries},
  author={Kong, Linghang and Liu, Zi-Wen},
  journal={PRX Quantum},
  volume={3},
  number={2},
  pages={020314},
  year={2022},
  publisher={APS},
  url={https://journals.aps.org/prxquantum/abstract/10.1103/PRXQuantum.3.020314}
}

@article{faist2020continuous,
  title={Continuous symmetries and approximate quantum error correction},
  author={Faist, Philippe and Nezami, Sepehr and Albert, Victor V and Salton, Grant and Pastawski, Fernando and Hayden, Patrick and Preskill, John},
  journal={Physical Review X},
  volume={10},
  number={4},
  pages={041018},
  year={2020},
  publisher={APS},
  doi={https://doi.org/10.1103/PhysRevX.10.041018}
}

@article{buhrman2022quantum,
  title={Quantum majority vote},
  author={Buhrman, Harry and Linden, Noah and Man{\v{c}}inska, Laura and Montanaro, Ashley and Ozols, Maris},
  journal={arXiv preprint arXiv:2211.11729},
  year={2022},
  url={https://arxiv.org/abs/2211.11729}
}

@article{stembridge1987rational,
  title={Rational tableaux and the tensor algebra of $gl_n$},
  author={Stembridge, John R},
  journal={Journal of Combinatorial Theory, Series A},
  volume={46},
  number={1},
  pages={79--120},
  year={1987},
  publisher={Elsevier},
  doi={https://doi.org/10.1016/0097-3165(87)90077-X}
}

@article{kwon2008rational,
  title={Rational semistandard tableaux and character formula for the Lie superalgebra $\Hat{\mathfrak{gl}}_{\infty|\infty}$},
  author={Kwon, Jae-Hoon},
  journal={Advances in Mathematics},
  volume={217},
  number={2},
  pages={713--739},
  year={2008},
  publisher={Elsevier},
  url={https://doi.org/10.48550/arXiv.math/0605005}
}

@article{benkart1994tensor,
  title={Tensor product representations of general linear groups and their connections with Brauer algebras},
  author={Benkart, Georgia and Chakrabarti, Manish and Halverson, Thomas and Leduc, Robert and Lee, Chanyoung Y and Stroomer, Jeffrey},
  journal={Journal of Algebra},
  volume={166},
  number={3},
  pages={529--567},
  year={1994},
  publisher={Elsevier},
  doi={10.1006/jabr.1994.1166}
}

@article{doty2019canonical,
  title={Canonical idempotents of multiplicity-free families of algebras},
  author={Doty, Stephen and Lauve, Aaron and Seelinger, George H},
  journal={L’Enseignement Math{\'e}matique},
  volume={64},
  number={1},
  pages={23--63},
  year={2019},
  doi={10.4171/LEM/64-1/2-2}
}

@article{moore2006generic,
  title={Generic quantum Fourier transforms},
  author={Moore, Cristopher and Rockmore, Daniel and Russell, Alexander},
  journal={ACM Transactions on Algorithms (TALG)},
  volume={2},
  number={4},
  pages={707--723},
  year={2006},
  publisher={ACM New York, NY, USA},
  doi={10.48550/arXiv.quant-ph/0304064}
}

@inproceedings{beals1997quantum,
  title={Quantum computation of Fourier transforms over symmetric groups},
  author={Beals, Robert},
  booktitle={Proceedings of the twenty-ninth annual ACM symposium on Theory of computing},
  pages={48--53},
  year={1997},
  doi={10.1145/258533.258548}
}

@misc{wolf2012quantum,
  author        = {Wolf, Michael M},
  title         = {Quantum channels \& operations: Guided tour},
  month         = {February},
  year          = {2012},
  publisher={Lecture notes},
  url={http://www-m5.ma.tum.de/foswiki/pub/M5/Allgemeines/MichaelWolf/QChannelLecture.pdf}
}

@article{nechita2021geometrical,
  title={A geometrical description of the universal $1 \rightarrow 2$ asymmetric quantum cloning region},
  author={Nechita, Ion and Pellegrini, Cl{\'e}ment and Rochette, Denis},
  journal={Quantum Information Processing},
  volume={20},
  number={10},
  pages={1--18},
  year={2021},
  publisher={Springer},
  doi={10.1007/s11128-021-03258-y}
}

@article{kopszak2021multiport,
  title={Multiport based teleportation--transmission of a large amount of quantum information},
  author={Kopszak, Piotr and Mozrzymas, Marek and Studzi{\'n}ski, Micha{\l} and Horodecki, Micha{\l}},
  journal={Quantum},
  volume={5},
  pages={576},
  year={2021},
  publisher={Verein zur F{\"o}rderung des Open Access Publizierens in den Quantenwissenschaften},
  doi={10.22331/q-2021-11-11-576}
}

@article{mozrzymas2021optimal,
  title={Optimal multi-port-based teleportation schemes},
  author={Mozrzymas, Marek and Studzi{\'n}ski, Micha{\l} and Kopszak, Piotr},
  journal={Quantum},
  volume={5},
  pages={477},
  year={2021},
  publisher={Verein zur F{\"o}rderung des Open Access Publizierens in den Quantenwissenschaften},
  doi={10.22331/q-2021-06-17-477}
}

@article{gisin1999spin,
  title={Spin flips and quantum information for antiparallel spins},
  author={Gisin, Nicolas and Popescu, Sandu},
  journal={Physical Review Letters},
  volume={83},
  number={2},
  pages={432},
  year={1999},
  publisher={APS}
}

@article{louck1970recent,
  title={Recent progress toward a theory of tensor operators in the unitary groups},
  author={Louck, James D},
  journal={American Journal of Physics},
  volume={38},
  number={1},
  pages={3--42},
  year={1970},
  publisher={American Association of Physics Teachers}
}

@article{biedenharn1968pattern,
  title={A pattern calculus for tensor operators in the unitary groups},
  author={Biedenharn, LC and Louck, JD},
  journal={Communications in Mathematical Physics},
  volume={8},
  number={2},
  pages={89--131},
  year={1968},
  publisher={Springer}
}

@article{bartlett2003classical,
  title={Classical and quantum communication without a shared reference frame},
  author={Bartlett, Stephen D and Rudolph, Terry and Spekkens, Robert W},
  journal={Physical review letters},
  volume={91},
  number={2},
  pages={027901},
  year={2003},
  publisher={APS},
  url={https://doi.org/10.1103/PhysRevLett.91.027901}
}

@article{hayashi2002optimal,
  title={Optimal sequence of quantum measurements in the sense of Stein's lemma in quantum hypothesis testing},
  author={Hayashi, Masahito},
  journal={Journal of Physics A: Mathematical and General},
  volume={35},
  number={50},
  pages={10759},
  year={2002},
  publisher={IOP Publishing},
  url={https://iopscience.iop.org/article/10.1088/0305-4470/35/50/307}
}

@inproceedings{bardet2020characterization,
  title={Characterization of equivariant maps and application to entanglement detection},
  author={Bardet, Ivan and Collins, Beno{\^\i}t and Sapra, Gunjan},
  booktitle={Annales Henri Poincar{\'e}},
  volume={21},
  pages={3385--3406},
  year={2020},
  organization={Springer},
  url={https://link.springer.com/article/10.1007/s00023-020-00941-1}
}

@article{huber2022dimension,
  title={Dimension-free entanglement detection in multipartite Werner states},
  author={Huber, Felix and Klep, Igor and Magron, Victor and Vol{\v{c}}i{\v{c}}, Jurij},
  journal={Communications in Mathematical Physics},
  pages={1--20},
  year={2022},
  publisher={Springer},
  url={https://doi.org/10.1007/s00220-022-04485-9}
}

@article{laborde2021testing,
  title={Testing symmetry on quantum computers},
  author={LaBorde, Margarite L and Rethinasamy, Soorya and Wilde, Mark M},
  journal={arXiv preprint arXiv:2105.12758},
  year={2021}
}

@incollection{vilenkin1992representations,
  title={Representations of Lie groups and special functions},
  author={Klimyk, AU and Vilenkin, N Ya},
  booktitle={Volume 3: Classical and Quantum Groups and Special Functions},
  year={1992},
  publisher={Springer},
    doi={https://doi.org/10.1007/978-94-017-2881-2}
}

@article{bravyi2023quantum,
  title={Quantum complexity of the Kronecker coefficients},
  author={Bravyi, Sergey and Chowdhury, Anirban and Gosset, David and Havlicek, Vojtech and Zhu, Guanyu},
  journal={arXiv preprint arXiv:2302.11454},
  year={2023}
}

@article{ikenmeyer2023remark,
  title={A remark on the quantum complexity of the Kronecker coefficients},
  author={Ikenmeyer, Christian and Subramanian, Sathyawageeswar},
  journal={arXiv preprint arXiv:2307.02389},
  year={2023}
}

@article{van2018improvements,
  title={Improvements in quantum SDP-solving with applications},
  author={van Apeldoorn, Joran and Gily{\'e}n, Andr{\'a}s},
  journal={arXiv preprint arXiv:1804.05058},
  year={2018}
}

@inproceedings{arora2007combinatorial,
  title={A combinatorial, primal-dual approach to semidefinite programs},
  author={Arora, Sanjeev and Kale, Satyen},
  booktitle={Proceedings of the thirty-ninth annual ACM symposium on Theory of computing},
  pages={227--236},
  year={2007}
}

@article{sunderhauf2023block,
  title={Block-encoding structured matrices for data input in quantum computing},
  author={S{\"u}nderhauf, Christoph and Campbell, Earl and Camps, Joan},
  journal={arXiv preprint arXiv:2302.10949},
  year={2023}
}

@article{nguyen2022block,
  title={Block-encoding dense and full-rank kernels using hierarchical matrices: applications in quantum numerical linear algebra},
  author={Nguyen, Quynh T and Kiani, Bobak T and Lloyd, Seth},
  journal={Quantum},
  volume={6},
  pages={876},
  year={2022},
  publisher={Verein zur F{\"o}rderung des Open Access Publizierens in den Quantenwissenschaften}
}

@article{caro2022learning,
  title={Learning quantum processes and Hamiltonians via the Pauli transfer matrix},
  author={Caro, Matthias C},
  journal={arXiv preprint arXiv:2212.04471},
  year={2022}
}

@article{huang2022learning,
  title={Learning to predict arbitrary quantum processes},
  author={Huang, Hsin-Yuan and Chen, Sitan and Preskill, John},
  journal={arXiv preprint arXiv:2210.14894},
  year={2022}
}

@article{chakraborty2018power,
  title={The power of block-encoded matrix powers: improved regression techniques via faster Hamiltonian simulation},
  author={Chakraborty, Shantanav and Gily{\'e}n, Andr{\'a}s and Jeffery, Stacey},
  journal={arXiv preprint arXiv:1804.01973},
  year={2018}
}

@book{schur1927rationalen,
  title={{\"U}ber die rationalen Darstellungen der allgemeinen linearen Gruppe},
  author={Schur, Issai},
  year={1927},
    publisher= {Preuss. Akad. Wiss. Sitz, 3: 68–85}
}

@phdthesis{bulgakova2020some,
  title={Some aspects of representation theory of walled Brauer algebras},
  author={Bulgakova, Daria V},
  year={2020},
  school={Aix Marseille Universit{\'e}}
}

@article{grinko2023gelfand,
  title={Gelfand—Tsetlin basis for partially transposed permutations, with applications to quantum information},
  author={Grinko, Dmitry and Burchardt, Adam and Ozols, Maris},
  journal={preprint},
  year={2023}
}

@article{fan2014quantum,
  title={Quantum cloning machines and the applications},
  author={Fan, Heng and Wang, Yi-Nan and Jing, Li and Yue, Jie-Dong and Shi, Han-Duo and Zhang, Yong-Liang and Mu, Liang-Zhu},
  journal={Physics Reports},
  volume={544},
  number={3},
  pages={241--322},
  year={2014},
  publisher={Elsevier}
}

@article{chiribella2015universal,
  title={Universal superreplication of unitary gates},
  author={Chiribella, Giulio and Yang, Yuxiang and Huang, Cupjin},
  journal={Physical review letters},
  volume={114},
  number={12},
  pages={120504},
  year={2015},
  publisher={APS}
}

@article{brassard2002quantum,
  title={Quantum amplitude amplification and estimation},
  author={Brassard, Gilles and Hoyer, Peter and Mosca, Michele and Tapp, Alain},
  journal={Contemporary Mathematics},
  volume={305},
  pages={53--74},
  year={2002},
  publisher={Providence, RI; American Mathematical Society; 1999}
}

@article{ishizaka2008asymptotic,
  title={Asymptotic teleportation scheme as a universal programmable quantum processor},
  author={Ishizaka, Satoshi and Hiroshima, Tohya},
  journal={Physical review letters},
  volume={101},
  number={24},
  pages={240501},
  year={2008},
  publisher={APS}
}

@article{yoshida2022reversing,
  title={Reversing unknown qubit-unitary operation, deterministically and exactly},
  author={Yoshida, Satoshi and Soeda, Akihito and Murao, Mio},
  journal={arXiv preprint arXiv:2209.02907},
  year={2022}
}

@article{beigi2011simplified,
  title={Simplified instantaneous non-local quantum computation with applications to position-based cryptography},
  author={Beigi, Salman and K{\"o}nig, Robert},
  journal={New Journal of Physics},
  volume={13},
  number={9},
  pages={093036},
  year={2011},
  publisher={IOP Publishing}
}

@article{buhrman2016quantum,
  title={Quantum communication complexity advantage implies violation of a Bell inequality},
  author={Buhrman, Harry and Czekaj, {\L}ukasz and Grudka, Andrzej and Horodecki, Micha{\l} and Horodecki, Pawe{\l} and Markiewicz, Marcin and Speelman, Florian and Strelchuk, Sergii},
  journal={Proceedings of the National Academy of Sciences},
  volume={113},
  number={12},
  pages={3191--3196},
  year={2016},
  publisher={National Acad Sciences}
}

@article{nakata2021quantum,
  title={Quantum circuits for exact unitary t-designs and applications to higher-order randomized benchmarking},
  author={Nakata, Yoshifumi and Zhao, Da and Okuda, Takayuki and Bannai, Eiichi and Suzuki, Yasunari and Tamiya, Shiro and Heya, Kentaro and Yan, Zhiguang and Zuo, Kun and Tamate, Shuhei and others},
  journal={PRX Quantum},
  volume={2},
  number={3},
  pages={030339},
  year={2021},
  publisher={APS}
}

@inproceedings{childs2007weak,
  title={Weak Fourier-Schur sampling, the hidden subgroup problem, and the quantum collision problem},
  author={Childs, Andrew M and Harrow, Aram W and Wocjan, Pawe{\l}},
  booktitle={STACS 2007: 24th Annual Symposium on Theoretical Aspects of Computer Science, Aachen, Germany, February 22-24, 2007. Proceedings 24},
  pages={598--609},
  year={2007},
  organization={Springer}
}

@article{candu2011continuum,
  title={The continuum limit of $gl (M|N)$ spin chains},
  author={Candu, Constantin},
  journal={Journal of High Energy Physics},
  volume={2011},
  number={7},
  pages={1--41},
  year={2011},
  publisher={Springer}
}

@article{kimura2007branes,
  title={Branes, anti-branes and brauer algebras in gauge-gravity duality},
  author={Kimura, Yusuke and Ramgoolam, Sanjaye},
  journal={Journal of High Energy Physics},
  volume={2007},
  number={11},
  pages={078},
  year={2007},
  publisher={IOP Publishing}
}

@article{kimura2010free,
  title={Free particles from Brauer algebras in complex matrix models},
  author={Kimura, Yusuke and Ramgoolam, Sanjaye and Turton, David},
  journal={Journal of High Energy Physics},
  volume={2010},
  number={5},
  pages={1--56},
  year={2010},
  publisher={Springer}
}

@article{kimura2008enhanced,
  title={Enhanced symmetries of gauge theory and resolving the spectrum of local operators},
  author={Kimura, Yusuke and Ramgoolam, Sanjaye},
  journal={Physical Review D},
  volume={78},
  number={12},
  pages={126003},
  year={2008},
  publisher={APS}
}

@article{fei2023efficient,
  title={Efficient Quantum Algorithm for Port-Based Teleportation},
  author={Fei, Jiani and Timmerman, Sydney and Hayden, Patrick},
  journal={preprint},
  year={2023}
}

\appendix

\end{document}